\pdfoutput=1
\documentclass[final,1p,times,authoryear,nopreprintline]{elsarticle}

% macro to go in hyperref metadata later
%\def\titl{Verified Linear Algebra over Polynomial Matrices with Sub-Linear Communication}
\def\titl{Verification Protocols with Sub-Linear Communication for Polynomial Matrix Operations}

%%%%%%%%%%%%%%%%%%%%%%%%%%%% Packages %%%%%%%%%%%%%%%%%%%%%%%%%%%%%%%%%%
\PassOptionsToPackage{ruled,boxed,linesnumbered,vlined,noend}{algorithm2e}
\PassOptionsToPackage{nameinlink,capitalize,noabbrev}{cleveref}
\usepackage{%
  algorithm2e,
  amsmath,amsthm,bm,%
  multirow,newfloat,tabularx,%
  tikz,booktabs,%
  rotating,%
  tcolorbox,caption,%
  hyperref,cleveref
}

%%%%%%%%%%%%%%%%%%%%%%%%%%%% Setup %%%%%%%%%%%%%%%%%%%%%%%%%%%%%%%%%%%%%

%hyperref
\hypersetup{
  pdftitle={\titl},
  colorlinks,
  linkcolor=DarkBlue,
  citecolor=DarkGreen,
  urlcolor=DarkBlue,
  bookmarksnumbered,
  pdfauthor=author,
}

%algorithm2e
\DontPrintSemicolon
\SetKw{break}{break}

%natbib
\defcitealias{GuSaStVa12}{\emph{ibid.}}

%amsthm
\newtheorem{theorem}{Theorem}[section]
\newtheorem{fact}[theorem]{Fact}
\newtheorem{lemma}[theorem]{Lemma}
\newtheorem{corollary}[theorem]{Corollary}
\newtheorem{definition}[theorem]{Definition}
\numberwithin{equation}{section}

%other
\DeclareMathOperator{\lcm}{lcm}

%tabularx
\newcolumntype{Y}{>{\centering\arraybackslash}X}

%%%%%%%%%%%%%%%%%%%%%%%%%%%% Macros %%%%%%%%%%%%%%%%%%%%%%%%%%%%%%%%%%%%

% standard stuff
\newcommand{\ZZ}{\ensuremath{\mathbb{Z}}}
\newcommand{\NN}{\ensuremath{\mathbb{N}}}

\newcommand{\defeq}{=} % symbol for definition of a symbol
% --> if changing to something else than '=', please make sure we actually
%  use \defeq everywhere where needed...

% our names for a generic field or ring
\newcommand{\F}{\ensuremath{\mathsf{F}}}

% use this to notate a set

% ideally we would use macros for everything

% finite field of size [argument]

% linear algebra - please use these and we can change if needed
\newcommand{\matr}[1]{\ensuremath{\bm{#1}}}
\newcommand{\vect}[1]{\matr{#1}}
\newcommand{\transpose}[1]{\ensuremath{#1^\mathsf{T}}}
\newcommand{\zeromat}{\mathbf{0}}
\newcommand{\zerovect}{\mathbf{0}}
\DeclareMathOperator{\rank}{rank} % rank of a matrix
\DeclareMathOperator{\rowsp}{RowSp}
\DeclareMathOperator{\colsp}{ColSp}
\DeclareMathOperator{\satur}{Saturation}
\DeclareMathOperator{\denom}{denom}
\DeclareMathOperator{\numer}{numer}

\newcommand{\fieldsubset}{\textsf{S}\xspace}
% complexity
\newcommand{\bnd}[2]{\ensuremath{#1\mathopen{}\left(#2\right)\mathclose{}}}
\newcommand{\oh}[1]{\bnd{O}{#1}}
\newcommand{\softoh}[1]{\bnd{\widetilde{O}}{#1}}
\newcommand{\tsbnd}[2]{\ensuremath{#1\mathopen{}(#2)\mathclose{}}} % text style version of the above
\newcommand{\tsoh}[1]{\tsbnd{O}{#1}} % text style version of the above
\newcommand{\tssoftoh}[1]{\tsbnd{\widetilde{O}}{#1}} % text style version of the above
\newcommand{\ceil}[1]{\ensuremath{\left\lceil#1\right\rceil}}

% protocols
\DeclareFloatingEnvironment[placement={htbp}]{protocolfloat}

\captionsetup[protocolfloat]{labelfont=bf}
% \newcounter{cprotocols}
% \setcounter{cprotocols}{0}
\newcounter{cprotocolsteps}
\newenvironment{protocol}{%
  % \refstepcounter{cprotocols}
  \setcounter{cprotocolsteps}{0}
  \begin{protocolfloat}%
    \begin{tcolorbox}%
}{%
    \end{tcolorbox}%
  \end{protocolfloat}%
}
\crefname{protocolfloat}{Protocol}{Protocols}
\crefname{cprotocolsteps}{Step}{Steps}
\newlength{\protowid}
\setlength{\protowid}{20 mm}
\newcommand{\Public}[1]{\textbf{Public}: #1 \par}
\newcommand{\Certifies}[1]{\textbf{Certifies}: #1\par}
\newcommand\step{\\ \refstepcounter{cprotocolsteps} \thecprotocolsteps. }
\newenvironment{protocolsteps}{%
  \vspace{0.2cm}
  \tabularx{\textwidth}{@{}lX@{}Y@{}X@{}}
  & \multicolumn{1}{c}{\textbf{Prover}} &&
    \multicolumn{1}{c}{\textbf{Verifier}} \\
  \cline{2-2} \cline{4-4}
}{%
  \endtabularx%
}

\newcommand{\prover}[1]{& #1}
\newcommand{\verifier}[1]{&&& #1}
\newcommand{\toverifier}[1]{\\ && $\xrightarrow[\rule{\protowid}{0ex}]{\makebox{#1}}$}
\newcommand{\toprover}[1]  {\\ && $\xleftarrow[\rule{\protowid}{0ex}]{\makebox{#1}}$}
\newcommand{\subprotocol}[1]{%
  & \multicolumn{3}{c}{\tikz[baseline=-5pt]{\node[draw,dashed] {#1};} }}
\newcommand{\checkeq}{\stackrel{?}{=}}
\newcommand{\checkneq}{\stackrel{?}{\neq}}
\newcommand{\checklt}{\stackrel{?}{<}}
\newcommand{\checkle}{\stackrel{?}{\le}}

\newcommand{\checkge}{\stackrel{?}{\ge}}
\newcommand{\checkin}{\stackrel{?}{\in}}
\newcommand{\checksub}{\stackrel{?}{\subseteq}}
\newcommand{\randfrom}{\xleftarrow{\$}}
\newcommand{\lowrank}{\texttt{\textup{LOW\_RANK}}}
\newcommand{\nosol}{\texttt{\textup{NO\_SOLUTION}}}

% Hyperlinks to protocol names
\newcommand{\singular}{\hyperref[pro:singular]{\textsf{Singularity}}}
\newcommand{\nonsingular}{\hyperref[pro:nonsingular]{\textsf{NonSingularity}}}
\newcommand{\ranklb}{\hyperref[pro:ranklb]{\textsf{RankLowerBound}}}
\newcommand{\rankub}{\hyperref[pro:rankub]{\textsf{RankUpperBound}}}
\newcommand{\prorank}{\hyperref[pro:rank]{\textsf{Rank}}}
\newcommand{\prodet}{\hyperref[pro:det]{\textsf{Determinant}}}
\newcommand{\systemsolve}{\hyperref[pro:systemsolve]{\textsf{SystemSolve}}}
\newcommand{\matmul}{\hyperref[pro:matmul]{\textsf{MatMul}}}
\newcommand{\rowmemf}{\hyperref[pro:rowmemf]{\textsf{FullRankRowSpaceMembership}}}
\newcommand{\progcd}{\hyperref[pro:gcd]{\textsf{CoPrime}}}
\newcommand{\rowmem}{\hyperref[pro:rowmem]{\textsf{RowSpaceMembership}}}
\newcommand{\rowsubset}{\hyperref[pro:rowsubset]{\textsf{RowSpaceSubset}}}
\newcommand{\rowspaceeq}{\hyperref[pro:rowspaceeq]{\textsf{RowSpaceEquality}}}
\newcommand{\rowbasis}{\hyperref[pro:rowbasis]{\textsf{RowBasis}}}
\newcommand{\saturated}{\hyperref[pro:saturated]{\textsf{Saturated}}}
\newcommand{\satbasis}{\hyperref[pro:satbasis]{\textsf{SaturationBasis}}}
\newcommand{\ishermite}{\hyperref[pro:hermite]{\textsf{HermiteForm}}}
\newcommand{\ispopov}{\hyperref[pro:spopov]{\textsf{ShiftedPopovForm}}}
\newcommand{\completable}{\hyperref[pro:completable]{\textsf{UnimodularCompletable}}}
\newcommand{\kernelbasis}{\hyperref[pro:kernel]{\textsf{KernelBasis}}}

\newcommand{\field}{\F} % base field
\newcommand{\var}{x} % we use small x for the variable
\newcommand{\polRing}{\field[x]} % univariate polynomial ring
\newcommand{\fracRing}{\field(x)} % univariate polynomial ring
\newcommand{\matRing}[2]{\field^{#1 \times #2}} % matrices over the base field
\newcommand{\pmatRing}[2]{\polRing^{#1 \times #2}} % univariate polynomial matrices
 % univariate matrices over ZZ
\newcommand{\idMat}[1]{\matr{I}_{#1}} % identity matrix of size mxm
\newcommand{\shift}{\vect{s}} % shift : tuple of integers
\newcommand{\hermite}{\matr{B}} % Hermite form notation
\newcommand{\popov}{\matr{B}} % Popov form notation
 % positive integers
\newcommand{\vecspace}{\mathsf{V}} % vector space used in section 5.1
\newcommand{\dd}{f} % denominators in Section 4.2

\begin{document}

\begin{frontmatter}

  \title{\titl}

\author[1]{David Lucas}
\ead{David.Lucas@univ-grenoble-alpes.fr}
%\ead[url]{http://www-ljk.imag.fr/membres/David.Lucas/}

\author[2]{Vincent Neiger}
\ead{vincent.neiger@unilim.fr}

\author[1]{Cl\'ement Pernet}
\ead{Clement.Pernet@univ-grenoble-alpes.fr}
%\ead[url]{http://www-ljk.imag.fr/membres/Clement.Pernet/}

\author[3]{Daniel S. Roche\corref{cor1}}
\ead{roche@usna.edu}

\author[4]{Johan~Rosenkilde}
\ead{jsrn@dtu.dk}

\address[1]{Univ. Grenoble Alpes, CNRS, Grenoble INP, LJK, 38000 Grenoble, France}
\address[2]{Univ. Limoges, CNRS, XLIM, UMR 7252, F-87000 Limoges, France}
\address[3]{United States Naval Academy, Annapolis, Maryland, U.S.A.}
\address[4]{Technical University of Denmark, Kgs.\ Lyngby, Denmark}

\cortext[cor1]{Corresponding author}
%\maketitle

\begin{abstract}
  We design and analyze new protocols to verify the correctness of various
  computations on matrices over the ring $\F[x]$ of univariate polynomials over
  a field $\F$. For the sake of efficiency, and because many of the properties
  we verify are specific to matrices over a principal ideal domain, we cannot
  simply rely on previously-developed linear algebra protocols for matrices
  over a field.  Our protocols are \emph{interactive}, often randomized, and
  feature a constant number of rounds of communication between the Prover and
  Verifier. We seek to minimize the communication cost so that the amount of
  data sent during the protocol is significantly smaller than the size of the
  result being verified, which can be useful when combining protocols or in
  some multi-party settings.  The main tools we use are reductions to existing
  linear algebra verification protocols and a new protocol to verify that a
  given vector is in the $\F[x]$-row space of a given matrix.
\end{abstract}
\end{frontmatter}
% We design and analyze new protocols to verify the correctness of various computations on matrices over the ring F[x] of univariate polynomials over a field F. For the sake of efficiency, and because many of the properties we verify are specific to matrices over a principal ideal domain, we cannot simply rely on previously-developed linear algebra protocols for matrices over a field. Our protocols are interactive, often randomized, and feature a constant number of rounds of communication between the Prover and Verifier. We seek to minimize the communication cost so that the amount of data sent during the protocol is significantly smaller than the size of the result being verified, which can be useful when combining protocols or in some multi-party settings. The main tools we use are reductions to existing linear algebra verification protocols and a new protocol to verify that a given vector is in the F[x]-row space of a given matrix.

\section{Introduction}

Increasingly, users or institutions with large computational needs are
relying on untrusted sources of computational results, which could be remote
(``cloud'') servers, unreliable hardware, or even just Monte Carlo randomized
algorithms. The rising area of \emph{verifiable computing} seeks to
maintain the benefits in cost or speed of using such untrusted sources,
without sacrificing accuracy. Generally speaking, the goal is to
develop \emph{protocols} certifying the correctness of some result, which
can be verified much more efficiently than re-computing the result
itself.

\subsection{Verification protocols}
In this paper, we
propose new \emph{verification protocols} for computations performed on univariate
polynomial matrices; we refer to \citep{Dum18, dk14, kns11, klyz11} for definitions
related to such protocols.
Generically, we consider protocols where a
Prover performs computations and provides additional data structures to or
exchanges with a Verifier, who will use these to check the validity of a result,
at a lower cost than by recomputing it.
From the viewpoint of the theoretical computer science community, this corresponds to instances of
interactive proof protocols \citep{GMR89,gkr08}. In the context of certified compter algebra, the
term interactive proof of work certificate was introduced by \citet{Dum18}. In this paper, we will
simply refer to these protocols as verification protocols, as they do not 
necessarily verify the work of the  computation  but only the result (which can be obtained by any means).

The general flow of a verification protocol is as follows.
\begin{enumerate}
\item At the beginning, the Prover and the Verifier share the knowledge of the result of a
  computation, which they have to verify.
\item The Verifier sends  a Challenge to the Prover, usually consisting of
some uniformly sampled random values.
\item The Prover replies with a Response, used
by the Verifier to ensure the validity of the commitment.
\item In some cases,
several additional rounds of Challenge/Response might be necessary for the Verifier to
accept an answer.
\end{enumerate}

These protocols can be simulated non-interactively in a single round
following the heuristic derandomization of \citet{fiatshamir86}: random values
produced by the Verifier are replaced by cryptographic hashes of the input and
previous messages, and the Prover publishes once both the Commitment and
Response to the derandomized Challenge.

There are several metrics to assess the efficiency of a verification protocol, namely
\begin{description}
  \item [Communication cost:] the volume of data exchanged throughout the
    protocol;
  \item [Verifier cost:] the worst-case number of arithmetic operations
    performed by the Verifier in the protocol, no matter what data was
    sent by the Prover;
  \item [Prover cost:] the number of arithmetic operations performed by
    an \emph{honest Prover} that is trying to prove a statement which is
    actually true without fooling the Verifier.
\end{description}
Note that some data, namely the input and output to the original
problem, are considered as \emph{public data} and do not count towards
the communication cost. This is to remove those parts which are somehow
inherent in the problem itself, as well as to separate the functions of
computing and verifying a result, which can be quite useful when
verification protocols are combined, as we will see.

Such protocols are said complete if the probability that a true statement is
rejected by a Verifier can be made arbitrarily small; they are said perfectly
complete if true statements are never rejected. For simplicity's sake, as all
the protocols in this paper are perfectly complete, we will sometimes just
describe them as complete. Similarly, a protocol is sound if the probability
that a false statement is accepted by the Verifier can be made arbitrarily
small. Note that all our protocols are probabilistically sound, which means
there is a small probability the
Verifier may be tricked into accepting a wrong answer. This is not an issue, as in
practice this probability can be reduced by simply repeating the protocol with
new randomness, or by computing over a larger field.
As our protocols are perfectly complete, any single failure
means that the statement is false and/or the Prover did something wrong;
the Verifier or unlucky randomness are never to blame.

Several approaches to verified computation exist: generic approaches based on
protocol check circuits \citep{gkr08} or on homomorphic
encryption \citep{cfhkknpz15}; and approaches working for any algorithm where the
Prover uses specific operations, such as that of \citet[Section 5]{kns11} which certifies
any protocol where matrix multiplications are
performed. Another approach consists in designing problem-specific
verification protocols, as was done for instance in \citep{Fre79, kns11, dlp17}
on dense linear algebra and
\citep{dk14, dktv16} on sparse linear algebra.

\subsection{Polynomial matrices}

This paper deals with computations on matrices whose entries are univariate polynomials.
While certification for matrices over fields and over integer rings have been
studied over the past twenty years, there are only few results on polynomial
matrices \citep{Dum18,gn18}.

A \emph{polynomial matrix} is a matrix
\(\matr{M}\in\pmatRing{m}{n}\) whose entries are univariate
polynomials over a field \(\field\). There is an isomorphism with \emph{matrix
polynomials} (univariate polynomials with matrices as
coefficients) which we will sometimes use implicitly, such as when
considering the evaluation \(\matr{M}(\alpha) \in \matRing{m}{n}\) of \(\matr{M}\)
at a point \(\alpha\in\field\).

Computations with polynomial matrices are of central importance in
computer algebra and symbolic computation, and many efficient algorithms
for polynomial matrix computations have been developed.

One general approach for computing with polynomial matrices is based on
evaluation and interpolation. The basic idea is to first evaluate the
polynomial matrix, say \(\matr{M}\in\pmatRing{n}{n}\) at a set of points
\(\alpha_1,\alpha_2,\ldots\in\field\) in the ground field, then to separately
perform the desired computation on each \(\matr{M}(\alpha_i)\) over
\(\matRing{n}{n}\), and finally reconstruct the entries of the result using
fast polynomial interpolation. This kind of approach works well for
operations such as matrix multiplication \citep[Section 5.4]{BosSch05} or
determinant computation. These computations essentially concern the
\emph{vector space} in the sense that \(\matr{M}\) may as well be seen as a
matrix over the fractions \(\fracRing\) without impact on the results of the
computations.

Other computational problems with polynomial matrices intrinsically concern
\emph{\(\polRing\)-modules} and thus cannot merely rely on evaluation and
interpolation. Classic and important such examples are that of computing normal
forms such as the Popov form and the Hermite form
\citep{Popov72,Villard96,NRS18} and that of computing modules of relations such
as approximant bases \citep{BecLab94,GiJeVi03,NeiVu17}. The algorithms in this
case must preserve the module structure attached to the matrix and thus deal
with the actual polynomials in some way; in particular, an algorithm which
works only with evaluations of the matrix at points \(\alpha\in\field\) is oblivious
to this module structure.

\subsection{Our contributions}

In this paper, after giving some preliminary material in~\cref{sec:prelim}, we propose verification
protocols
for classical properties of polynomial matrices --- singularity, rank, determinant and matrix
product --- with sub-linear communication cost with respect to the input size (\cref{sec:vspace}).
Those protocols are based on evaluating considered matrices at random points, which allows us
to reduce the communication space and to use existing verification protocols for matrices over fields.
Then, in~\cref{sec:rowmem} we give the main result of this paper, which is 
certifying that a given polynomial row vector is in the row space of a given
polynomial matrix, which can either have full rank or be rank-deficient.
\cref{sec:rowspace_normalforms} shows how to use this result to certify that for two given
polynomial matrices $\matr{A}$ and $\matr{B}$, the row space of $\matr{A}$ is contained
in the row space of $\matr{B}$, and then gives verification protocols for some classical
normal forms of polynomial matrices. In \cref{sec:saturation}, we present verification protocols related to 
saturations and kernels of polynomial matrices. Finally, \cref{sec:perspectives}
gives a conclusion and comments on a few perspectives.

A summary of our contributions is given in~\cref{tab:contributions}, based on
the following notations: the input matrix has rank $r$ and size $n \times n$ if
it is square or $m \times n$ if it can be rectangular; if there are several input
matrices, then \(r\) stands for the maximum of their ranks, \(m\) for the
maximum of their row dimensions, and \(n\) for the maximum of their column
dimensions. Where appropriate, \(r\) is the maximum of the actual ranks of the matrices
and the claimed rank by the Prover. We write $d$ for the maximum degree of
any input matrix or vector.

The Prover and Verifier costs are in arithmetic operations over the base
field \(\field\).
We use \(\softoh{\cdot}\) for asymptotic cost
bounds with hidden logarithmic factors, and \(\omega \le 3\) is the exponent
of matrix multiplication, so that the multiplication of two \(n\times n\) matrices over
\(\field\) uses \(\tsoh{n^\omega}\) operations in \(\field\); see
\Cref{sec:prelim} for more details and references.

The last column indicates the smallest
size of the ground field \(\field\)
needed to ensure both perfect completeness of the protocol and soundness with
probability at least \(\tfrac{1}{2}\).
If this lower bound is not met, an extension
field may be agreed on in advance by the Prover and Verifier, for a
(logarithmic) increase in arithmetic and communication costs.
For all protocols, an arbitrary low probability $p$ of failure can be achieved
by simply iterating the protocol at most \(\ceil{\log_2(1/p)}\) times.

\begin{table}[htbp]\small
\begin{center}
  \begin{sideways}
    \begin{tabular}{llllll}
        \toprule
        & \multicolumn{2}{l}{Prover}
        & \multirow{2}{*}{Comm.} & Verifier & \multirow{2}{*}{Minimum \(\#\field\)} \\
        \cmidrule{2-3}
        &  Deter. & Cost & & Cost & \\
        \midrule
      \singular{} &  Yes & $\tsoh{nr^{\omega-1} + n^2d}$ & $\tsoh{n}$ & $\tsoh{n^2 d}$ & $2nd$ \\[0.05cm]
      \nonsingular{}  & Yes  & $\tssoftoh{n^\omega d}$ & $\tsoh{n}$ & $\tsoh{n^2 d}$ & $nd+1$ \\[0.05cm]
      \ranklb{}  & No & $\tsoh{mnr^{\omega-2} + mnd}$ & $\tsoh{r}$ & $\tsoh{r^2 d}$ & $rd+1$ \\[0.05cm]
      \rankub{}  & Yes & $\tsoh{mnr^{\omega-2}+ mnd}$ & $\tsoh{n}$ & $\tsoh{mnd}$ & $2rd+2$ \\[0.05cm]
      \prorank{} & No & $\tsoh{mnr^{\omega-2} + mnd}$ & $\tsoh{n}$ & $\tsoh{mnd}$ & $2rd+2$ \\[0.05cm]
      \prodet{}  & Yes & $\tsoh{n^{2}d + n^\omega}$ & $\tsoh{n}$ & $\tsoh{n^2 d}$ & $2nd+2$ \\[0.05cm]
      \systemsolve & N/A & N/A & 0 & $\tsoh{n^2 d}$ & $4d$ \\[0.05cm]
      \matmul{} & N/A & N/A & 0 & $\tsoh{n^2d}$ & $4d+2$ \\[0.05cm]
      \rowmemf{}  & Yes & $\tssoftoh{nm^{\omega-1}d}$ & $\tsoh{md}$ & $\tsoh{mnd}$ & $6md+2d+2$\\[0.05cm]
      \rowmem{}  & No & $\tssoftoh{mnr^{\omega -2}d}$ & $\tssoftoh{md+n}$ & $\tssoftoh{mnd}$ & $6md+2d+2$ \\[0.05cm]
      \rowsubset{}  & No &$\tssoftoh{mnr^{\omega -2}d}$ &$\tssoftoh{md+n}$ & $\tssoftoh{mnd}$&$8rd+2d+4$\\[0.05cm]
      \rowspaceeq{}  & No &$\tssoftoh{mnr^{\omega -2}d}$ &$\tssoftoh{md+n}$ &$\tssoftoh{mnd}$&$8rd+2d+4$ \\[0.05cm]
      \rowbasis{}  & No & $\tssoftoh{mnr^{\omega -2}d}$&$\tssoftoh{md+n}$ & $\tssoftoh{mnd}$&$8rd+2d+6$\\[0.05cm]
      \ishermite   & No & $\tssoftoh{mnr^{\omega-2}d}$ & $\tssoftoh{md+n}$ & $\tssoftoh{mnd}$ & $8rd+2d+4$ \\[0.05cm]
      \ispopov     & No & $\tssoftoh{mnr^{\omega-2}d}$ & $\tssoftoh{md+n}$ & $\tssoftoh{mnd}$ & $8rd+2d+4$ \\[0.05cm]
      \saturated{} (\(m\le n\))  & No & $\tssoftoh{n m^{\omega-1} d}$ & $\tssoftoh{n d}$ & $\tssoftoh{m n d}$ & $8md+4$ \\[0.05cm]
      \saturated{} (\(m > n\)) & No & $\tssoftoh{m n^{\omega-1} d}$ & $\tssoftoh{m d}$ & $\tssoftoh{m n d}$ & $8n d+4$ \\[0.05cm]
      \satbasis  & No & $ \tssoftoh{m n r^{\omega-2} + m n d + n r^{\omega-1} d}$ & $\tssoftoh{n d}$ & $\tssoftoh{m n d}$ & $8nd + 2d +4$ \\[0.05cm]
      \completable  & No & $ \tssoftoh{n m^{\omega-1} d}$ & $\tssoftoh{n d}$ & $\tssoftoh{m n d}$ & $8md+4$ \\[0.05cm]
        \kernelbasis  & No & $ \tssoftoh{(m+n)m^{\omega-1}d}$ & $\tssoftoh{md}$ & $\tssoftoh{m(m+n)d}$ & $8md+4$ \\
        \bottomrule
    \end{tabular}
\end{sideways}
\end{center}
    \caption{Summary of the contributions. The first column states whether the
      Prover's algorithm is deterministic or not. The costs are given in number
      of arithmetic operations over the base field and the communication is in
      number of elements in the base field \(\field\). The last column reports
    the minimum size size of \(\field\) needed to ensure perfect completeness
  and soundness with probability at least \(\tfrac{1}{2}\). }
    \label{tab:contributions}
\end{table}

%%%%%%%%%%%%%%%%%%%%%%%%%%%%%%%%%%%%%%%%%%%%%%%%%%%%%%%%%%%%%%%%%%%%%%%%%%%%%%%%%%
\section{Preliminaries}\label{sec:prelim}

\subsection{Notation and assumptions}

\paragraph{Fields and rings}
We use $\F$ to indicate an arbitrary field, $\F[x]$ for the ring of polynomials
in one variable $x$ with coefficients in $\F$, and $\F(x)$ for the field of
rational fractions, i.e., the fraction field of $\F[x]$. The ring of \(m\times
n\) matrices, for example over \(\polRing\), is denoted by \(\pmatRing{m}{n}\).

\paragraph{Asymptotic complexity bounds}
Throughout the paper, the cost bounds are worst-case deterministic unless otherwise indicated.
We use the ``soft-oh'' notation \(\softoh{\cdot}\) to give asymptotic bounds
hiding logarithmic factors. Precisely, for two cost functions
\(f,g\), having \(f\in\softoh{g}\) means that \(f\in\oh{g\log(g)^c}\)
for some constant \(c > 0\).

We write \(\omega\) for the exponent of matrix multiplication over
\field, so that any two matrices \(\matr{A},\matr{B}\in\matRing{n}{n}\)
can be multiplied using \(\oh{n^\omega}\) field operations;
we have \(2\le \omega\le
3\) and one may take \(\omega < 2.373\) \citep{CopWin90,LeGall14}.

\citet{CanKal91} have shown that multiplying two univariate polynomials of
degree \(\le d\) over any algebra uses \(\softoh{d}\) additions,
multiplications, and divisions in that algebra. In particular, multiplying two
matrices in \(\pmatRing{n}{n}\) of degree at most \(d\) uses \(\softoh{n^\omega
d}\) operations in \(\field\).

\paragraph{Sampling set}

In our protocols, \(\fieldsubset\) is always a finite subset of the base field
\(\field\) which the Verifier uses to sample field elements uniformly and independently
at random.
We denote by
\[
  \alpha \randfrom \fieldsubset
  \quad\text{and}\quad
  \vect{v} \randfrom \fieldsubset^{n \times 1} 
\]
respectively the actions of drawing a field element uniformly at random from
\(\fieldsubset\) and of drawing a vector of \(n\) field elements uniformly
and independently at random from \(\fieldsubset\).

To ensure that they are perfectly complete and probabilistically sound,
our protocols require lower bounds
on the cardinality \(\#\fieldsubset\) of this subset, and therefore of
the field \(\field\) itself. Generally speaking, choosing
\(\fieldsubset\) larger will increase the soundness probability, at the
cost of higher randomness complexity. In particular, one may use
\(\fieldsubset=\field\) if the field \(\field\) is finite and
sufficiently large. If \(\#\field\) is too small,
then one may use a field extension, causing up to a logarithmic factor
increase in the
Prover/Verifier/communication costs.

\paragraph{Protocols}

Many of our analyses and protocols use the notation
\[
  d_{\matr{A}} = \max(1,\deg(\matr{A}))
  \quad\text{and}\quad
  r_{\matr{A}} = \rank(\matr{A})
\]
for any polynomial matrix \(\matr{A}\) that appears in a given protocol.

Following \citep{dk14,dktv16,dlp17} we use the notation
$x\checkeq y$ as a placeholder for
\begin{center}
  \textbf{If} \(x \neq y\) \textbf{then} abort and report failure
\end{center}
to improve the brevity and readability of the protocols. Similarly, we use $x
\checklt y, x\checkge y$, $U\checksub V$, etc.~to check inequalities and set
inclusion.

In all our protocols, we implicitly assume that the Verifier always checks whether the data
received has the correct type and size, and aborts immediately if such
checks fail. For instance, in \Cref{pro:ranklb}, the Verifier implicitly
checks that the
subsets $I$ and $J$ are actually subsets of \(\rho\) distinct integers
each, as specified by the protocol, but explicitly checks they are subsets of
\(\{1,\dots, m\}\) and \(\{1,\dots, n\}\) respectively.

\paragraph{Threat model}

As stated before, our protocols are perfectly complete and
probabilistically sound, meaning that (1) if the statement is true and
the Prover is honest, then an honest Verifier will always accept the
statement; and (2) if the statement is not true, then an honest Verifier
will reject the statement with probability at least \(\frac{1}{2}\).

In practice, this chance of incorrectly accepting a false statement can
be made arbitrarily low by iterating the protocol; namely, the
probability of wrongly accepting a false statement is less than
\(2^{-\lambda}\) if the protocol succeeds in every one of \(\lambda\)
iterations. Note that these iterations can always be performed
in parallel, so that there is a linear scaling in the communication,
Verifier, and Prover costs, but not in the number of rounds in the
protocols.

The probabilistic soundness holds whenever the Verifier is
honest; the Prover may be malicious and deviate from the stated
protocol. Our security depends only on the random challenges chosen by
the Verifier and various algebraic properties; we do not make any
cryptographic assumptions or impose
limits on the computational power of the Prover.

The cost of an honest Verifier depends only on the public information
--- that is, the parameters of the statement being verified --- and even a
malicious Prover cannot cause the Verifier to do more work than this.
However, there is an issue here in the case of very large (or
even infinite) fields \(\field\), where the Prover may essentially
execute a denial of service attack by sending arbitrary large field
elements in the protocol. Because our analysis is generic in terms of field
operations, it is not able to capture this weakness, and we do not
attempt to address it.

\paragraph{Rational fractions}
For a rational fraction $f\in\F(x)$, define its \emph{denominator} $\denom(f)$ to be the unique
monic polynomial $g\in\F[x]$ of minimal degree such that $gf \in \F[x]$.
Correspondingly, define its \emph{numerator} $\numer(f) \defeq f\cdot\denom(f)$.
Note that $\denom(a)=1$ if and only if $a\in\polRing$.
More generally, for a matrix of rational fractions
$\matr{A}\in\F(x)^{m\times n}$, define $\denom(\matr{A})$ to be the unique monic
polynomial $g\in\F[x]$ of minimal degree such that
$g\matr{A} \in \F[x]^{m\times n}$, and again write this polynomial
matrix $g\matr{A}$ as
$\numer(\matr{A})$. Note that we have the identity
$\denom(\matr{A}) = \lcm_{i,j}(\denom(\matr{A}_{i,j}))$.

\paragraph{Row space, kernel, and row basis}
For a given matrix \(\matr{A} \in \pmatRing{m}{n}\), two basic
sets associated to it are its row space
\[
  \rowsp_{\polRing}(\matr{A}) \defeq \{\vect{p}\matr{A},\, \vect{p}\in\pmatRing{1}{m}\},
\]
and its left kernel
\[
  \{\vect{p} \in \pmatRing{1}{m} \mid \vect{p}\matr{A} = \zeromat\}.
\]
Accordingly, a \emph{row basis} of \(\matr{A}\) is a matrix in
\(\pmatRing{r}{n}\) whose rows form a basis of the former set, where \(r\) is
the rank of \(\matr{A}\), while a \emph{left kernel basis} of \(\matr{A}\) is a
matrix in \(\pmatRing{(m-r)}{n}\) whose rows form a basis of the latter set.
We use similar notions and notations for column spaces and column bases, and
for right kernels and right kernel bases. We will also often consider
the \emph{rational} row space or \(\fracRing\)-row space of \(\matr{A}\), denoted by
\(\rowsp_{\fracRing}(\matr{A})\), which is an \(\fracRing\)-vector space.

Matrices which preserve the row space under left-multiplication, that is,
\(\matr{U} \in \pmatRing{m}{m}\) such that the \(\polRing\)-row space of
\(\matr{U}\matr{A}\) is the same as that of \(\matr{A}\), are said to be
\emph{unimodular}. They are characterized by the fact that their determinant is
a nonzero constant; or equivalently that they have an inverse (with entries in
\(\polRing\)).

\subsection{Some probability bounds.}

Many of our protocols rely on the fact that when picking an element uniformly
at random from a sufficiently large finite subset of the field, this element is
unlikely to be a root of some given polynomial. This was stated formally in
\citep{Schwartz80,Zippel79,DeMilloLipton78} and is often referred to as the
\emph{DeMillo-Lipton-Schwartz-Zippel lemma}.

Specifically, it states that for any nonzero $k$-variate polynomial
\(f(x_1,\ldots,x_k)\) with coefficients in a field \(\field\), and any finite subset
\(\fieldsubset\subseteq\field\), if an evaluation point
\((\alpha_1,\ldots,\alpha_k)\in\field^k\) has entries chosen at
random uniformly and independently from \(\fieldsubset\), then the probability that
\(f(\alpha_1,\ldots,\alpha_k)=0\) is at most \(d/\#\fieldsubset\), where
\(d\) is the total degree of $f$.

The following consequence is a standard extension of the soundness proof of
Freivalds' algorithm \citeyearpar{Fre79}.
\begin{lemma}\label{lem:randinnerprod}
  Let $\matr{A}\in\F^{m\times n}$ be a matrix with at least
  one nonzero entry
  and let $\fieldsubset\subseteq\F$ be a finite subset.
  For a vector of scalars $\vect{w}\in\fieldsubset^{n\times 1}$ chosen
  uniformly at random, we have
  $\Pr[\matr{A}\vect{w}=\zerovect] \le 1/\#\fieldsubset$.
\end{lemma}
\begin{proof}
  Consider each of the $n$ entries of $\vect{w}$ as an indeterminate.
  Because $\matr{A}$ is not zero, $\matr{A}\vect{w}$ has at least one
  nonzero entry, which is a nonzero polynomial in $n$
  variables with total degree 1. Then a direct application of the
  DeMillo-Lipton-Schwartz-Zippel lemma gives the stated result.
\end{proof}

The next lemma will also be frequently used when analyzing protocols: it bounds
the probability of picking a ``bad'' evaluation point.

\begin{lemma}\label{lem:rankeval}
  Let $\matr{A}\in\F[x]^{m\times n}$ with rank at least $r$.
  For any finite subset $\fieldsubset\subseteq\F$ and
  for a point $\alpha\in\fieldsubset$ chosen uniformly at random, the probability
  that $\rank(\matr{A}(\alpha)) < r$ is at most
  $r\deg(\matr{A})/\#\fieldsubset$.
\end{lemma}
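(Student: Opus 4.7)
The plan is to reduce to the Schwartz-Zippel lemma by exhibiting a single nonzero polynomial of controlled degree whose non-vanishing at $\alpha$ guarantees that $\rank(\matr{A}(\alpha)) \ge r$.

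First I would use the hypothesis $\rank(\matr{A}) \ge r$ over $\fracRing$ to extract an $r \times r$ submatrix $\matr{A}'$ of $\matr{A}$ whose determinant, viewed as an element of $\polRing$, is nonzero. Let $f(x) = \det(\matr{A}') \in \polRing$; by the Leibniz formula $f$ is a sum of products of $r$ entries of $\matr{A}$, each of degree at most $\deg(\matr{A})$, so $\deg(f) \le r\deg(\matr{A})$.

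Next I would observe that evaluation commutes with taking submatrices and with the determinant, so $f(\alpha) = \det(\matr{A}'(\alpha))$. Whenever $f(\alpha) \ne 0$, the evaluated submatrix $\matr{A}'(\alpha) \in \matRing{r}{r}$ is nonsingular, which forces $\rank(\matr{A}(\alpha)) \ge r$. Hence
\[
  \Pr[\rank(\matr{A}(\alpha)) < r] \le \Pr[f(\alpha) = 0].
\]

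Finally, since $f$ is a nonzero univariate polynomial of degree at most $r\deg(\matr{A})$, the Schwartz-Zippel lemma (in its univariate form, which is just the fact that a nonzero polynomial has at most as many roots as its degree) bounds $\Pr[f(\alpha) = 0]$ by $r\deg(\matr{A})/\#\fieldsubset$, giving the claim. No step here is truly delicate; the only mild subtlety is making sure one selects the submatrix witnessing the rank over $\polRing$ (equivalently over $\fracRing$) rather than over the residue field at $\alpha$, but this is immediate from the definition of polynomial matrix rank.
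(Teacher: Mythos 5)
Your proof is correct and follows essentially the same route as the paper's: both arguments fix a nonzero $r\times r$ minor of $\matr{A}$ (which exists since $\rank(\matr{A})\ge r$), bound its degree by $r\deg(\matr{A})$, and observe that a rank drop at $\alpha$ forces $\alpha$ to be a root of this minor, so the univariate root bound (Schwartz--Zippel) gives the stated probability. Nothing is missing.
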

\begin{proof}
  Any $r\times r$ minor of $\matr{A}$ has degree at most
  $r\deg(\matr{A})$, and at least one must be nonzero
  since $\rank(\matr{A})\ge r$.
  On the other hand, $\rank(\matr{A}(\alpha)) < r$ if and only if $\alpha$ is a root of
  all such minors.
\end{proof}

%%%%%%%%%%%%%%%%%%%%%%%%%%%%%%%%%%%%%%%%%%%%%%%%%%%%%%%%%%%%%%%%%%%%%%%%%%%%%%%%%%%%%
\section{Linear algebra operations}\label{sec:vspace}

In this section, we give some verification protocols for the computation of classical linear algebra
properties on polynomial matrices: singularity, rank, and determinant of a matrix,
as well as system solving and matrix multiplication.

The protocols we present here all rely on the same
general idea, which consists in picking a random point and evaluating the input
polynomial matrix (or matrices) at that point.
This allows us to achieve sub-linear communication space.
Note that this technique has been used before by \citet{kns11} to certify
the same properties for integer matrices: in that setup, computations were performed modulo
some prime number, while, in our context, this translates into evaluating polynomials at some element of the base field.

In several of our protocols, the Prover has to solve a linear system over the base field.
For a linear system whose matrix is in \(\matRing{m}{n}\) and has rank $r$,
this can be done in $\tsoh{mnr^{\omega-2}}$ operations in \(\field\)
\citep[see][Algorithm 6]{JPS13}.

\subsection{Singularity and nonsingularity}

We start by certifying the singularity of a matrix. Here, the Verifier picks a random
evaluation point and sends it to the Prover, who evaluates the input matrix at that point
and sends back a nontrivial kernel vector, which the Prover will always
be able to compute since a singular
polynomial matrix is still singular when evaluated at any point. Then, all the Verifier needs
to do is to check that the vector received is indeed a kernel vector. Note that the evaluation
trick here is really what allows us to have a sub-linear --- with respect to the
input size --- communication cost,
as the answer the Prover provides to the challenge is a vector over the base field, and not over the polynomials.

\begin{protocol}
    \caption{\textsf{Singularity}}
    \label{pro:singular}
    \Public{$\matr{A}\in\pmatRing{n}{n}$}
    \Certifies{$\matr{A}$ is singular}
    \begin{protocolsteps}
        \step
            \verifier{$\alpha \randfrom \fieldsubset$}
            \toprover{$\alpha$}
        \step
            \prover{\begin{tabular}{@{}l@{}}
              Find $\vect{v} \in \matRing{1}{n}\setminus\{\zerovect\}$\\
              s.t. $\vect{v}\matr{A}(\alpha) = \zerovect$
            \end{tabular}}
            \toverifier{$\vect{v}$}
        \step
            \verifier{
                $\begin{array}{@{}l@{}}
                    \vect{v} \checkneq \zerovect \\
                    \vect{v} \matr{A}(\alpha) \checkeq \zerovect
                \end{array}$
            }
    \end{protocolsteps}
\end{protocol}

In the next theorem, and for the remainder of the section, for
convenience we write \(d\defeq\max(1,\deg(\matr{A}))\).

\begin{theorem}
    \Cref{pro:singular} is a complete and probabilistically sound interactive
    protocol which requires $\tsoh{n}$ communication and has Verifier cost $\tsoh{n^2 d}$.
    The probability that the Verifier incorrectly accepts is at most $nd/\#\fieldsubset$.
    If $\matr{A}$ is singular, there is an algorithm for the Prover with cost
    $\tsoh{n^2d + nr^{\omega-1}}$.
\end{theorem}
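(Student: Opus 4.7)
The plan is to establish the four claims---completeness, soundness, Verifier cost, and Prover cost---in turn, all of which reduce to routine calculations once the key observation is in place.

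For \emph{completeness}, the observation is that if $\matr{A}$ is singular over $\polRing$ then $\det(\matr{A}) = 0 \in \polRing$, so $\det(\matr{A}(\alpha)) = 0$ for every $\alpha \in \fieldsubset$; hence $\matr{A}(\alpha)$ has a nontrivial left kernel and an honest Prover can always produce a nonzero $\vect{v}$ passing both checks. For \emph{soundness}, suppose $\matr{A}$ is non-singular; then $\det(\matr{A}) \in \polRing$ is nonzero of degree at most $nd$, and \cref{lem:rankeval} applied with rank $n$ gives that $\matr{A}(\alpha)$ is singular with probability at most $nd/\#\fieldsubset$. Whenever $\matr{A}(\alpha)$ is non-singular no nonzero $\vect{v}$ can satisfy $\vect{v}\matr{A}(\alpha) = \zerovect$, so the Verifier rejects except on that event, yielding the claimed soundness bound.

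For the \emph{communication cost}, the Verifier transmits a single field element $\alpha$ and the Prover transmits a length-$n$ vector $\vect{v}$, totaling $\oh{n}$ field elements. For the \emph{Verifier cost}, evaluating each of the $n^2$ entries of $\matr{A}$ at $\alpha$ by Horner's rule uses $\oh{n^2 d}$ operations, which dominates the subsequent vector-matrix product $\vect{v}\matr{A}(\alpha)$ costing only $\oh{n^2}$.

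For the \emph{Prover cost}, an honest Prover first evaluates $\matr{A}(\alpha) \in \matRing{n}{n}$ at cost $\oh{n^2 d}$, then must produce one nonzero left kernel vector of this matrix, which has rank at most $r = \rank(\matr{A})$. This can be done using the rank-sensitive linear algebra of \citet{JPS13} cited at the beginning of \cref{sec:vspace}: compute a row rank profile of $\matr{A}(\alpha)$ and then solve for the coefficients expressing a single row outside the profile as a combination of profile rows, in total $\oh{n r^{\omega-1}}$ operations. Summing with the evaluation cost gives $\oh{n^2 d + n r^{\omega-1}}$. The main (minor) subtlety here is justifying that a single kernel vector can be extracted at the sharper rank-sensitive cost $nr^{\omega-1}$ rather than the $n^{2}r^{\omega-2}$ cost of a full kernel basis; this is the only step that requires care beyond direct application of \cref{lem:rankeval} and the Schwartz--Zippel argument.
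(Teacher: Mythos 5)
Your proposal follows the paper's proof essentially verbatim in structure: completeness because singularity of $\matr{A}$ forces singularity of every evaluation $\matr{A}(\alpha)$, soundness by applying \cref{lem:rankeval} with rank $n$ to bound the probability that a nonsingular $\matr{A}$ evaluates to a singular matrix by $nd/\#\fieldsubset$, and the same accounting for communication and Verifier cost. All of this is correct.

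The one point to fix is your elaboration of the Prover cost, where you go beyond what the paper states. Computing a row rank profile of the \emph{full} matrix $\matr{A}(\alpha)\in\matRing{n}{n}$ via \citet{JPS13} costs $\oh{n^2 r^{\omega-2}}$, not $\oh{n r^{\omega-1}}$; these differ by a factor $n/r$, so when $r \ll n$ and $d$ is small (e.g.\ $d=1$, $r=\sqrt{n}$) your described procedure exceeds the claimed bound $\oh{n^2 d + n r^{\omega-1}}$. The sharper bound is still attainable, but by a slightly different organization: since $\rank(\matr{A}(\alpha))\le r$, the first $r+1$ rows of $\matr{A}(\alpha)$ are already linearly dependent, so the Prover only needs to run elimination on rows until the first dependent row is met (say by processing rows incrementally or in doubling batches), which touches at most $\oh{r}$ rows of length $n$ and costs $\oh{n r^{\omega-1}}$; the dependency found, padded with zeros, is the required nonzero left kernel vector. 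With that adjustment your argument matches the theorem's Prover cost; the remainder of the proof needs no change.
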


\begin{proof}
    If $\matr{A}$ is singular, $\matr{A}(\alpha)$ must also be singular
    and there exists a nontrivial nullspace vector that the Verifier will
    accept.

    If $\matr{A}$ is nonsingular, then the Prover will be
    able to cheat if the Verifier picked an $\alpha$ such that
    $\matr{A}(\alpha)$ is singular,
    which happens only with probability $nd/\#\fieldsubset$ according to
    \cref{lem:rankeval}.

    Now, for the complexities: the Prover will have to evaluate $\matr{A}$ at $\alpha$,
    which costs $\tsoh{n^2d}$ and to find a nullspace vector over the base field, 
    which costs $\tsoh{nr^{\omega-1}}$, hence the Prover cost.
    The Verifier computes the evaluation and a vector-matrix
    product over $\F$, for a total cost of $\tsoh{n^2 d}$ operations. Finally, a vector over $\F^n$
    and a scalar are communicated, which yields a communication cost of $\tsoh{n}$.
\end{proof}

We now present a verification protocol for nonsingularity. This relies on the same
evaluation-based approach, with one variation: here, we let the Prover provide the evaluation
point. Indeed, if the Verifier picked a random point, they could choose an ``unlucky'' point
for which a nonsingular matrix evaluates to a singular one, and in that
case, the protocol would be incomplete as the Prover will not be able to
convince the Verifier of nonsingularity.
Instead, we let the Prover pick a point
as they have the computational power to find a suitable point
(\cref{step:nonsingular:find_alpha} in \nonsingular{}).
Once this value is committed to the Verifier,
in \crefrange{step:nonsingular_sampleb}{step:nonsingular_checkw}
we use the protocol verifying
nonsingularity over a field due to \citet[Theorem 3]{dk14}.

\begin{protocol}
    \caption{\textsf{NonSingularity}}
  \label{pro:nonsingular}
    \Public{$\matr{A}\in\pmatRing{n}{n}$}
    \Certifies{$\matr{A}$ is nonsingular}
  \begin{protocolsteps}
    \step \label{step:nonsingular:find_alpha}
      \prover{Find $\alpha\in\field$ s.t. \hfill { }\linebreak $\det(\matr{A}(\alpha)) \neq 0$}
      \toverifier{$\alpha$}
    \step\label{step:nonsingular_sampleb}
        \verifier{$\vect{b} \randfrom \fieldsubset^{n \times 1}$}
        \toprover{$\vect{b}$}
    \step
        \prover{Find $\vect{w} \in \F^{n\times 1}$ s.t. \hfill { }\linebreak $\matr{A}(\alpha)\vect{w} =
          \vect{b}$}
        \toverifier{$\vect{w}$}
    \step\label{step:nonsingular_checkw}
      \verifier{$\matr{A}(\alpha)\vect{w} \checkeq \vect{b}$
          }
  \end{protocolsteps}
\end{protocol}
\begin{theorem}\label{thm:nonsing}
    \Cref{pro:nonsingular} is a probabilistically sound interactive protocol
    and is complete assuming that $\#\fieldsubset \geq nd+1.$ It requires
    $\oh{n}$ communication and has Verifier cost $\tsoh{n^2 d}$.  The probability
    that the Verifier incorrectly accepts is at most $1/\#\fieldsubset$. There
    is an algorithm for the Prover with cost $\softoh{n^\omega d}$.
\end{theorem}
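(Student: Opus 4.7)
The plan is to handle completeness, soundness, and the cost bounds separately, observing that once $\alpha$ is committed, \crefrange{step:nonsingular_sampleb}{step:nonsingular_checkw} specialize exactly to the scalar nonsingularity certificate of \citet[Theorem 3]{dk14} applied to $\matr{A}(\alpha)$. For \textbf{completeness}, I would note that if $\matr{A}$ is nonsingular then $\det(\matr{A}) \in \polRing$ is nonzero of degree at most $nd$, and so has at most $nd$ roots in $\field$; the assumption $\#\fieldsubset \ge nd+1$ therefore guarantees some $\alpha \in \fieldsubset$ with $\det(\matr{A}(\alpha)) \neq 0$, and for any such $\alpha$ the scalar matrix $\matr{A}(\alpha)$ is invertible, so a valid $\vect{w}$ exists for every $\vect{b}$ and the honest Prover succeeds.

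For \textbf{soundness}, I would argue that if $\matr{A}$ is singular then $\matr{A}(\alpha)$ is singular for every $\alpha$ --- in particular for whichever value the Prover commits. Some nonzero $\vect{u} \in \matRing{1}{n}$ therefore satisfies $\vect{u}\matr{A}(\alpha) = \zerovect$, and any $\vect{w}$ passing the Verifier's check would force $\vect{u}\vect{b} = \vect{u}\matr{A}(\alpha)\vect{w} = 0$. Since $\vect{u} \neq \zerovect$, the quantity $\vect{u}\vect{b}$ is a nontrivial linear form in the entries of $\vect{b}$, so the Schwartz-Zippel lemma bounds the probability that $\vect{b} \randfrom \fieldsubset^{n\times 1}$ annihilates it by $1/\#\fieldsubset$. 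Note that no lower bound on $\#\fieldsubset$ is needed for this direction; it is only required for completeness.

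The Verifier and communication bounds follow routinely: evaluating $\matr{A}$ at $\alpha$ costs $\oh{n^2 d}$, the matrix-vector product and comparison in $\F$ add $\oh{n^2}$, and the messages $\alpha,\vect{b},\vect{w}$ together comprise $\oh{n}$ field elements. The main obstacle I expect is exhibiting a \emph{deterministic} $\softoh{n^\omega d}$ Prover algorithm, since the honest Prover must produce $\alpha$ unaided. My plan is to first compute $\det(\matr{A}) \in \polRing$ (of degree at most $nd$) by evaluation at $nd+1$ distinct points of $\fieldsubset$, computing each scalar determinant in $\oh{n^\omega}$ and then interpolating coefficient-wise, for $\softoh{n^\omega d}$ total; by the degree bound at least one of these sample points is a non-root, yielding $\alpha$ with no further asymptotic cost. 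Solving the resulting $\field$-linear system $\matr{A}(\alpha)\vect{w} = \vect{b}$ then costs $\oh{n^\omega}$ and is absorbed into the previous bound.
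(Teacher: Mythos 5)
Your completeness, soundness, and Verifier/communication analyses are correct and essentially match the paper's; the only difference is that you re-derive the soundness of the committed-evaluation subprotocol inline (singular $\matr{A}(\alpha)$ has a fixed nonzero left-kernel vector $\vect{u}$, and acceptance forces $\vect{u}\vect{b}=0$, an event of probability at most $1/\#\fieldsubset$ over the choice of $\vect{b}$), whereas the paper simply cites the nonsingularity certificate of Dumas and Kaltofen (Theorem~3) for that $1/\#\fieldsubset$ bound. That inline argument is valid and self-contained, so no issue there.

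The genuine gap is in the Prover cost. Your deterministic algorithm evaluates $\matr{A}$ at $nd+1$ points and computes a scalar determinant at each, which costs $(nd+1)\cdot\oh{n^{\omega}}=\oh{n^{\omega+1}d}$ for the determinants alone (plus $\softoh{n^{3}d}$ for the evaluations); this exceeds the claimed $\softoh{n^{\omega}d}$ by a factor of $n$, and the same bound applies even if you stop at the first non-root, since in the worst case up to $nd$ of the sample points are roots of $\det(\matr{A})$. Achieving $\softoh{n^{\omega}d}$ \emph{deterministically} is not obtainable by naive evaluation--interpolation: the paper instead invokes the deterministic polynomial-matrix determinant algorithm of \citet[Theorem~1.1]{LaNeZh17} to compute $\det(\matr{A})\in\polRing$ in $\softoh{n^{\omega}d}$, and only then uses fast multipoint evaluation of this single degree-$\le nd$ polynomial at $nd+1$ points of $\fieldsubset$ (cost $\softoh{nd}$) to locate a suitable $\alpha$; the final linear solve over $\field$ is then absorbed, as you say. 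So your overall structure is right, but the claimed Prover bound needs this nontrivial determinant algorithm rather than the per-point scalar determinants you propose.
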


\begin{proof}
    If $\matr{A}$ is nonsingular, then, as the field is large enough, there exists
    an $\alpha$ for which the rank of $\matr{A}(\alpha)$ does not drop, and as 
    \crefrange{step:nonsingular_sampleb}{step:nonsingular_checkw} form a 
    complete verification protocol, \nonsingular{} is complete.

    If $\matr{A}$ is singular, it is not possible to find an $\alpha$ such that 
    $\matr{A}(\alpha)$ is nonsingular. This means
    the Prover successfully cheats if they manage to convince the Verifier that 
    $\matr{A}(\alpha)$ is nonsingular, which only happens with probability $1/\#\fieldsubset$ 
     \citep[Theorem 3]{dk14}, hence the soundness of \nonsingular{}.

    Now, for the complexities: the Prover needs to find a suitable $\alpha$.
    The Prover first computes \(\det(\matr{A})\in\polRing\)
    using the deterministic algorithm of \citet[Theorem 1.1]{LaNeZh17}
    in \(\softoh{n^\omega d}\) time.
    Then, using fast multipoint evaluation,
    the determinant is evaluated at \(nd+1\) points from \fieldsubset
    in time
    $\softoh{nd}$ \citep[Corollary 10.8]{vzGG13};
    since \(\deg(\det(\matr{A}))\le nd\), at least one evaluation will
    be nonzero.
    Computing this determinant dominates the later cost for the Prover
    to evaluate \(\matr{A}(\alpha)\) and solve a linear system over the
    base field,
    hence a total cost of $\tssoftoh{n^\omega d}$.

    The Verifier needs to evaluate $\matr{A}$ at $\alpha$ and
    to perform a matrix-vector multiplication over the base field, hence a cost of
    $\tsoh{n^2d}$. Finally, total communications are two vectors of size $n$ over the base
    field and a scalar, hence the cost of $\oh{n}$.
\end{proof}

\subsection{Matrix Rank}

From the protocol for nonsingularity, we immediately infer one for a lower bound \(\rho\)
on the rank: the Prover commits a set of row indices and a set of column indices
which locate a \(\rho\times\rho\) submatrix which is nonsingular,
and then the protocol verifying nonsingularity is run on this submatrix.

\begin{protocol}
    \caption{\textsf{RankLowerBound}}
    \label{pro:ranklb}
    \Public{$\matr{A}\in\pmatRing{m}{n}$, $\rho \in \NN$}
    \Certifies{$\rank(\matr{A}) \ge \rho$}
    \begin{protocolsteps}
        \step
            \prover{Find index sets \(I,J\) such that
            \(\#I=\rho\), \(\#J=\rho\), and
            \(\matr{A}_{I, J}\) is nonsingular}
            \toverifier{\(I, J\)}
          \step
          \verifier{
          \(I \checksub \{1,\ldots,m\}\) \hfill{ }\linebreak
          \(J \checksub \{1,\ldots,n\} \)
          }
        \vspace{0.2cm} \step
          \subprotocol{Use \nonsingular($\matr{A}_{I, J}$)}
    \end{protocolsteps}
\end{protocol}

\begin{theorem}\label{thm:ranklb}
    Let $r$ be the actual rank of $\matr{A}$. \Cref{pro:ranklb} is a
    probabilistically sound interactive protocol and is complete assuming
    $\#\fieldsubset \geq \rho d+1$ in its subprotocol. It requires $\oh{\rho}$
    communication and has Verifier cost $\tsoh{\rho^2 d}$. If we indeed have
    $r \ge \rho$, then there is a Las Vegas randomized algorithm for the Prover
    with expected cost $\tsoh{mnr^{\omega-2}+mnd}$. Otherwise, the probability
    that the Verifier incorrectly accepts is at most $1/\#\fieldsubset$.
\end{theorem}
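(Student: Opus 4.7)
The plan is to handle completeness, soundness, and the three complexity claims in that order, leveraging the already-established properties of \nonsingular{}.

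For \textbf{completeness}, I would argue that if $\rho \le r = \rank(\matr{A})$, then by the definition of rank there exist index sets $I \subseteq \{1,\ldots,m\}$ and $J \subseteq \{1,\ldots,n\}$, each of cardinality $\rho$, with $\matr{A}_{I,J}$ nonsingular. The Prover sends such $I, J$ and invokes the prover side of \nonsingular{} on $\matr{A}_{I,J} \in \pmatRing{\rho}{\rho}$; that submatrix has degree at most $d$, so the completeness condition $\#\fieldsubset \ge \rho d + 1$ of \cref{thm:nonsing} transfers verbatim, and the Verifier always accepts.

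For \textbf{soundness}, assume $\rho > r$. Then every $\rho \times \rho$ submatrix of $\matr{A}$ is singular, so whatever $I, J$ the Prover commits, $\matr{A}_{I,J}$ is singular. The only way the Verifier accepts is for the subprotocol \nonsingular($\matr{A}_{I,J}$) to incorrectly accept, which by \cref{thm:nonsing} happens with probability at most $1/\#\fieldsubset$.

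For the \textbf{communication and Verifier cost}, the indices $I, J$ require $O(\rho \log \max(m,n)) = O(\rho)$ bits (treated as a single message in the paper's model), and the subprotocol on a $\rho \times \rho$ matrix of degree $d$ costs $O(\rho)$ in communication and $O(\rho^2 d)$ for the Verifier, again by \cref{thm:nonsing}. Verifying the set constraints on $I, J$ is negligible.

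The \textbf{main obstacle} is the honest Prover cost when $\rho \le r$: the Prover must locate some $\rho \times \rho$ nonsingular submatrix. The natural route is to compute a row and column rank profile of $\matr{A}$ — equivalently, to extract $r$ linearly independent rows and $r$ linearly independent columns, from which any $\rho$ can be retained — and this can be done in expected time $\softoh{m n r^{\omega-2} d}$ by a Las Vegas polynomial-matrix rank / rank-profile algorithm (for instance via reduction to a nonsingular $r \times r$ subproblem after random conditioning). Once $I, J$ are found, the prover side of \nonsingular{} on $\matr{A}_{I,J}$ costs $\softoh{\rho^\omega d}$ by \cref{thm:nonsing}, which is dominated by the rank-profile computation since $\rho \le r \le \min(m,n)$. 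Summing gives the claimed expected cost $\softoh{m n r^{\omega-2} d}$.
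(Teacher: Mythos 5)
Your proposal is correct and follows essentially the same route as the paper's own proof: existence of a nonsingular $\rho\times\rho$ submatrix for completeness, reduction of soundness to the $1/\#\fieldsubset$ failure probability of \nonsingular{}, and an honest Prover who finds $I,J$ via a Las Vegas rank/rank-profile computation in $\softoh{mnr^{\omega-2}d}$, with the $\softoh{\rho^\omega d}$ subprotocol cost dominated since $\rho\le r$. No gaps worth noting.
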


\begin{proof}
    If $\rho$ is indeed a lower bound on the rank of $\matr{A}$, 
    there exist two sets $I \subseteq \{1,\ldots,m\}$ and $J \subseteq \{1,\ldots,n\}$ of size \(\rho\) such that 
    $\matr{A}_{I,J}$ is nonsingular, and since
    \nonsingular{}
    is complete, so is this protocol. Note that the completeness of
    the sub-protocol is ensured only if $\#\fieldsubset \ge \rho d+1$.

    If $\rho$ is not a lower bound on the rank of $\matr{A}$, meaning \(\rank(\matr{A})<\rho\),
    then the Prover will not be able to find suitable $I$ and $J$ and hence
    the sets provided by a cheating Prover yield a singular submatrix $\matr{A}_{I,J}$.
    Now, if the Prover provided sets which do not contain $\rho$ elements or which contain elements
    outside the allowed dimension bounds, 
    this will always be detected by the Verifier.
    If the Prover provided sets with enough elements, the Verifier incorrectly accepts
    with the same probability as in \nonsingular{}, which
    is $1/\#\fieldsubset$.

    Regarding the complexities, the Prover has to find a $\rho\times\rho$
    nonsingular submatrix of an $m\times n$ degree $d$ matrix. This can be
    achieved in a Las Vegas fashion, by evaluating the matrix $\matr{A}$ at a
    random $\alpha$ in time $\oh{mnd}$, and computing the rank profile matrix
    (or a rank profile revealing PLUQ decomposition) of $\matr{A}(\alpha)$, see
    for instance \citep{DPS15}. The cost of this computation is
    $\tsoh{mnr^{\omega-2}}$, with $r$ the actual rank of $\matr{A}$.

    As $\rho \leq r$, running \cref{pro:nonsingular} on a $\rho \times \rho$
    matrix does not dominate the complexity, hence the total Prover cost of 
    $\tsoh{mnr^{\omega-2}+mnd}$.
    From \cref{thm:nonsing}, the Verifier cost is
    $\tsoh{\rho^2 d}$. Finally, here two sets of $\rho$ integers are
    transmitted, which with the communications in \nonsingular{} adds up to
    a communication cost of $\oh{\rho}$.
\end{proof}

Now, we give a protocol verifying an upper bound on the rank.
Note that \cref{step:rankub_computegamma,step:rankub_checkgamma} of \cref{pro:rankub}
come from the protocol verifying an upper bound on the rank for matrices over a field
\citep[see][Theorem 4]{dk14}. In this protocol, we use the notation
\(|\cdot|_{H}\) to refer to the Hamming weight: \(| \vect{\gamma}|_{H} \leq
\rho\) means that the vector \(\vect{\gamma}\) as at most \(\rho\) nonzero
entries.

\begin{protocol}
    \caption{\textsf{RankUpperBound}}
    \label{pro:rankub}
    \Public{$\matr{A}\in\pmatRing{m}{n}$, $\rho \in \NN$}
    \Certifies{$\rank(\matr{A}) \le \rho$}
    \begin{protocolsteps}
        \step
            \verifier{
                $\begin{array}{@{}l@{}}
                    \alpha \randfrom \fieldsubset \\
                    \vect{v} \randfrom \fieldsubset^{n \times 1}
                \end{array}$
            }
            \toprover{$\alpha, \vect{v}$}
        \step \label{step:rankub_computegamma}
            \prover{
              Find \(\vect{\gamma}\in\F^{n\times 1}\) such that \(\matr{A}(\alpha)\vect{\gamma} = \matr{A}(\alpha)\vect{v}\)
                    and \(| \vect{\gamma}|_{H} \leq \rho\)
            }
            \toverifier{$\vect{\gamma}$}
        \step \label{step:rankub_checkgamma}
            \verifier{
            $\begin{array}{@{}l@{}}
                |\vect{\gamma}|_{H} \checkle \rho \\
                \matr{A}(\alpha)\vect{\gamma} \checkeq \matr{A}(\alpha)\vect{v}
            \end{array}$
            }
    \end{protocolsteps}
\end{protocol}

\begin{theorem}\label{thm:rankub}
    Let $r$ be the actual rank of $\matr{A}$.  Then, \Cref{pro:rankub} is a
    complete and probabilistically sound interactive protocol which requires
    $\oh{n}$ communication and has Verifier cost $\oh{mnd}$.  If we indeed have
    $r \le \rho$, then there is an algorithm for the Prover with cost bound
    $\tsoh{mnr^{\omega-2} + mnd}$. Otherwise, the probability that the
    Verifier incorrectly accepts is at most $(rd+1)/\#\fieldsubset$.
\end{theorem}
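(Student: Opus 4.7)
The plan is to present \Cref{pro:rankub} as the DK14 field-level rank-upper-bound certificate specialized to the evaluated matrix $\matr{A}(\alpha)$, and then derive each claim by combining that result with \Cref{lem:rankeval} for the evaluation step. Completeness and the cost analysis are essentially mechanical once this viewpoint is adopted; the main obstacle is the soundness argument, where two independent sources of randomness ($\alpha$ and $\vect{v}$) must be cleanly separated and then combined by a union bound.

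For completeness, I would argue directly: if $\rank(\matr{A}) \le \rho$, then $\rank(\matr{A}(\alpha)) \le \rho$ for every $\alpha$, so the column span of $\matr{A}(\alpha)$ is generated by some subset of $\rho$ columns. Consequently $\matr{A}(\alpha)\vect{v}$ always lies in that span, and there exists $\vect{\gamma}\in\F^{n\times 1}$ with $|\vect{\gamma}|_H \le \rho$ and $\matr{A}(\alpha)\vect{\gamma} = \matr{A}(\alpha)\vect{v}$, which the honest Prover exhibits; no lower bound on $\#\fieldsubset$ is needed.

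For soundness, suppose $r = \rank(\matr{A}) > \rho$. I would split the acceptance event into (a) the evaluation drops the rank to at most $\rho$, i.e.\ $\rank(\matr{A}(\alpha)) \le \rho$, or (b) $\rank(\matr{A}(\alpha)) > \rho$ but the Prover nonetheless produces an accepting $\vect{\gamma}$. Event (a) is controlled by \Cref{lem:rankeval} applied with target rank $\rho+1 \le r$, giving probability at most $(\rho+1)d/\#\fieldsubset \le rd/\#\fieldsubset$. Conditioned on (a) not occurring, event (b) is exactly the cheating event of the DK14 rank-upper-bound certificate run on $\matr{A}(\alpha)$ with right-hand-side $\matr{A}(\alpha)\vect{v}$ for random $\vect{v}$: by~\citep[Theorem 4]{dk14} its probability is at most $1/\#\fieldsubset$. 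A union bound then yields the claimed $(rd+1)/\#\fieldsubset$. The delicate part is justifying that $\vect{v}$ remains uniformly random conditional on $\alpha$ (which is immediate since they are drawn independently by the Verifier) and that the bound $(\rho+1)d \le rd$ correctly absorbs the $\rho$-dependence into an $r$-dependent factor.

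The cost analysis is routine. The Verifier's dominant work is evaluating $\matr{A}$ at $\alpha$ in $\oh{mnd}$, after which two matrix-vector products and a Hamming-weight check fit within the same bound. Communication consists of $\alpha$ together with the length-$n$ vectors $\vect{v}$ and $\vect{\gamma}$, totalling $\oh{n}$ field elements. For the Prover, evaluating $\matr{A}(\alpha)$ costs $\oh{mnd}$; computing a column rank profile of $\matr{A}(\alpha)$ and then solving for $\vect{\gamma}$ on the $\rho$ selected columns uses~\citep[Algorithm 6]{JPS13} in $\oh{mnr^{\omega-2}}$ (the ``Las Vegas'' qualifier leaves room for using a randomized rank-profile routine rather than its deterministic counterpart). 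Summing these contributions gives the expected cost $\softoh{mnr^{\omega-2}+mnd}$.
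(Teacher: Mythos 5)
Your proposal is correct and follows essentially the same route as the paper's proof: evaluate at a random $\alpha$, invoke the field-level rank-upper-bound certificate of \citet[Theorem~4]{dk14}, bound the bad-evaluation event via \cref{lem:rankeval}, and combine by a union bound, with the same cost accounting. The only (harmless) variation is that you apply \cref{lem:rankeval} at threshold $\rho+1$ and then relax $(\rho+1)d\le rd$, whereas the paper bounds the event that the rank drops below $r$ directly; both give the stated $(rd+1)/\#\fieldsubset$.
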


\begin{proof}
    If $\rho$ is indeed an upper bound on the rank of $\matr{A}$, then, 
    whichever evaluation point the Verifier picked, $\rho$ will be an upper
    bound on the rank of $\matr{A}(\alpha)$ and, as the protocol from \citep[Theorem 4]{dk14}
    is complete, this protocol is complete.

    If $\rho$ is not an upper bound on the rank of $\matr{A}$, there are two possibilities
    of failure. Either the Verifier picked an evaluation point for which the rank of 
    $\matr{A}$ drops, which happens with probability at most $rd/\#\fieldsubset$
    by~\Cref{lem:rankeval}; or the Prover managed to cheat during
    the execution of \cref{step:rankub_computegamma,step:rankub_checkgamma} 
    which happens with probability at most $1/\#\fieldsubset$ \citep[Theorem 4]{dk14}.
    Then, the union bound gives a total probability of $(rd+1)/\#\fieldsubset$ for the 
    Verifier to accept a wrong answer.

    The Prover has to evaluate the matrix at $\alpha$ for a cost of $\oh{mnd}$.
    Then to find the vector $\vect{\gamma}$, the Prover can for instance first
    compute a PLUQ decomposition $\matr{A}(\alpha) =
    \matr{P}\begin{bmatrix}\matr{L}_1\\\matr{L}_2\end{bmatrix}
    \begin{bmatrix}\matr{U}_1&\matr{U}_2\end{bmatrix}\matr{Q}$ for a cost of
    $\tsoh{mnr^{\omega-2}}$. Then, the Prover computes the vector
    $\begin{bmatrix}\vect{w}_1\\ \vect{w}_2\end{bmatrix} = \matr{Q}\vect{v}$,
    where $\vect{w}_1$ has size \(r\), and computes $\vect{\gamma} =
    \transpose{\matr{Q}}
    \begin{bmatrix}\vect{w}_1+\matr{U}_1^{-1}\matr{U}_2\vect{w}_2
    \\\vect{0}\end{bmatrix}$. This vector has Hamming weight at most \(r\)
    (recall that \(\matr{Q}\) is a permutation matrix) and satisfies
    $\matr{A}(\alpha)\vect{\gamma} = \matr{P} \begin{bmatrix}
      \matr{L}_1\\\matr{L}_2    \end{bmatrix} \begin{bmatrix}
    \matr{U}_1\vect{w}_1 + \matr{U}_2 \vect{w}_2 \end{bmatrix}
    =\matr{A}(\alpha)\vect{v}$. The Verifier has to evaluate the matrix at
    $\alpha$ and to perform two matrix-vector products over the base field,
    which yields a cost of $\oh{mnd}$. The communication cost is the one of
    sending a scalar and two vectors of size $n$ over the base field, that is,
    $\oh{n}$.
\end{proof}

From these protocols verifying upper bounds and lower bounds on the rank,
we directly obtain one for the rank (\cref{pro:rank}).

\begin{protocol}
    \caption{\textsf{Rank}}
    \label{pro:rank}
    \Public{$\matr{A}\in\pmatRing{m}{n}$, $\rho \in \NN$}
    \Certifies{$\rank(\matr{A}) = \rho$}
    \begin{protocolsteps}
    \step
        \subprotocol{Use \ranklb($\matr{A}, \rho$)}
    \vspace{0.2cm} \step
        \subprotocol{Use \rankub($\matr{A}, \rho$)}
    \end{protocolsteps}
\end{protocol}

\begin{corollary}
    Let $r$ be the actual rank of $\matr{A}$. \Cref{pro:rank} is a probabilistically sound 
    interactive protocol and is complete assuming $\#\fieldsubset \geq rd+1$ in its 
    subprotocols.
    It requires $\oh{n}$ communication and has Verifier cost $\oh{mnd}$.
    If we indeed have $\rho=r$, then there is a Las Vegas 
    randomized algorithm for the Prover with expected cost 
    $\tsoh{mnr^{\omega-2} + mnd}$. Otherwise,
    the probability that the Verifier incorrectly accepts is at most $(rd+1)/\#\fieldsubset$.
\end{corollary}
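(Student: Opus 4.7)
My plan is to simply combine the two preceding theorems (\cref{thm:ranklb,thm:rankub}), since \prorank{} is the sequential composition of \ranklb{} and \rankub{} with the same parameter $\rho$.

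For completeness, I would observe that if $\rho$ really equals $r=\rank(\matr{A})$, then $\rho$ is simultaneously a lower and upper bound on $\rank(\matr{A})$, so the completeness of both subprotocols applies. The only hypothesis on $\#\fieldsubset$ comes from \ranklb{}, which demands $\#\fieldsubset \geq \rho d + 1 = rd+1$; since \rankub{} itself is perfectly complete, no additional condition is needed.

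For soundness, I would split on how the Prover could cheat when $\rho\neq r$. If $\rho > r$, then $\rho$ fails to be a lower bound on the rank, and by \cref{thm:ranklb} the Verifier accepts the \ranklb{} subprotocol with probability at most $1/\#\fieldsubset$. If $\rho < r$, then $\rho$ fails to be an upper bound, and \cref{thm:rankub} bounds the acceptance probability of the \rankub{} subprotocol by $(rd+1)/\#\fieldsubset$. Taking the maximum of these two cases yields the claimed bound $(rd+1)/\#\fieldsubset$; since the two subprotocols are run in sequence on independent randomness, no extra union bound is required within a single case.

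For the complexity bounds I would simply add the two sets of costs and keep the dominant term. On the communication side, \ranklb{} uses $\oh{\rho}$ and \rankub{} uses $\oh{n}$, so the total is $\oh{n}$ (using $\rho \le n$ in the honest case). The Verifier spends $\oh{\rho^2 d}+\oh{mnd} = \oh{mnd}$. The honest Prover uses the Las Vegas procedures from \cref{thm:ranklb,thm:rankub}, costing $\softoh{mnr^{\omega-2}d}$ and $\softoh{mnr^{\omega-2}+mnd}$ respectively, with maximum $\softoh{mnr^{\omega-2}d}$. No step is genuinely difficult here; the one bookkeeping point worth stating explicitly is that the soundness error of the composition is dominated by that of \rankub{} rather than being summed, because the two failure modes $\rho>r$ and $\rho<r$ are disjoint.
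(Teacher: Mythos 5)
Your proposal is correct and matches the paper's (implicit) argument: the corollary is stated without proof precisely because it follows by composing \cref{thm:ranklb,thm:rankub} exactly as you do, with completeness inheriting the $\#\fieldsubset \geq rd+1$ condition from \ranklb{}, soundness given by the disjoint case split $\rho>r$ versus $\rho<r$ (so the error is the maximum $(rd+1)/\#\fieldsubset$, not a sum), and the costs obtained by adding and keeping dominant terms.
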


\subsection{Determinant}

We follow on with a protocol verifying the determinant of a polynomial matrix,
using a similar evaluation-based approach: after the Verifier has checked that
the degree of the provided determinant is suitable, a random evaluation point
is sampled and the actual verification occurs on the evaluated input.
The check on the degree of the provided determinant is to allow that
the DeMillo-Lipton-Schwartz-Zippel Lemma applies and produces the claimed probability of the 
success.
There are two choices available for the protocol to use over the base field: either
the one from \citep[Section 2]{dktv16}, which runs in a constant number of
rounds but requires a minimum field size of $n^2$, or the one from
\citep[Section 4.1]{dlp17} which runs in $n$ rounds but has no requirement on
the field size. Whichever protocol  is chosen here, this has no impact on the
asymptotic complexities which are the same for both, or on the completeness as
both are perfectly complete.

\begin{protocol}
    \caption{\textsf{Determinant}}
    \label{pro:det}
    \Public{$\matr{A}\in\pmatRing{n}{n}, \delta \in \F[x]$}
    \Certifies{\(\det(\matr{A}) = \delta\)}
    \begin{protocolsteps}
      \step
          \verifier{
          $\begin{array}{@{}l@{}}
              \deg(\delta) \checkle nd \\
              \alpha \randfrom \fieldsubset \\
              \beta \gets \delta(\alpha)
          \end{array}$
          }
          \toprover{$\alpha$}
      \step
          \subprotocol{Use \textsf{FieldDeterminant}($\matr{A}(\alpha), \beta$)}
    \end{protocolsteps}
\end{protocol}

\begin{theorem}
    \Cref{pro:det} is a complete and probabilistically sound interactive
    protocol which requires $\oh{n}$ communication and has Verifier cost
    $\tsoh{n^2 d}$.  If $\delta$ is indeed the determinant of $\matr{A}$, there
    is an algorithm for the Prover which costs $\tsoh{n^2 d + n^\omega}$.
    Otherwise, the probability that the Verifier incorrectly accepts is at most
    $(nd+1)/\#\fieldsubset$.
\end{theorem}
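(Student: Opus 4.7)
The plan is to break the argument into four parts matching the four claims (completeness, soundness, costs, communication), each leaning on two standard facts: (i) evaluation at a scalar commutes with the determinant, so $\det(\matr{A})(\alpha) = \det(\matr{A}(\alpha))$, and (ii) the determinant of an $n\times n$ matrix of degree at most $d$ has degree at most $nd$.

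For completeness, I would observe that if $\delta = \det(\matr{A})$, then the degree check $\deg(\delta) \le nd$ in the first step is automatically satisfied by (ii), and $\beta = \delta(\alpha) = \det(\matr{A}(\alpha))$ by (i). Therefore the subprotocol is invoked on a true statement over the base field, and completeness follows from that of \textsf{FieldDeterminant}.

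For soundness, suppose $\delta \ne \det(\matr{A})$. If $\deg(\delta) > nd$ the Verifier rejects immediately, so assume $\deg(\delta) \le nd$ and consider the nonzero polynomial $p \defeq \delta - \det(\matr{A})$, which has degree at most $nd$. By the Schwartz--Zippel lemma, $p(\alpha) = 0$ with probability at most $nd/\#\fieldsubset$. Whenever $p(\alpha) \ne 0$, we have $\beta = \delta(\alpha) \ne \det(\matr{A}(\alpha))$, so the subprotocol is run on a false statement and accepts with probability at most $1/\#\fieldsubset$. A union bound yields the claimed $(nd+1)/\#\fieldsubset$ soundness error. The only subtle step, and the main thing to be careful about, is to make sure the random draw of $\alpha$ is independent of any randomness used internally by \textsf{FieldDeterminant}, so that the two bad events are genuinely independent and the union bound is valid.

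For the cost analysis, the Verifier evaluates $\delta$ in $\oh{nd}$ operations and $\matr{A}$ at $\alpha$ in $\oh{n^2 d}$, then runs \textsf{FieldDeterminant} on an $n \times n$ matrix over $\F$, whose Verifier cost is $\oh{n^2}$; the total is $\oh{n^2 d}$. An honest Prover evaluates $\matr{A}$ at $\alpha$ in $\oh{n^2 d}$ and runs the subprotocol on $\matr{A}(\alpha)$, where the honest Prover's cost is $\oh{n^\omega}$ (dominated by computing the determinant over \(\field\)), giving $\oh{n^2 d + n^\omega}$ as claimed. The communication consists of the scalar $\alpha$ and the messages of the subprotocol, which are $\oh{n}$ field elements, yielding $\oh{n}$ total communication.
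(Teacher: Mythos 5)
Your proposal is correct and takes essentially the same route as the paper: completeness from $\det(\matr{A})(\alpha)=\det(\matr{A}(\alpha))$ and the degree bound, soundness by applying the Schwartz--Zippel lemma to $\delta-\det(\matr{A})$ and union-bounding with the soundness error of \textsf{FieldDeterminant}, and the same cost accounting. One small remark: the union bound does not require independence between the draw of $\alpha$ and the subprotocol's internal randomness (the subprotocol's soundness holds for any fixed false instance), so the caveat you flag is unnecessary, though harmless.
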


\begin{proof}
    Let \(g=\det(\matr{A})\in\F[x]\) be the actual determinant of \(\matr{A}\).

    If \(\delta=g\), then it must be the case that 
    \(\deg(\delta) \le nd\).
    Then, as \textsf{FieldDeterminant} is complete, the final check will be positive.

    If \(\delta\ne g\), there are two possibilities of failure. If the Verifier
    has picked an $\alpha$ which is a root of \(\delta-g\), then
    \(\delta(\alpha) = g(\alpha)\) and the checks from
    \textsf{FieldDeterminant} will always pass; by the
    DeMillo-Lipton-Schwartz-Zippel lemma, this happens with probability at most
    $nd/\#\fieldsubset$. Otherwise, the Verifier has picked an $\alpha$ such
    that \(\delta(\alpha) \neq g(\alpha)\) which means they will accept
    $\delta$ as the determinant with the probability of failure of
    \textsf{FieldDeterminant}, that is, $1/\#\fieldsubset$.  Overall, by the
    union bound, the probability that the Verifier accepts a wrong statement is
    at most $(nd+1)/\#\fieldsubset$.

    The Prover has to evaluate the matrix at $\alpha$ and to compute a determinant over the
    base field, which yields the cost of $\tsoh{n^2d + n^\omega}$;
    the Verifier has to evaluate $\matr{A}$ at $\alpha$,
    hence a cost of $\tsoh{n^2 d}$; and the communication cost is the one of
    \textsf{FieldDeterminant}, that is, $\oh{n}$.
\end{proof}

\subsection{Protocols based on matrix multiplication}

Finally, we propose verification protocols related to matrix multiplication.
While they are once again based on evaluation techniques, unlike the above
protocols the ones given in this section are non-interactive and thus have no Prover
or communication cost. We first consider linear system solving; recall that
when working over \(\polRing\), given a nonsingular matrix \(\matr{A}\) and a
vector \(\vect{b}\), this problem consists in finding a solution vector
\(\vect{v}\) over \(\polRing\) \emph{together with} a nonzero polynomial
\(\delta\in\polRing\) such that \(\delta^{-1} \vect{v} = \matr{A}^{-1}
\vect{b}\) \citep[see for example][]{GuSaStVa12}.

\begin{protocol}
    \caption{\textsf{SystemSolve}}
    \label{pro:systemsolve}
    \Public{$\matr{A}\in\pmatRing{m}{n}, \vect{b} \in \pmatRing{m}{1}, \vect{v} \in \pmatRing{n}{1}, \delta \in \F[x]$}
    \Certifies{$\deg(\delta) \le \min(m,n) \deg(\matr{A})$ and $\matr{A}\vect{v} = \delta \vect{b}$}
    \begin{protocolsteps}
    \step
        \verifier{
            \hspace*{-2 mm}%
            $\begin{array}{@{}l@{}}
                \alpha \randfrom \fieldsubset \\
                \matr{A}(\alpha)\vect{v}(\alpha) \checkeq
                \delta(\alpha)\vect{b}(\alpha)
            \end{array}$
        }
    \end{protocolsteps}
\end{protocol}

\begin{theorem}\label{th:systemsolve}
    Let $d$ be an upper bound on the degree of \matr{A}, \vect{v},
    \vect{b}, and $\delta$.
    Then, \Cref{pro:systemsolve} is a complete and probabilistically sound non-interactive
    protocol which has Verifier cost $\oh{mnd}$.
    The probability that the Verifier incorrectly accepts is at most $2d/\#\fieldsubset$.
\end{theorem}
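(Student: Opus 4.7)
The plan is to handle completeness, soundness, and the verifier cost in turn, all by pushing the verification down to a polynomial identity in a single variable and invoking the Schwartz--Zippel lemma.

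For completeness, I would simply observe that if $\matr{A}\vect{v} = \delta\vect{b}$ holds as an identity of polynomial vectors in $\pmatRing{m}{1}$, then evaluating both sides at any $\alpha\in\field$ yields $\matr{A}(\alpha)\vect{v}(\alpha) = \delta(\alpha)\vect{b}(\alpha)$, so the verifier's check passes deterministically. This requires no assumption on $\#\fieldsubset$.

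For soundness, the natural move is to introduce the residual vector
\[
  \vect{p}(x) \defeq \matr{A}(x)\vect{v}(x) - \delta(x)\vect{b}(x) \in \pmatRing{m}{1}.
\]
Each entry of $\matr{A}\vect{v}$ has degree at most $\deg(\matr{A})+\deg(\vect{v}) \le 2d$, and similarly each entry of $\delta\vect{b}$ has degree at most $2d$, so every entry of $\vect{p}$ is a polynomial of degree at most $2d$. If the statement is false, at least one such entry is a nonzero polynomial; applying the Schwartz--Zippel lemma to that single entry, the probability that a uniformly random $\alpha\in\fieldsubset$ is a root is at most $2d/\#\fieldsubset$. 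Whenever it is not a root, $\vect{p}(\alpha) \ne \zerovect$, and the verifier rejects. No union bound over the $m$ coordinates is needed, since a single nonzero coordinate suffices to reject.

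For the verifier cost, evaluating the polynomial matrix $\matr{A}$ at $\alpha$ by Horner's scheme costs $\oh{mnd}$ operations in $\field$; evaluating $\vect{v}$, $\vect{b}$, and the scalar $\delta$ costs $\oh{(m+n)d}$ more, which is dominated. The final check is a matrix-vector product and vector subtraction over $\field$, costing $\oh{mn}$. Altogether the verifier performs $\oh{mnd}$ field operations, and since the protocol is non-interactive with no prover message, the communication and prover costs are trivially zero. The main (minor) obstacle is just being careful about the degree bound on $\vect{p}$, which is $2d$ rather than $d$, explaining the constant in the soundness probability.
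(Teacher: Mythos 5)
Your proposal is correct and follows essentially the same argument as the paper: completeness by evaluating the polynomial identity, soundness by noting a nonzero entry of the residual $\matr{A}\vect{v}-\delta\vect{b}$ has degree at most $2d$ and applying the Schwartz--Zippel lemma, and verifier cost dominated by Horner evaluation of $\matr{A}$ at $\alpha$. No substantive differences to report.
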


\begin{proof}
    If \(\matr{A}\vect{v}=\delta\vect{b}\), then the same holds when evaluating
    at \(\alpha\), hence the completeness of this protocol.

    Otherwise, we have $\matr{A}\vect{v} -\delta \vect{b} = \matr{\Delta}$ for
    some nonzero vector $\matr{\Delta}\in\pmatRing{m}{1}$, and the Verifier
    incorrectly accepts if and only if the Verifier picked an $\alpha$ such
    that \(\matr{\Delta}(\alpha)=0\). Using \cref{lem:rankeval} with the
    vector \(\matr{\Delta}\) of rank \(1\) and degree at most \(2d\), it
    follows that the Verifier picked such an \(\alpha\) with probability at
    most \(2d / \#\fieldsubset\).

    The dominating step in the Verifier's work is evaluating $\matr{A}$ at
    $\alpha$, which costs $\oh{mnd}$ operations in \(\field\).
\end{proof}

Remark that when solving linear systems over \(\polRing\) one is often
interested in the case of a nonsingular matrix \(\matr{A}\) with \(m = n\). In
this context, one usually seeks a solution \((\vect{v},\delta)\) with $\delta$
of minimal degree (see \citep[Section 9]{Storjohann03} and \citep[Section
7]{GuSaStVa12}); this implies \(\deg(\delta) \le \deg(\det(\matr{A})) \le n
\deg(\matr{A})\) and \(\deg(\vect{v}) = \deg(\delta \matr{A}^{-1} \vect{b}) \le
(n-1) \deg(\matr{A}) + \deg(\vect{b})\). In this particular case, these bounds
could be checked by the Verifier at the beginning of the protocol; the
probability that the Verifier incorrectly accepts becomes $(nd_{\matr{A}} +
d_{\vect{b}}) / \#\fieldsubset$; and the Verifier's work costs \(\tsoh{n^2
d_{\matr{A}} + n d_{\vect{b}}}\) operations.

Similarly, we propose a protocol verifying matrix multiplication
following an approach attributed to \citet{Fre79}.

\begin{protocol}
    \caption{\textsf{MatMul}}
    \label{pro:matmul}
    \Public{\(\matr{A}\in\pmatRing{m}{n}, \matr{B}\in\pmatRing{n}{\ell}, \matr{C}\in\pmatRing{m}{\ell}\)}
    \Certifies{\(\matr{C} = \matr{A}\matr{B}\)}
    \begin{protocolsteps}
        \step
            \verifier{
                \hspace*{-4 mm}%
                $\begin{array}{@{}l@{}}
                    \deg(\matr{C}) \stackrel{?}{\le}
                      \deg(\matr{A}) + \deg(\matr{B}) \\
                    \alpha \randfrom \fieldsubset \\
                    \vect{v} \randfrom \fieldsubset^{\ell \times 1} \\
                    \matr{C}(\alpha)\vect{v}
                      \checkeq \matr{A}(\alpha)(\matr{B}(\alpha)\vect{v})
                \end{array}$
            }
    \end{protocolsteps}
\end{protocol}

\begin{theorem}\label{thm:polyfreivalds}
    Let $d_{\matr{A}} = \max(1,\deg(\matr{A}))$ and similarly for
    $d_{\matr{B}}, d_{\matr{C}}$.
    \Cref{pro:matmul} is a complete and probabilistically sound non-interactive
    protocol which has Verifier cost
    $\oh{mnd_{\matr{A}} + n\ell d_{\matr{B}} + m\ell d_{\matr{C}}}$.
    The probability that the Verifier incorrectly accepts is at most
    $(d_{\matr{A}}+d_{\matr{B}}+1)/\#\fieldsubset$.
\end{theorem}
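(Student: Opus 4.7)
The plan is to handle completeness first, then use a two-step argument for soundness (one step bounds the probability that evaluation at a random $\alpha$ destroys a nonzero difference, the other invokes the Freivalds-style argument from \citep{Fre79} on the evaluated matrices), then account for the Verifier's arithmetic and observe that there is no communication beyond the public input.

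For completeness, I would assume $\matr{C} = \matr{A}\matr{B}$. Then $\deg(\matr{C}) \le \deg(\matr{A}) + \deg(\matr{B})$ is immediate, so the degree check passes. For any choice of $\alpha$ and $\vect{v}$, the identity
\[
  \matr{A}(\alpha)(\matr{B}(\alpha)\vect{v}) - \matr{C}(\alpha)\vect{v} = (\matr{A}(\alpha)\matr{B}(\alpha) - \matr{C}(\alpha))\vect{v} = \zerovect
\]
holds since evaluation at $\alpha$ is a ring homomorphism, so the final check passes as well.

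For soundness, suppose $\matr{C} \ne \matr{A}\matr{B}$. If the degree check fails the Verifier rejects, so I may assume $\deg(\matr{C}) \le d_{\matr{A}} + d_{\matr{B}}$. Then the difference $\matr{D} = \matr{A}\matr{B} - \matr{C}$ is a nonzero polynomial matrix of degree at most $d_{\matr{A}} + d_{\matr{B}}$. Picking some nonzero entry of $\matr{D}$ and applying the Schwartz-Zippel lemma, the probability that $\matr{D}(\alpha) = \zeromat$ is at most $(d_{\matr{A}} + d_{\matr{B}})/\#\fieldsubset$. Conditioned on $\matr{D}(\alpha) \ne \zeromat$, the Verifier's final check amounts to testing whether $\matr{D}(\alpha)\vect{v} = \zerovect$ for a uniformly random $\vect{v}\in\fieldsubset^{\ell\times 1}$; by Freivalds' argument (equivalently, one coordinate of $\matr{D}(\alpha)\vect{v}$ is a nonzero linear form in the entries of $\vect{v}$), this happens with probability at most $1/\#\fieldsubset$. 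A union bound then yields the announced $(d_{\matr{A}} + d_{\matr{B}} + 1)/\#\fieldsubset$ soundness error. The slightly subtle point to get right is that the Freivalds step must be phrased on the evaluated (field) matrix $\matr{D}(\alpha)$ and uses the independence of $\alpha$ and $\vect{v}$; conditioning on the event $\matr{D}(\alpha)\ne\zeromat$ keeps $\vect{v}$ uniform and justifies the second Schwartz-Zippel application.

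For the Verifier cost, evaluating $\matr{A}$ at $\alpha$ entrywise by Horner's rule costs $\oh{mn d_{\matr{A}}}$, and similarly $\oh{n\ell d_{\matr{B}}}$ and $\oh{m\ell d_{\matr{C}}}$ for $\matr{B}$ and $\matr{C}$; the subsequent matrix-vector products over $\field$ cost $\oh{(m+n)\ell + m\ell}$, which is absorbed into the evaluation cost. The degree check on $\matr{C}$ costs at most $\oh{m\ell d_{\matr{C}}}$. Summing gives the stated bound, and since the protocol is non-interactive with no Prover response there is no communication to account for.
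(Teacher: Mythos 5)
Your proposal is correct and follows essentially the same route as the paper's proof: completeness via evaluation being a homomorphism, soundness by splitting into the event that $\alpha$ annihilates the nonzero difference $\matr{A}\matr{B}-\matr{C}$ (Schwartz--Zippel on one nonzero entry, bound $(d_{\matr{A}}+d_{\matr{B}})/\#\fieldsubset$) and the Freivalds event on the evaluated matrix (bound $1/\#\fieldsubset$), combined by a union bound, with the Verifier cost dominated by Horner evaluations. Your explicit handling of the case where the degree check fails is a slightly more careful phrasing of a point the paper leaves implicit, but it is not a different argument.
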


\begin{proof}
    Let $\matr{D}$ be the actual product $\matr{D} = \matr{A}\matr{B}$, and let
    $\matr{\Delta} = \matr{D} - \matr{C}$.
    Note that the final
    check of the Verifier is equivalent to \(\matr{\Delta}(\alpha)\vect{v} \checkeq \matr{0}\).

    If $\matr{C} = \matr{D}$, then $\matr{\Delta} = \matr{0}$ and 
    whichever evaluation point $\alpha$ the Verifier picked,
    we have $\matr{\Delta}(\alpha) = \matr{0}$. The degree bound
    checked initially by the Verifier is also valid whenever
    \(\matr{A}\matr{B}=\matr{C}\), hence this
    protocol is complete.

    Otherwise, $\matr{\Delta}$ is a nonzero matrix
    with degree at most \(d_{\matr{A}}+d_{\matr{B}}\).
    There are two events that would lead to the Verifier accepting incorrectly.
    First, if the Verifier picked an evaluation point such that
    $\matr{\Delta}(\alpha)=\matr{0}$, which happens with probability
    at most $(d_{\matr{A}}+d_{\matr{B}})/\#\fieldsubset$ by \cref{lem:rankeval} (with rank lower bound $1$),
    then whichever verification vector $\vect{v}$ is picked afterwards, the Verifier will always
    accept. Otherwise, the Verifier picked an evaluation point for which
    \(\matr{\Delta}(\alpha)\neq\matr{0}\) but
    they picked a unlucky verification vector $\vect{v}$, that is, \(\vect{v}\) is in the right
    kernel of \(\matr{\Delta}(\alpha)\), which happens
    with probability at most $1/\#\fieldsubset$ according to \cref{lem:randinnerprod}.
    The union bound gives the stated bound for the probability
    that the Verifier incorrectly accepts.
    
    The cost for the Verifier comes from evaluating all three matrices
    at $\alpha$ and then performing three
    matrix-vector products over \(\field\).
\end{proof}

Verifying a matrix inverse is a straightforward application of the previous
protocol.

\begin{corollary}
  \label{cor:inversion}
  For $\matr{A} \in \pmatRing{n}{n}$ and $\matr{B} \in \pmatRing{n}{n}$, there
  exists a non-interactive protocol which certifies that $\matr{B}$ is the
  inverse of $\matr{A}$ in Verifier cost $\tsoh{n^2d}$, where \(d =
  \max(1,\deg(\matr{A}),\deg(\matr{B}))\). If \(\matr{B}\ne\matr{A}^{-1}\), the
  probability that the Verifier incorrectly accepts is at most $(2d +
  1)/\#\fieldsubset$.
\end{corollary}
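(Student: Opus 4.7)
The plan is to reduce this directly to the \matmul protocol by certifying that $\matr{A}\matr{B} = \idMat{n}$. Since $\matr{A}$ and $\matr{B}$ are both square of the same size, and $\polRing$ is a commutative ring, the equation $\matr{A}\matr{B} = \idMat{n}$ is equivalent to $\matr{B} = \matr{A}^{-1}$ (it forces $\det(\matr{A})\det(\matr{B}) = 1$, hence both are unimodular and $\matr{B}\matr{A} = \idMat{n}$ follows). So the whole protocol consists in running \matmul with public inputs $\matr{A}$, $\matr{B}$, and $\matr{C} = \idMat{n}$.

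For the cost analysis, I would invoke \cref{thm:polyfreivalds} with $m = n = \ell$, $d_{\matr{A}}, d_{\matr{B}} \le d$, and $d_{\matr{C}} = \max(1,\deg(\idMat{n})) = 1$. The Verifier cost bound $\oh{mnd_{\matr{A}} + n\ell d_{\matr{B}} + m\ell d_{\matr{C}}}$ then simplifies to $\oh{n^2 d + n^2 d + n^2} = \oh{n^2 d}$, as claimed. For soundness, the theorem gives failure probability at most $(d_{\matr{A}} + d_{\matr{B}} + 1)/\#\fieldsubset \le (2d + 1)/\#\fieldsubset$, again matching the statement.

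The only subtlety worth spelling out is the equivalence between $\matr{A}\matr{B} = \idMat{n}$ and $\matr{B} = \matr{A}^{-1}$; everything else is a mechanical substitution into \cref{thm:polyfreivalds}. There is no real obstacle to overcome, which is why the result is phrased as a corollary rather than a theorem.
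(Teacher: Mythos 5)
Your proposal is correct and matches the paper's intended argument: the corollary is exactly a single invocation of \matmul{} with public inputs $\matr{A}$, $\matr{B}$, and $\matr{C}=\idMat{n}$, and the stated cost and failure probability follow by substituting $m=n=\ell$, $d_{\matr{A}},d_{\matr{B}}\le d$, $d_{\matr{C}}=1$ into \cref{thm:polyfreivalds}. Your remark that $\matr{A}\matr{B}=\idMat{n}$ over the commutative ring $\polRing$ already forces $\matr{B}=\matr{A}^{-1}$ (so one-sided verification suffices) is a useful explicit justification of a point the paper leaves implicit.
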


\section{Row space membership}\label{sec:rowmem}

In this section we present the main tool for verification problems that
are essentially about $\F[x]$-modules, which is to determine whether a
given row vector $\vect{v}\in\F[x]^{1\times n}$ is in the $\F[x]$-row space of
a given matrix $\matr{A}\in\F[x]^{m\times n}$.

The approach has two steps. First, \rowmemf{} shows how to
solve the problem in case $\matr{A}$ has full row rank. Then, in
\rowmem{}, we extend this to the general setting by designing
a reduction to several calls to the full row rank case.

\subsection{Full row rank case}

\begin{protocol}
  \caption{\textsf{FullRankRowSpaceMembership}}
  \label{pro:rowmemf}
  \Public{$\matr{A}\in\F[x]^{m\times n}$, $\vect{v}\in\F[x]^{1\times n}$}
  \Certifies{$\vect{v} \in \rowsp_{\F[x]}(\matr{A}) \enspace\text{and}\enspace \rank(\matr{A}) = m$}
  \begin{protocolsteps}
  \step\label{rowmemf:rank}
    \subprotocol{\(\rank(\matr{A}) \checkge m\) using \ranklb{}}
  \step \label{step:rowmem_choosec}
    \verifier{$\vect{c} \randfrom \fieldsubset^{m\times 1}$}
  \toprover{$\vect c$}
  \step \label{step:rowmem_computeug}
    \prover{%
      \(\begin{array}{@{}l@{}}
        \vect{u}\in\F[x]^{1\times m} \text{ s.t. } \vect{u}\matr{A}=\vect{v} \\
       g \gets \vect{u}\vect{c}
     \end{array}\)
    } %
  \toverifier{$g$}
  \step \label{step:rowmem:checkg}
    \verifier{%
    $\begin{array}{@{}l@{}}
       \deg(g) \checkle \\
       \phantom{deg} m\deg(\matr{A}) + \deg(\vect{v}) \\
       \alpha \randfrom \fieldsubset
     \end{array}$
     } %
  \toprover{$\alpha$}
  \step \label{step:rowmem_sendw}
    \prover{%
      $\vect{w} \gets \vect{u}(\alpha) \in \matRing{1}{m}$}
  \toverifier{$\vect{w}$}
  \step \label{step:rowmem_verify}
    \verifier{%
      $\begin{array}{@{}l@{}}
         \vect{w} \matr{A}(\alpha) \checkeq \vect{v}(\alpha) \\
         \vect{w} \vect{c} \checkeq g(\alpha)
       \end{array}$
     } %
  \end{protocolsteps}
\end{protocol}

For this case, we propose \cref{pro:rowmemf}. Before studying its properties,
we emphasize that its soundness crucially depends on the fact that \(\matr{A}\)
has full row rank.
To see why, let \(\matr{A} =
\transpose{[\var \;\; -\var]}\) and \(\vect{v} = [1]\), and write \(\vect{c} =
\transpose{[c_1 \;\; c_2]}\) for the random vector chosen by the Verifier.
Here, \(\matr{A}\) does not have full row rank and \(\vect{v}\) is not in the
row space of \(\matr{A}\); it is however in the rational row space of
\(\matr{A}\). This allows a dishonest Prover to make the Verifier accept by
means of forging a \emph{rational} vector \(\vect{u}\) such that
\(\vect{u}\matr{A} = \vect{v}\) and \(\vect{u}\vect{c}\in\polRing\): the
Verifier cannot detect that \(\vect{u}\) was not over \(\polRing\), since they
only receive \(\vect{u}\vect{c}\) and an evaluation of \(\vect{u}\). Indeed,
any vector of the form \(\vect{u}= [\var^{-1}+f(\var) \;\; f(\var)]\) for some
\(f\in\fracRing\) is such that \(\vect{u}\matr{A} = \vect{v}\). In the likely
event that \(c_1+c_2 \neq 0\), the Prover can choose any polynomial
\(g\in\polRing\) and define \(f = (c_1+c_2)^{-1} (g - c_1\var^{-1})\); then
\(\vect{u}\vect{c}=g\) is a polynomial.

Remark that if \(\matr{A}\) has full row rank and \(\vect{v}\) belongs to the
rational row space of \(\matr{A}\), then we have \emph{uniqueness} of the
(rational) vector \(\vect{u}\) such that \(\vect{u}\matr{A}=\vect{v}\) and thus
there is no flexibility for the Prover on the choice of \(\vect{u}\). In this
case, the following lemma plays a key role in the soundness of
\cref{pro:rowmemf}.

\begin{lemma}\label{lem:ratinnerprod}
  Let $\vect{u}\in\F(x)^{1\times n}$ be a rational fraction vector with
  $\denom(\vect{u}) \ne 1$ and let $\fieldsubset \subseteq \F$ be a finite subset.
  For a vector of scalars $\vect{c}\in\fieldsubset^{n\times 1}$
  chosen uniformly at random,
  the probability that the inner product \(\vect{u}\vect{c}\) is a polynomial,
  i.e., that $\denom(\vect{u}\vect{c}) = 1$, is at most
  $1/\#\fieldsubset$.
\end{lemma}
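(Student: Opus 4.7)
The strategy is to clear denominators and reduce the claim to an application of \Cref{lem:randinnerprod}. Let $g = \denom(\vect{v}) \in \polRing$ and $\vect{u} = g\vect{v} \in \pmatRing{1}{n}$, so that $\vect{v}=\vect{u}/g$ with $\deg(g)\ge 1$ by hypothesis. By minimality of $g$, no monic irreducible factor of $g$ divides every entry of $\vect{u}$: otherwise, dividing both $g$ and $\vect{u}$ by such a factor would produce a strictly smaller monic polynomial that clears all denominators of $\vect{v}$, contradicting the definition of $\denom(\vect{v})$. Fix any monic irreducible $p\in\polRing$ with $p\mid g$, together with some entry $u_j$ of $\vect{u}$ such that $p\nmid u_j$.

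The event $\denom(\vect{v}\vect{w})=1$ means that $\vect{v}\vect{w}=(\vect{u}\vect{w})/g$ lies in $\polRing$, which requires $g\mid \vect{u}\vect{w}$, and in particular $p\mid \vect{u}\vect{w}$. I would then express this divisibility as a linear condition over $\F$. Since $\F[x]/(p)$ is an $\F$-vector space of dimension $\deg(p)$, choosing a basis identifies each reduction $u_i\bmod p$ with a column vector in $\F^{\deg(p)}$, and assembling these columns yields a matrix $\matr{U}\in\matRing{\deg(p)}{n}$. Because the entries of $\vect{w}$ lie in $\F$, the condition $\vect{u}\vect{w}\equiv 0 \pmod{p}$ is exactly $\matr{U}\vect{w}=\zerovect$; moreover $\matr{U}$ is nonzero since its $j$-th column is nonzero by choice of $j$.

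Applying \Cref{lem:randinnerprod} to $\matr{U}$ then gives $\Pr[\matr{U}\vect{w}=\zerovect]\le 1/\#\fieldsubset$, and since the event $\denom(\vect{v}\vect{w})=1$ is contained in the event $\matr{U}\vect{w}=\zerovect$, the desired bound follows. The only nontrivial step is the minimality argument for $\denom(\vect{v})$, which is what guarantees the existence of the nonzero column in $\matr{U}$; once this is in hand, the rest is a routine translation between divisibility in $\polRing$ and $\F$-linear algebra modulo~$p$.
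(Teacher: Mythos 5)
Your proof is correct and follows essentially the same route as the paper's: both reduce modulo an irreducible factor $p$ of $\denom(\vect{v})$, use minimality of the denominator to see that the numerator is nonzero modulo $p$, and conclude with \cref{lem:randinnerprod}. The only (harmless) difference is that the paper applies that lemma directly over the extension field $\F[x]/\langle p\rangle$, whereas you linearize the reduction into a $\deg(p)\times n$ matrix over $\F$ so that the lemma applies verbatim over the base field.
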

\begin{proof}
  Write $f=\denom(\vect{u})$ and $\hat{\vect{u}} = \numer(\vect{u})$.
  By the condition of the lemma we know that
  $\deg(f) \ge 1$.
  We see that the inner product of $\vect{u}$ and $\vect{c}$ is a
  polynomial if and only if the inner product of $\hat{\vect{u}}$ and
  $\vect{c}$ is divisible by $f$.

  Now let $h$ be any irreducible factor of $f$, and consider the
  inner product \(\hat{\vect{u}} \vect{c}\) with
  \(\hat{\vect{u}}\) seen as a vector over the extension field $\F[x]/\langle h\rangle$.
  Because $h \mid \denom(\vect{u})$,
  we know that $\hat{\vect{u}} \bmod h$ is not zero; otherwise the
  degree of the denominator $f$ is not minimal.
  Then, since $\fieldsubset \subseteq \F \subseteq \F[x]/\langle h\rangle$, the
  stated bound follows from \cref{lem:randinnerprod}.
\end{proof}

Another ingredient for our full row rank space membership protocol is a
subroutine the Prover may use to actually compute the solution \(\vect{u}\) to
the linear system, shown in \cref{alg:linsolve}. More precisely, this algorithm
computes the numerator \(\hat{\vect{u}}\) and the corresponding minimal
denominator \(f\). This algorithm will also be used in the protocol for
arbitrary-rank matrices presented in the next section.

\begin{algorithm}
  \caption{Linear system solving with full row rank}
  \label{alg:linsolve}
  \KwIn{\(\matr{A}\in\F[x]^{m\times n}, \vect{v}\in\F[x]^{1\times n}\)}
  \KwOut{Either \lowrank{}, or \nosol{}, or
    \((\hat{\vect{u}},f) \in (\polRing^{1\times m}\times \polRing)\) such that
    \(\hat{\vect{u}}\matr{A}=f\vect{v}\) and \(f\) has minimal degree}
    \(r, i_1,\ldots,i_r \gets\) column rank profile of \(\matr{A}\)
    \label{step:linsolve:crp}\;
    \lIf{\(r < m\)}{\KwRet{\lowrank{}} \hfill \texttt{// below, \(r=m\)}}
    \(\matr{B} \in \pmatRing{r}{r}\gets\) columns \(i_1,\ldots,i_r\) from \(\matr{A}\)\;
    \(\vect{y} \in \pmatRing{1}{r} \gets\) columns \(i_1,\ldots,i_r\) from \(\vect{v}\) \;
    Compute \((\hat{\vect{u}},f) \in (\polRing^{1\times m} \times \polRing)\) such that
    \(f^{-1}\hat{\vect{u}} = \vect{y}\matr{B}^{-1}\) and \(f\) has minimal
    degree, using \citep[Algorithm \texttt{RationalSystemSolve}]{GuSaStVa12}
  \label{step:linsolve:linsolve}\;
  \lIf{\(\hat{\vect{u}}\matr{A} \ne f\vect{v}\)}{\KwRet{\nosol{}}}
  \KwRet{\(\hat{\vect{u}}\)}
\end{algorithm}

As above, to simplify the cost bounds we write \(d_{\matr{A}} \defeq
\max(1,\deg(\matr{A}))\) and \(d_{\vect{v}} \defeq \max(1,\deg(\vect{v}))\).

\begin{lemma}\label{lem:linsolve}
  \Cref{alg:linsolve} uses \(\tssoftoh{m^{\omega-1}nd_{\matr{A}} +
  m^{\omega-1}d_{\vect{v}}}\) operations in \(\field\).  If \(\matr{A}\) has
  rank less than \(m\), then \lowrank{} is returned.  If \(\matr{A}\) has rank
  \(m\) and \(\vect{v}\not\in\rowsp_{\F(x)}(\matr{A})\), then \(\nosol\) is
  returned.  Otherwise, the algorithm returns \((\hat{\vect{u}},f)\) such that
  \(\hat{\vect{u}}\matr{A}=f\vect{v}\) and \(f\) has minimal degree; in
  particular, \(\deg(f) \le m \deg(\matr{A})\) and \(\deg(\hat{\vect{u}}) \le
  (m-1)\deg(\matr{A}) + \deg(\vect{v})\).
\end{lemma}
\begin{proof}
  \citep[Chapter 11]{Zhou12} presents a deterministic algorithm to compute the
  column rank profile on \cref{step:linsolve:crp} using
  \(\tssoftoh{m^{\omega-1}nd_{\matr{A}}}\) field operations.
  This guarantees that \lowrank{} is returned whenever \(\matr{A}\) does not
  have full row rank.

  Now assume that \(\rank(\matr{A})=m\). Then \(\matr{B}\) is nonsingular, and
  \citet{GuSaStVa12} showed how to solve the linear system on
  \cref{step:linsolve:linsolve} deterministically using \(\tssoftoh{m^\omega
  d_{\matr{A}} + m^{\omega-1} d_{\vect{v}}}\) operations; precisely, this cost
  bound is obtained from the results in \citepalias[Section 7]{GuSaStVa12}
  applied with \(d = \max(d_{\matr{A}}, m d_{\vect{v}})\). The degree bounds on
  \(\hat{\vect{u}}\) and \(f\) follow from Cramer's rule.
  
  Let \((\hat{\vect{u}},f)\) be the system solution computed on
  \cref{step:linsolve:linsolve}. If
  \(\vect{v}\not\in\rowsp_{\F(x)}(\matr{A})\), then we must have
  \(\hat{\vect{u}}\matr{A}\neq f\vect{v}\) and thus \(\nosol\) is returned. Now
  assume that \(\vect{v}\in\rowsp_{\F(x)}(\matr{A})\), that is, there exists
  \(\vect{w}\in\F(x)^{1\times m}\) such that \(\vect{w}\matr{A}=\vect{v}\).
  Then we have in particular \(\vect{w}\matr{B}=\vect{y}\). But because
  \(\matr{B}\) is nonsingular, we have \(\vect{w} = \vect{y}\matr{B}^{-1} =
  f^{-1}\hat{\vect{u}}\); hence \(\hat{\vect{u}}\matr{A}=f\vect{v}\).
\end{proof}

Finally, we present the main result of this subsection.

\begin{theorem}\label{thm:rowmemf}
  \Cref{pro:rowmemf} is a complete and probabilistically sound interactive
  protocol which requires \(\oh{md_{\matr{A}}+d_{\vect{v}}}\) communication and
  has Verifier cost \(\oh{mnd_{\matr{A}} + nd_{\vect{v}}}\). If
  \(\rank(\matr{A})=m\) and \(\vect{v}\in\rowsp_{\F[x]}(\matr{A})\), there is
  an algorithm for the Prover with cost \(\tssoftoh{nm^{\omega-1}d_{\matr{A}} +
  m^{\omega-1}d_{\vect{v}}}\). Otherwise, the probability that the Verifier
  incorrectly accepts is at most \((3md_{\matr{A}} + d_{\vect{v}} +
  1)/\#\fieldsubset\).
\end{theorem}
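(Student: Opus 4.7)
The plan is to argue completeness directly and focus the main work on soundness, which I split into two cases depending on whether $\vect{v} \in \rowsp_{\fracRing}(\matr{A})$; the cost bounds then follow by routine tallying.

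For completeness, if $\rank(\matr{A}) = m$ and $\vect{v} \in \rowsp_{\polRing}(\matr{A})$, the unique solution $\vect{u}$ of $\vect{u}\matr{A} = \vect{v}$ is polynomial, and Cramer's rule applied to any $m \times m$ nonsingular submatrix of $\matr{A}$ gives $\deg(\vect{u}) \le (m-1)d_{\matr{A}} + d_{\vect{v}}$, so $\deg(g) \le m d_{\matr{A}} + d_{\vect{v}}$ and the verifier's degree check passes. Both final checks then reduce to evaluating the polynomial identities $\vect{u}\matr{A} = \vect{v}$ and $\vect{u}\vect{c} = g$ at $\alpha$.

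For soundness, suppose $\rank(\matr{A}) = m$ yet $\vect{v} \notin \rowsp_{\polRing}(\matr{A})$. In the \emph{rational-span case} $\vect{v} \in \rowsp_{\fracRing}(\matr{A})$, let $\vect{u}^* \in \fracRing^{1\times m}$ be the unique rational solution, necessarily with $\denom(\vect{u}^*) \ne 1$, and set $g^* = \vect{u}^* \vect{c}$. I control three bad events: (i) $\denom(g^*) = 1$, bounded by $1/\#\fieldsubset$ via \Cref{lem:ratinnerprod} applied to $\vect{u}^*$; (ii) $\rank(\matr{A}(\alpha)) < m$, bounded by $m d_{\matr{A}}/\#\fieldsubset$ via \Cref{lem:rankeval}; and (iii) the second verifier check $g(\alpha) = g^*(\alpha)$ holds. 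Outside of (i) and (ii), the unique $\vect{w}$ satisfying the first check must be $\vect{u}^*(\alpha)$ (well-defined since $\denom(\vect{u}^*)$ divides the determinant of a nonsingular $m\times m$ submatrix of $\matr{A}(\alpha)$, hence does not vanish at $\alpha$), so the second check reduces to event (iii). Writing $g^* = p/q$ in lowest terms with $\deg(p) \le (m-1)d_{\matr{A}} + d_{\vect{v}}$ and $\deg(q) \le m d_{\matr{A}}$, and using the verifier-imposed bound $\deg(g) \le m d_{\matr{A}} + d_{\vect{v}}$, the nonzero polynomial $p - gq$ has degree at most $2 m d_{\matr{A}} + d_{\vect{v}}$, so Schwartz-Zippel bounds (iii) by $(2 m d_{\matr{A}} + d_{\vect{v}})/\#\fieldsubset$. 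A union bound gives the stated $(3 m d_{\matr{A}} + d_{\vect{v}} + 1)/\#\fieldsubset$. In the \emph{non-rational-span case} $\vect{v} \notin \rowsp_{\fracRing}(\matr{A})$, the $(m+1) \times n$ matrix obtained by appending $\vect{v}$ to $\matr{A}$ has rank $m+1$, yielding a nonzero $(m+1)\times(m+1)$ minor of degree at most $m d_{\matr{A}} + d_{\vect{v}}$; with probability at least $1 - (m d_{\matr{A}} + d_{\vect{v}})/\#\fieldsubset$ this minor is nonzero at $\alpha$, forcing $\vect{v}(\alpha) \notin \rowsp_{\F}(\matr{A}(\alpha))$ so that the first check is infeasible. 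This bound is dominated by the rational-span case.

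The cost accounting is routine: the honest Prover calls \Cref{alg:linsolve} per \Cref{lem:linsolve} in $\softoh{nm^{\omega-1}d_{\matr{A}} + m^{\omega-1}d_{\vect{v}}}$, which dominates the polynomial arithmetic needed to form $g = \vect{u}\vect{c}$ and to evaluate $\vect{u}(\alpha)$. The Verifier's dominant cost is Horner evaluation of $\matr{A}$ and $\vect{v}$ at $\alpha$, with the remaining matrix-vector and inner-product operations cheaper. The communicated data are $\vect{c}, \alpha, \vect{w}$ (two vectors of length $m$ and a scalar over $\F$) together with the polynomial $g$, of size $\oh{m d_{\matr{A}} + d_{\vect{v}}}$ thanks to the verifier's degree bound. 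The principal obstacle is the rational-span case of soundness: beyond the standard evaluation-point argument, it simultaneously requires \Cref{lem:ratinnerprod} to rule out an accidentally polynomial $g^*$ and a careful degree accounting for $p - gq$ relative to the verifier-enforced degree cap on $g$.
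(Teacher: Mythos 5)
Your proposal is correct and takes essentially the same route as the paper's proof: the same case split (polynomial row span for completeness, rational-but-not-polynomial span, and outside the rational span), the same ingredients (\cref{lem:linsolve}, \cref{lem:ratinnerprod}, \cref{lem:rankeval}, and a degree count on $p-gq$, which is the paper's $\vect{u}\vect{c}-g$), and the identical union bound $(3md_{\matr{A}}+d_{\vect{v}}+1)/\#\fieldsubset$. Your explicit justification that $\vect{u}^*(\alpha)$ is well defined whenever $\matr{A}(\alpha)$ keeps full rank (via $\denom(\vect{u}^*)$ dividing a nonvanishing $m\times m$ minor) merely fills in a detail the paper leaves implicit.
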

\begin{proof}
  If \(\rank(\matr{A}) < m\), then from \cref{thm:ranklb}, the probability
  that the Verifier incorrectly accepts is at most \(1/\#\fieldsubset\), less
  than the stated bound in the theorem.
  And if $\vect{v}$ is the zero vector, then the protocol easily succeeds
  when the Prover sends all zeros for $g$ and $\vect{w}$; remark that
  \(\vect{u}=\zerovect\) is the only solution to \(\vect{u}\matr{A}=\vect{v}\)
  when \(\matr{A}\) has full row rank.
  So for the remainder of the proof, assume that $\vect{v}$ is nonzero
  and $\matr{A}$ has full row rank $m$.

  The rank check entails \(2m+1\) field elements of communication, and
  the degree check by the Verifier assures that $g$ contains at most
  $md_{\matr{A}} + d_{\vect{v}} + 1$ field elements, bringing the total communication
  in the protocol
  to at most $m(d_{\matr{A}}+4) + d_{\vect{v}} + 3$ field elements.

  The work of the Verifier is dominated by computing the evaluations
  $\matr{A}(\alpha)$ and $\vect{v}(\alpha)$ on the
  last step. Using Horner's method the total cost for these is
  $\oh{mnd_{\matr{A}} + nd_{\vect{v}}}$, as claimed.

  We now divide the proof into three cases, depending on whether $\vect{v}$
  is in the polynomial row space of $\matr{A}$ (as checked by the
  protocol), the rational row space of $\matr{A}$, or neither.

  \paragraph{Case 1: $\vect{v}\in\rowsp_{\F[x]}(\matr{A})$}

  Here we want to prove that an honest Prover and Verifier succeed with
  costs as stated in the theorem.

  The vector $\vect{u}$ as defined in \cref{step:rowmem_computeug}
  must exist by the definition of $\rowsp_{\F[x]}$,
  and computing $\vect{u}$ can be completed by the Verifier according to
  \cref{lem:linsolve} in the stated cost bound.

  If the computations of $\vect{u}$ and $g$ at \cref{step:rowmem_computeug} and
  of \(\vect{w}\) at \cref{step:rowmem_sendw} are performed correctly by the
  Prover, then the Verifier's checks on \cref{step:rowmem_verify} will succeed
  for any choice of $\alpha$. Note also that in this case, \(\deg(g) =
  \deg(\vect{u}\vect{c}) \le \deg(\vect{u})\), and \(\deg(\vect{u}) \le
  m\deg(\matr{A}) + \deg(\vect{v})\) holds (see \cref{lem:linsolve}),
  hence the degree check at \cref{step:rowmem:checkg}.

  This proves the completeness of the protocol.

  \paragraph{Case 2: $\vect{v}\in\rowsp_{\F(x)}(\matr{A}) \setminus
    \rowsp_{\F[x]}(\matr{A})$}

  In this case, the assertion being verified is false, and we want to
  show probabilistic soundness.

  Let $\vect{c}\in\F^{m\times 1}$ be the random vector chosen by the Verifier on
  \cref{step:rowmem_choosec}.
  Since \(\matr{A}\) has full row rank, there is a unique rational solution
  $\vect{u}\in\F(x)^{1\times m}$ such that $\vect{u}\matr{A}=\vect{v}$, and
  by the assumption of this case we have \(\denom(\vect{u}) \neq 1\); besides,
  \cref{lem:linsolve} ensures \(\deg(\denom(\vect{u})) \le md_{\matr{A}}\)
  and \(\deg(\numer(\vect{u})) \le (m-1)d_{\matr{A}} + d_{\vect{v}}\).
  Then, \cref{lem:ratinnerprod} tells us that the probability that
  $\vect{u}\vect{c}$ is a polynomial is at most $1/\#\fieldsubset$.
  Let \(g\) be the polynomial sent by the Prover at \cref{step:rowmem_computeug}.
  If $\vect{u}\vect{c}$ is not a polynomial, then
  $\vect{u}\vect{c}-g$ is a nonzero rational fraction with numerator
  degree at most
  \begin{equation}\label{eqn:rowmem:degbound}
    \max(\deg(\numer(\vect{u})), \deg(g) + \deg(\denom(\vect{u}))) \le 2md_{\matr{A}} + d_{\vect{v}}.
  \end{equation}

  From \cref{lem:rankeval}, the probability that
  $\matr{A}(\alpha)$ does not have full row rank is at most $md_{\matr{A}}/\#\fieldsubset$.
  Otherwise, the vector $\vect{w}=\vect{u}(\alpha)$ is the
  unique solution to $\vect{w}\matr{A}(\alpha)=\vect{v}(\alpha)$, so the
  Prover is obliged to send this $\vect{w}$ on
  \cref{step:rowmem_sendw}.

  Then, if the Verifier incorrectly accepts, we must have
  $\vect{w}\vect{c} = g(\alpha)$, which means
  $\vect{u}(\alpha)\vect{c} = g(\alpha)$, or in other words,
  \(\alpha\) is a root of \(\vect{u}\vect{c} - g\).
  The degree bound in \cref{eqn:rowmem:degbound} gives an upper
  bound on the number of such roots $\alpha\in\F$.

  Therefore the Verifier accepts only when either
  $\vect{u}\vect{c}\in\F[x]$, or $\matr{A}(\alpha)$ is
  singular, or $\alpha$ is a root of $\vect{u}\vect{c}-g$, which by the
  union bound has probability at most
  $(3md_{\matr{A}}+d_{\vect{v}}+1)/\#\fieldsubset$, as stated.

  \paragraph{Case 3: $\vect{v}\notin\rowsp_{\F(x)}(\matr{A})$}

  Again, the assertion being verified is false, and our goal is to
  prove probabilistic soundness.
  As with the last case, assume by way
  of contradiction that the Verifier accepts.

  Consider the augmented matrix
  \[
    \tilde{\matr{A}} \defeq
    \begin{bmatrix}
      \matr{A} \\ \vect{v}
    \end{bmatrix} \in \pmatRing{(m+1)}{n}.
  \]
  By the assumption of this case, $\rank(\tilde{\matr{A}}) =
  \rank(\matr{A}) + 1 = m+1$.
  But the vector $\vect{w}$ provided at \cref{step:rowmem_verify} is such that
  $\vect{w}\matr{A}(\alpha) = \vect{v}(\alpha)$: it
  corresponds to a nonzero vector \([-\vect{w} \;\; 1]\) in the left kernel of
  $\tilde{\matr{A}}(\alpha)$, which therefore has rank at most $m$.

  Since all \((m+1)\times(m+1)\) minors of \(\tilde{\matr{A}}\) have
  degree at most \(md_{\matr{A}} + d_{\vect{v}}\),
  the proof of \cref{lem:rankeval} shows that the probability that
  $\rank(\tilde{\matr{A}}(\alpha)) \le m$ is at most
  $(md_{\matr{A}} + d_{\vect{v}})/\#\fieldsubset$.
\end{proof}

%%%%%%%%%%%%%%%%%%%%%%%%%%%%%%%%%%%%%%%%%%%%%%%%%%%%%%%%%%%%%%%%%%
\subsection{Arbitrary rank case}

Now we move to the general case of a matrix $\matr{A}$ with arbitrary rank \(r\).

The idea behind our protocol is inspired by \citet{MS04}.
We make use of the full row rank case by
considering a matrix \(\matr{C}\in\F[x]^{r\times m}\) such that
\(\matr{C}\matr{A}\) has full row rank. Thus \(\matr{C}\matr{A}\) has the same
rational row space as \(\matr{A}\),
and if \(\vect{v}\in\rowsp_{\F[x]}(\matr{A})\), then there is a unique rational vector \(\vect{w}\in\F(x)^{1\times r}\) such
that \(\vect{w}\matr{C}\matr{A}=\vect{v}\). In particular, for \(\dd =
\denom(\vect{w})\) we have \(\dd \vect{v} \in \rowsp_{\polRing}(\matr{A})\), and
therefore if \(\dd=1\) the verification is already complete.

Although it might be that \(\vect{w}\) has a nontrivial denominator \(\dd\),
this approach can still be used for verification by considering \emph{several}
such matrices \(\matr{C}_1,\ldots,\matr{C}_t\) and rational vectors
\(\vect{w}_1,\ldots,\vect{w}_t\) with denominators \(\dd_1,\ldots,\dd_t\).
Indeed, we will see that these matrices can be chosen such that the greatest
common divisor of \(\dd_1,\ldots,\dd_t\) is \(1\); as we show in the next
lemma, this implies \(\vect{v}\in\rowsp_{\F[x]}(\matr{A})\).

\begin{lemma}
  \label{lem:denom_gcd}
  Let \(\matr{A}\in\pmatRing{m}{n}\) and \(\vect{v}\in\pmatRing{1}{n}\). Let
  $\dd_1,\ldots,\dd_t\in\F[x]$ be such that
  \(\dd_i\vect{v}\in\rowsp_{\polRing}(\matr{A})\) for \(1 \le i \le t\). If
  \(\gcd(\dd_1,\ldots,\dd_t)=1\), then \(\vect{v}\in\rowsp_{\polRing}(\matr{A})\).
\end{lemma}
\begin{proof}
  The gcd assumption implies that there exist \(u_1,\ldots,u_t\in\polRing\)
  such that \(u_1\dd_1 + \cdots + u_t\dd_t = 1\). It directly follows that
  \(\vect{v} = u_1 (\dd_1\vect{v}) + \cdots + u_t (\dd_t \vect{v})\) belongs to
  \(\rowsp_{\polRing}(\matr{A})\).
\end{proof}

Before giving the full protocol for row membership, we first present a
subprotocol \progcd{} to confirm that the greatest common divisor of
a set of polynomials is 1.

\begin{protocol}
  \caption{\textsf{CoPrime}}
  \label{pro:gcd}
  \Public{\(t\ge 2\) polynomials \(\dd_1,\ldots,\dd_t\in\F[x]\)}
  \Certifies{\(\gcd(\dd_1,\ldots,\dd_t) = 1\)}
  \begin{protocolsteps}
  \step
    \prover{\(\begin{array}{@{}l@{}}
      \text{Compute polynomials } s_1,s_2\in\F[x]\\
      \text{and scalars } \beta_3,\ldots,\beta_t\in\F\\
      \text{s.t. } \dd_1s_1 + hs_2 = 1,\\
      \deg(s_1) < \deg(h), \\
      \text{and } \deg(s_2) < \deg(\dd_1), \\
      \text{where } h \defeq \dd_2 + \sum_{i=3}^t \beta_i\dd_i
    \end{array}\)}
    \toverifier{\(s_1,s_2,\beta_3,\ldots,\beta_t\)}
  \step
    \verifier{\hspace*{-1em}\(\begin{array}{@{}l@{}}
      \deg(s_1) \checklt \max_{i\ge 2}\deg(\dd_i) \\
      \deg(s_2) \checklt \deg(\dd_1) \\
      \alpha \randfrom \fieldsubset \\
      h_\alpha \gets \dd_2(\alpha) + \sum_{i=3}^t \beta_i\dd_i(\alpha) \\
      \dd_1(\alpha)s_1(\alpha) + h_\alpha s_2(\alpha) \checkeq 1
    \end{array}\)}
  \end{protocolsteps}
\end{protocol}

\begin{lemma}\label{lem:gcd}
  Let \(d=\max_i\deg(\dd_i)\), and suppose \(\#\fieldsubset\ge 2d\).
  Then \cref{pro:gcd} is a complete and probabilistically sound interactive
  protocol which requires \(\oh{d+t}\) communication and has Verifier
  cost \(\oh{dt}\). If \(\gcd(\dd_1,\ldots,\dd_t) = 1\), then there is a Las
  Vegas randomized algorithm for the Prover with expected cost bound
  \(\softoh{dt}\). Otherwise, the probability that the Verifier incorrectly
  accepts is at most \((2d-1)/\#\fieldsubset\).
\end{lemma}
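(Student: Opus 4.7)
The plan is to handle four components in order: the communication and Verifier cost bounds, the soundness bound when $\gcd(f_1,\ldots,f_t)\neq 1$, and finally the completeness together with the Las Vegas Prover algorithm when $\gcd(f_1,\ldots,f_t)=1$. The costs are immediate bookkeeping: the Prover sends two polynomials of degree strictly less than $d$ and $t-2$ field scalars, giving $\oh{d+t}$ communication, while the Verifier evaluates $f_1,f_2,\ldots,f_t$ at $\alpha$ by Horner's rule ($\oh{d}$ each, $\oh{dt}$ in total) then combines these with $s_1(\alpha)$, $s_2(\alpha)$, and the $\beta_i$ in $\oh{d+t}$ further operations.

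For soundness, I would set $g=\gcd(f_1,\ldots,f_t)$ and observe that for any choice of $\beta_3,\ldots,\beta_t$ the polynomial $g$ also divides $h=f_2+\sum_{i\ge 3}\beta_i f_i$, so $g$ divides $f_1 s_1+h s_2$ for every polynomial $s_1,s_2$. If $\deg(g)\ge 1$, this forces $P(x)\defeq f_1 s_1 + h s_2 - 1$ to be a nonzero polynomial, and the degree checks enforced by the Verifier yield $\deg(f_1 s_1)\le d+\max_{i\ge 2}\deg(f_i)-1\le 2d-1$ and symmetrically $\deg(h s_2)\le 2d-1$. Thus $\deg(P)\le 2d-1$, and the Schwartz--Zippel lemma applied to the fixed polynomial $P$ gives the stated bound $(2d-1)/\#\fieldsubset$ on the probability of incorrect acceptance.

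Completeness together with the Prover algorithm is the substantial part. Assuming $\gcd(f_1,\ldots,f_t)=1$, the idea is to pick $\beta_3,\ldots,\beta_t$ uniformly from $\fieldsubset$, verify that $\gcd(f_1,h)=1$, and if so apply the extended Euclidean algorithm to $(f_1,h)$ to obtain $s_1,s_2$ with $f_1 s_1+h s_2=1$ and $\deg(s_1)<\deg(h)\le\max_{i\ge 2}\deg(f_i)$, $\deg(s_2)<\deg(f_1)$. One round of this costs $\oh{dt}$ to form $h$ and $\softoh{d}$ for the extended gcd. The core claim I need is that a single trial succeeds with probability at least $1/2$, so the expected number of trials is $\oh{1}$ and the total expected cost is $\softoh{dt}$.

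The main obstacle is precisely this probability bound, which I would reduce to a union bound over the (at most $d$) distinct monic irreducible factors $p$ of $f_1$: for each such $p$, I want to show $\Pr[p\mid h]\le 1/\#\fieldsubset$. Fix such a $p$; since $\gcd(f_1,\ldots,f_t)=1$ and $p\mid f_1$, some $f_j$ with $j\ge 2$ satisfies $f_j\not\equiv 0\pmod{p}$. Reducing mod $p$ shows that $h\bmod p$ is an affine function of $(\beta_3,\ldots,\beta_t)$ taking values in $\F[x]/\langle p\rangle$ whose coefficient vector is not identically zero (either because $f_j\not\equiv 0$ for some $j\ge 3$, or because $f_2\not\equiv 0\pmod p$ and all other $f_i\equiv 0\pmod p$). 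Writing the $\deg(p)$ coordinates of this value in an $\F$-basis of $\F[x]/\langle p\rangle$ gives a nonzero affine polynomial in the $\beta_i$ over $\F$, and Schwartz--Zippel supplies the $1/\#\fieldsubset$ bound. Summing over the at most $d$ irreducible factors of $f_1$ gives failure probability at most $d/\#\fieldsubset\le 1/2$ under the hypothesis $\#\fieldsubset\ge 2d$, closing the argument.
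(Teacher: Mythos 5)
Your proposal is correct and follows essentially the same route as the paper's proof: soundness comes from noting that \(\gcd(f_1,\ldots,f_t)\) divides \(f_1s_1+hs_2\) for any Prover message, so \(f_1s_1+hs_2-1\) is a nonzero polynomial of degree at most \(2d-1\) by the Verifier's degree checks, and completeness/Prover cost come from choosing random \(\beta_3,\ldots,\beta_t\) so that \(\gcd(f_1,h)=1\) with probability at least \(\tfrac12\) and then running the extended Euclidean algorithm. The only difference is that where the paper simply cites the standard result (von zur Gathen and Gerhard, Theorem 6.46) for the bound \(d/\#\fieldsubset\) on the probability that \(\gcd(f_1,h)\ne\gcd(f_1,\ldots,f_t)\), you prove it inline via a union bound over the irreducible factors \(p\) of \(f_1\) and Schwartz--Zippel over \(\F[x]/\langle p\rangle\), which is a valid self-contained rendering of that same argument.
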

\begin{proof}
  The communication and Verifier costs are clear.

  Write \(g \defeq \gcd(\dd_1,\ldots,\dd_t)\), and suppose first that
  \(g\ne 1\). Since \(g\) divides \(\dd_1s_1+hs_2\), the polynomial
  \(\dd_1s_1+hs_2-1\) is nonzero and has degree at most \(2d-1\).
  If the Verifier incorrectly accepts, then $\alpha$ must be a root of
  this polynomial, which justifies the probability claim.

  If \(g=1\), then a well-known argument
  \citep[Theorem 6.46]{vzGG13}
  says that, for \(\beta_3,\ldots,\beta_t\) chosen randomly from a
  subset \(\fieldsubset\subseteq\F\), the probability that
  \(\gcd(\dd_1,h)\ne \gcd(\dd_1,\ldots,\dd_t)\) is at most
  \(d/\#\fieldsubset\). Based on the assumption that
  \(\#\fieldsubset\ge 2d\), the Prover can find such a tuple
  \(\beta_3,\ldots,\beta_t\) after expected \(O(1)\) iterations. Then
  computing the B\'ezout coefficients \(s_1,s_2\) is done via
  the fast extended Euclidean algorithm on \(\dd_1\) and \(h\),
  which costs \(\softoh{d t}\).
\end{proof}

\Cref{pro:rowmem} shows an interactive protocol verifying row space membership.
For free (and as a necessary aspect of the protocol), the rank \(\rho\) is
also verified.

The Prover first selects \(t\)
matrices \(\matr{C}_i\in\field^{r \times m}\)
such that \(\matr{C}_i\matr{A}\) has full row rank \(r=\rank(\matr{A})\) and
the corresponding denominators \(\dd_i\) of the
rational solutions to \(\vect{w}\matr{C}_i\matr{A}=\vect{v}\) have no
common factor.

The Verifier then confirms that the gcd of all
denominators is 1 using \progcd{}.
Using \rowmemf{}, the Verifier also
confirms that each \(\matr{C}_i\matr{A}\) has full rank and
each \(\dd_i \vect{v}\) is in the row space of \(\matr{C}_i\matr{A}\)
and therefore in the row space of \(\matr{A}\) as well; by \cref{lem:denom_gcd} this ensures
that \(\vect{v}\) is itself in the row space of \(\matr{A}\).

To save communication costs, the matrices \(\matr{C}_i\) have a certain
structure:

\begin{definition} \label{def:subtoep}
  A matrix \(\matr{C}\in\field^{m\times n}\) is a \emph{sub-Toeplitz}
  matrix if \(m\le n\) and \(\matr{C}\)
  consists of $m$ rows selected out of a full $n\times n$
  Toeplitz matrix.
\end{definition}

Note that we can always write such a matrix \(\matr{C}\) as a sub-permutation
matrix \(\matr{S} \in \{0,1\}^{m \times n}\) times the full Toeplitz
matrix \(\matr{T}\in\field^{n\times n}\), i.e.,
\(\matr{C} = \matr{S}\matr{T}\)%
\footnote{We hope that the reader will forgive us for overloading the capital letter S:
a bold \(\matr{S}\) always refers to this sub-permutation matrix, while a sans-serif
\(\fieldsubset\) refers to a subset of the field \(\field\) used to select random elements.}.
The benefit for us is in the communication savings:

\begin{lemma}\label{lem:subtoepcost}
  An \(m\times n\) sub-Toeplitz matrix \(\matr{C}\) can be sent
  with \(\oh{n}\) communication.
\end{lemma}
\begin{proof}
  Writing \(\matr{C}=\matr{S}\matr{T}\) as above,
  simply send the \(2n-1\) entries of the full Toeplitz matrix
  \(\matr{T}\) and the \(m\) row indices selected by \(\matr{S}\).
\end{proof}

The number \(t\) of sub-Toeplitz matrices sent must be large enough,
according to the field size, so that the Prover can actually find them
with the required properties
(see \cref{alg:rowmemprover} below). This value \(t\) is computed
by the Verifier and Prover independently as shown in
\cref{step:rowmem:gett}, where we use the slight abuse of notation
that, when \(\field\) is infinite,
\(\log_{\#\field} \alpha = 0\) for any positive finite value $\alpha$.

\begin{protocol}
  \caption{\textsf{RowSpaceMembership}}
  \label{pro:rowmem}
  \Public{\(\matr{A}\in\F[x]^{m\times n}\),
    \(\vect{v}\in\F[x]^{1\times n}\), \(\rho \in \NN\)}
  \Certifies{\(\vect{v} \in \rowsp_{\F[x]}(\matr{A})
    \enspace\text{and}\enspace \rank(\matr{A})=\rho\)}
  \begin{protocolsteps}
  \step \label{step:rowmem:rankub}
    \subprotocol{ \(\rank(\matr{A}) \checkle \rho\) using \rankub{}}
  \vspace{0.2cm}
  \step \label{step:rowmem:gett}
    \verifier{\hspace*{-2.5 em}\(\begin{array}{@{}l@{}}
      \rho \checkle \min(m,n) \\
      t \gets 1 + \\
      \hspace{.5 em} \max\big(1, \lceil \log_{\#\field/\rho}(2\rho\deg(\matr{A}))\rceil\big)
      \\
    \end{array}\)}
  \step \label{step:rowmem:pcomp}
    \prover{\(\begin{array}{@{}l@{}}
      \text{Compute sub-Toeplitz } \matr{C}_1,\ldots,\matr{C}_t \in \F^{\rho\times m}\\
      \text{and polynomials } \dd_1,\ldots,\dd_t \in\F[x]\\
      \text{s.t. } \forall i, \rank(\matr{C}_i\matr{A})=\rho, \\
      \text{and } \forall i, \dd_i\vect{v}\in\rowsp_{\F[x]}(\matr{C}_i\matr{A}), \\
      \text{and } \gcd(\dd_1,\ldots,\dd_t) = 1
    \end{array}\)}
    \toverifier{\(\matr{C}_1,\ldots,\matr{C}_t,\dd_1,\ldots,\dd_t\)}
  \step \label{step:rowmem:degcheck}
  \verifier{
      \(\hspace*{-2.5 em}\begin{array}{@{}l@{}}
      \forall i, \deg(\dd_i) \checkle \rho\deg(\matr{A})
    \end{array}\)}
  \vspace{0.2cm} \step \label{step:rowmem:gcdcheck}
    \subprotocol{\(\gcd(\dd_1,\ldots,\dd_t)\checkeq 1\) using \progcd{}}
  \vspace{0.2cm} \step\label{step:rowmem:rs2}
  \subprotocol{
    \begin{tabular}{l}
      \textbf{for} \(i=1,\ldots,t\) \textbf{do}\\
      \hspace{1em}  \(\dd_i\vect{v} \checkin\rowsp_{\F[x]}(\matr{C}_i\matr{A})\)
        and \(\rank(\matr{C}_i\matr{A}) \checkeq \rho\)\\
      \hspace{1em} using \rowmemf{}
    \end{tabular}
  }
  \end{protocolsteps}
\end{protocol}

We now proceed to show how the Prover can actually find the values required on
\cref{step:rowmem:pcomp}. We write \(r=\rank(\matr{A})\); if the Prover is
honest, then in fact \(r = \rho\). The next lemma is inspired from
\citep{MS04}.

\begin{lemma}\label{lem:pdiv}
  Let \(\matr{A}\in\F[x]^{m\times n}\) with rank \(r\);
  \(\vect{v}\in\rowsp_{\F[x]}(\matr{A})\);
  \(\matr{S}\in\{0,1\}^{r \times m}\) be a selection of \(r\) out of
  \(m\) rows;
  \(p\in\F[x]\) be an irreducible polynomial;
  and \(\matr{T}\in\F^{m\times m}\) be a Toeplitz matrix with entries
  chosen independently and uniformly at random from a finite subset
  \fieldsubset of \F.
  Then either \(\matr{S}\matr{T}\matr{A}\) \emph{always} has rank
  strictly below \(r\), or for any rational solution
  \(\vect{w}\in\F(x)^{1\times r}\) to
  \(\vect{w}\matr{S}\matr{T}\matr{A}=\vect{v}\), the probability that
  \(\rank(\matr{S}\matr{T}\matr{A}) < r\) or that
  $p$ divides \(\denom(\vect{w})\) is at most
  \(r/\#\fieldsubset\).
\end{lemma}
\begin{proof}
  Let \(\matr{\hat{T}}\) be a generic \(m\times m\) Toeplitz
  matrix, defined by \(2m-1\) indeterminates
  $z_1,\ldots,z_{2m-1}$. Because \(\idMat{m}\) is an evaluation of
  \(\matr{\hat{T}}\), then clearly
  \(\rank(\matr{\hat{T}}\matr{A}) = \rank(\matr{A}) = r\),
  and furthermore \(\rank(\matr{S}\matr{\hat{T}}\matr{A})=r\) if and only if
  \(\matr{S}\) selects $r$ linearly independent rows from
  \(\matr{\hat{T}}\matr{A}\).

  So for the remainder assume that
  \(\rank(\matr{S}\matr{\hat{T}}\matr{A})\)
  is nonsingular over \((\field[x])[z_1,\ldots,z_{2m-1}]\); otherwise
  \(\rank(\matr{S}\matr{T}\matr{A}) < r\)
  for any choice of \(\matr{T}\), and we are done.

  The structure of the proof is as follows. We first show the existence of
  a unimodular-completable matrix \(\matr{U}\in\pmatRing{n}{r}\) such that
  \(\matr{S}\matr{T}\matr{A}\) and \(\matr{S}\matr{T}\matr{U}\) are closely
  related: in particular, they both have full rank \(r\) if and only if
  the latter has non-zero determinant, and \(p\) divides \(\vect{w}\) only
  when this determinant is divisible by \(p\). The proof proceeds to
  demonstrate these properties,
  as well as the fact that \(\matr{S}\matr{\hat{T}}\matr{U}\) has nonzero determinant
  generically, and therefore with high probability for a random choice of \(\matr{T}\).

  Let \(\matr{P}\in\{0,1\}^{n\times r}\) be a sub-permutation matrix
  which selects \(r\) linearly independent columns from
  \(\matr{S}\matr{\hat{T}}\matr{A}\). Then \(\rank(\matr{A}\matr{P})=r\)
  and we consider a factorization
  \(\matr{A}\matr{P} = \matr{U}\matr{B}\), where
  \begin{itemize}
    \item \(\matr{B}\in\F[x]^{r\times r}\) is a row basis of
      \(\matr{A}\matr{P}\) (and therefore \(\matr{B}\) is nonsingular);
      and
    \item \(\matr{U}\in\F[x]^{m\times r}\) can be completed to a square
      unimodular matrix, meaning there exists some matrix
      \(\matr{V}\in\F[x]^{m\times(m-r)}\) such that
      \(\det([\matr{U}\;\vert\;\matr{V}])\in\F\setminus\{0\}\).
  \end{itemize}
  Such a factorization always exists: if \(\matr{\hat{B}} \in \pmatRing{r}{n}\)
  is any row basis of \(\matr{A}\), then there is a unimodular matrix
  \([\matr{U}\;\vert\;\matr{V}] \in \pmatRing{m}{m}\) such that
  \([\matr{U}\;\vert\;\matr{V}]\transpose{[\transpose{\matr{\hat{B}}}\;\vert\;\matr{0}]}
  = \matr{U}\matr{\hat{B}} = \matr{A}\); and then
  we have \(\matr{U}\matr{B} = \matr{A}\matr{P}\)
  where \(\matr{B} \defeq \matr{\hat{B}}\matr{P}\).  It is easily verified that
  \(\matr{B}\) has full row rank and the same \(\polRing\)-row space as
  \(\matr{A}\matr{P}\); that is, \(\matr{B}\) is a row basis of
  \(\matr{A}\matr{P}\).

  From this factorization and the fact that \(\matr{B}\) is nonsingular, we know that
  \[
    \rank(\matr{A}\matr{P})
    = \rank(\matr{S}\matr{\hat{T}}\matr{A}\matr{P})
    = \rank(\matr{S}\matr{\hat{T}}\matr{U}\matr{B})
    = \rank(\matr{S}\matr{\hat{T}}\matr{U})
    = r
  \]
  over the ring \((\field[x])[z_1,\ldots,z_{2m-1}]\).

  Now because
  \([\matr{U}\;\vert\;\matr{V}]\) is unimodular, it is always nonsingular over
  the extension field \(\F[x]/\langle p\rangle\), and therefore \(\rank(\matr{U})=r\)
  over \(\F[x]/\langle p\rangle\).
  Since the entries of \(\matr{S}\matr{\hat{T}}\) do not contain \(x\)
  and from the rank condition above, this means that
  \(\matr{S}\matr{\hat{T}}\matr{U}\) is nonsingular over
  \((\field[x]/\langle p \rangle)[z_1,\ldots,z_{2m-1}]\) for any choice
  of the polynomial \(p\).

  The determinant \(\det(\matr{S}\matr{\hat{T}}\matr{U})\) is therefore
  a nonzero polynomial in \(z_1,\ldots,z_{2m-1}\) over
  \(\field[x]/\langle p \rangle\) with total degree at most \(r\).
  Then, by the DeMillo-Lipton-Schwartz-Zippel lemma, the probability that
  \(\det(\matr{S}\matr{T}\matr{U})\bmod p = 0\) is at most \(r/\#\fieldsubset\).

  Connecting this back to \(\matr{A}\), if \(p\nmid\det(\matr{S}\matr{T}\matr{U})\), then
  we have
  \[
    r = \rank(\matr{S}\matr{T}\matr{U})
    = \rank(\matr{S}\matr{T}\matr{U}\matr{B})
    = \rank(\matr{S}\matr{T}\matr{A}\matr{P})
    \le \rank(\matr{S}\matr{T}\matr{A})
    \le r,
  \]
  and hence \(\matr{S}\matr{T}\matr{A}\) has full row rank \(r\).

  Finally, we show that \(\denom(\vect{w})\) divides \(\det(\matr{S}\matr{T}\matr{U})\);
  this implies \(p \nmid \denom(\vect{w})\).
  Recall that \(\matr{B}\) is nonsingular with the same \(\polRing\)-row space
  as \(\matr{A}\matr{P}\);
  then, because \(\vect{v}\in\rowsp_{\F[x]}(\matr{A})\),
  we have \(\vect{v}\matr{P}\in\rowsp_{\F[x]}(\matr{B})\),
  so there exists \(\vect{y}\in\F[x]^{1\times r}\) such that
  \(\vect{y}\matr{B} = \vect{v}\matr{P}\).
  In addition, since \(\matr{S}\matr{T}\matr{U}\) is nonsingular, there exists an
  \emph{adjugate} matrix \(\matr{D}\in\F[x]^{r\times r}\) such that
  \(\det(\matr{S}\matr{T}\matr{U})(\matr{S}\matr{T}\matr{U})^{-1}=\matr{D}\).
  Putting these facts together, we have
  \begin{align*}
    \vect{w}\matr{S}\matr{T}\matr{A} &= \vect{v} \\
    \vect{w}\matr{S}\matr{T}\matr{A}\matr{P} &= \vect{v}\matr{P} \\
    \vect{w}\matr{S}\matr{T}\matr{U}\matr{B} &= \vect{y}\matr{B} \\
    \vect{w}\matr{S}\matr{T}\matr{U} &= \vect{y} \\
    \vect{w}\det(\matr{S}\matr{T}\matr{U}) &= \vect{y}\matr{D}.
  \end{align*}
  Because the right-hand side of the last equation has entries in
  $\F[x]$, then so does the left-land side, which means that
  \(\det(\matr{S}\matr{T}\matr{U})\) is a multiple of \(\denom(\vect{w})\).
  Hence \(\denom(\vect{w})\)
  is divisible by \(p\) only if \(\det(\matr{S}\matr{T}\matr{U})\) is
  divisible by \(p\), which we already established occurs with
  probability at most \(r/\#\fieldsubset\).
\end{proof}

Repeatedly applying the previous lemma, involving calls to the rational
linear solver of \cref{alg:linsolve}, leads to a Las Vegas randomized
algorithm for an honest Prover.

Here we require the Prover to know a
finite subset \(\fieldsubset \subseteq \F\). Because this set is never communicated
nor part of the public information, it is not necessarily the same as any subset
\(\fieldsubset\) used by the Verifier in other protocols. In order to match with
\Cref{pro:rowmem}, the Prover should choose \(\fieldsubset = \field\) if
\(\field\) is finite, and otherwise \(\#\fieldsubset \ge 2r^2 \deg(\matr{A})\).

\begin{algorithm}
  \caption{Honest Prover for \rowmem{}}
  \label{alg:rowmemprover}
  \KwIn{\(
    \matr{A}\in\F[x]^{m\times n}\) with rank \(r\), \(\vect{v}\in\rowsp_{\F[x]}(\matr{A})\),
    finite \(\fieldsubset \subseteq \F\)
    }
  \KwOut{
    \(\matr{C}_1,\ldots,\matr{C}_t,\dd_1,\ldots,\dd_t\) satisfying the
    conditions of \cref{step:rowmem:pcomp} from \cref{pro:rowmem}
  }
  \(t \gets 1+\lceil \log_{\#\fieldsubset/r}(2r\deg(\matr{A}))\rceil\)\;
  \Repeat{\(\rank(\matr{T}_1\matr{A}) = r\)\label{step:rmp:rank}}{
      \(\matr{T}_1\gets\) random \(m\times m\) Toeplitz matrix with
        entries from \fieldsubset\;
  }
  \(\matr{S}\in\{0,1\}^{r\times m} \gets \) selection of \(r\) linearly
    independent rows from \(\matr{T}_1\matr{A}\)
    \label{step:rmp:rrp}\;
  \(w_1 \gets\) solution to \(\vect{w}_1\matr{S}\matr{T}_1\matr{A}=\vect{v}\), using
        \cref{alg:linsolve} \label{step:rmp:solve1} \;
  \Repeat(\label{step:rmp:gcd1}){\(\gcd(\denom(\vect{w}_1),\ldots,\denom(\vect{w}_t))=1\)\label{step:rmp:gcd2}}{
    \(i \gets 2\)\;
    \While{\(i \le t\)\label{step:rmp:nstoep1}}{
      \(\matr{T}_i\gets\) random \(m\times m\) Toeplitz matrix with
        entries from \fieldsubset\;
      \(\vect{w}_i\gets\) solution to
        \(\vect{w}_i\matr{S}\matr{T}_i\matr{A}=\vect{v}\), using
        \cref{alg:linsolve} \;
      \lIf{\(\vect{w}_i\) is not \lowrank{}}{
        \(i \gets i+1\) \label{step:rmp:nstoep2}
      }
    }
  }
  \KwRet{\(\matr{S}\matr{T}_1,\ldots,\matr{S}\matr{T}_t\) and
  \(\denom(\vect{w}_1),\ldots,\denom(\vect{w}_t)\)}
\end{algorithm}

\begin{lemma}\label{lem:rowmemcomp}
  If \(\vect{v}\in\rowsp_{\F[x]}(\matr{A})\) and
  \(\#\fieldsubset \ge 2r\) where \(r\defeq\rank(\matr{A})\), then \cref{alg:rowmemprover}
  is a correct Las Vegas randomized algorithm with expected
  cost bound \(\tssoftoh{mnr^{\omega-2}d_{\matr{A}} + r^{\omega-1}d_{\vect{v}}}\).
\end{lemma}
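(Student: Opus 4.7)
The plan is to verify that Algorithm \ref{alg:rowmemprover} produces valid output (Las Vegas correctness) and to bound its expected cost. The two main tools are Lemma \ref{lem:pdiv}, which controls the behavior of a random Toeplitz multiplier, and Lemma \ref{lem:linsolve}, which bounds the cost of each rational linear solve. By construction, once the algorithm returns, the returned $\matr{C}_i$ are Toeplitz of the correct shape, $\rank(\matr{C}_i\matr{A})=r$ (since we discarded any \lowrank{} result), each $d_i=\denom\vect{w}_i$ satisfies $d_i\vect{v}\in\rowsp_{\F[x]}(\matr{C}_i\matr{A})$, and $\gcd(d_1,\ldots,d_t)=1$ by the outer loop's termination criterion; so Las Vegas correctness is immediate and only the expected running time must be analyzed.

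For the inner while loop, I would apply Lemma \ref{lem:pdiv} (choosing any irreducible $p$, for instance) to observe that $\rank(\matr{C}_i\matr{A})<r$ occurs with probability at most $r/\#\fieldsubset \le 1/2$ by the hypothesis $\#\fieldsubset\ge 2r$. Thus each iteration produces a usable $\matr{C}_i$ with probability at least $1/2$, and the inner loop needs an expected $O(t)$ iterations to generate $t$ valid pairs $(\matr{C}_i,d_i)$. For the outer repeat loop, the key observation is that $d_1$ has degree at most $r\deg(\matr{A})$ (by the bound $d_1\mid\det(\matr{C}_1\matr{U})$ from the proof of Lemma \ref{lem:pdiv}), so $d_1$ has at most $r\deg(\matr{A})$ distinct irreducible factors. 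For any such factor $p$, Lemma \ref{lem:pdiv} gives $\Pr[p\mid d_i] \le r/\#\fieldsubset$ independently for each $i\ge 2$, so by a union bound
\[
  \Pr[\gcd(d_1,\dots,d_t)\ne 1] \;\le\; r\deg(\matr{A}) \cdot (r/\#\fieldsubset)^{t-1}.
\]
The choice $t = 1+\lceil\log_{\#\fieldsubset/r}(2r\deg(\matr{A}))\rceil$ is precisely tailored to make this at most $1/2$, so the outer repeat loop runs an expected $O(1)$ times. Note also that $t = \softoh{1}$ under the assumption $\#\fieldsubset\ge 2r$.

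For the cost bound, I would account for a single inner iteration: forming $\matr{C}_i\matr{A}$ is a scalar Toeplitz times polynomial-matrix product, costing $\softoh{(m+r)nd_{\matr{A}}}$ field operations via FFT-based Toeplitz-vector products applied column-by-column; since $r\le m$ and $r^{\omega-2}\ge 1$, this fits in $\softoh{mnr^{\omega-2}d_{\matr{A}}}$. Solving the rational system via Algorithm \ref{alg:linsolve} on an $r\times n$ matrix of degree $d_{\matr{A}}$ costs $\softoh{r^{\omega-1}nd_{\matr{A}} + r^{\omega-1}d_{\vect{v}}}$ by Lemma \ref{lem:linsolve}, again within the claimed bound. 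Multiplying by the expected $\softoh{1}$ iterations, and absorbing the one-time cost of computing $\rank(\matr{A})$ (Storjohann's algorithm, $\softoh{mnr^{\omega-2}d_{\matr{A}}}$) and the final $\gcd$ test on polynomials of degree $O(rd_{\matr{A}})$ (dominated), yields the total expected cost $\softoh{mnr^{\omega-2}d_{\matr{A}} + r^{\omega-1}d_{\vect{v}}}$.

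The main obstacle is the GCD-termination analysis: one must simultaneously exploit Lemma \ref{lem:pdiv} (each irreducible factor is killed with probability $\ge 1-r/\#\fieldsubset$ per trial) and the a priori degree bound on $d_1$ (to limit the number of possible offending irreducibles) to justify that the specific, rather delicate, formula for $t$ actually drives the failure probability below a constant. Everything else is a routine assembly of the two underlying lemmas.
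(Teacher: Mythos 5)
Your proposal is correct and follows essentially the same route as the paper's proof: correctness of the output by construction together with \cref{lem:linsolve}, an expected $O(t)$ bound on the inner loop via \cref{lem:pdiv}, termination of the outer loop by bounding the number of irreducible factors of $d_1$ through $\deg(d_1)\le r d_{\matr{A}}$ and taking a union bound of $(r/\#\fieldsubset)^{t-1}$ over them (which the choice of $t$ makes at most $1/2$), and the cost assembled from \cref{lem:linsolve}, fast Toeplitz products, and the observation that $t$ is logarithmic and absorbed by the soft-oh. The only cosmetic deviation is using Storjohann's Las Vegas rank algorithm where the paper uses the deterministic column rank profile of Zhou, which does not change the stated expected cost.
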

\begin{proof}
  Computing the rank and the row rank profile (giving \(r\) independent
  rows) on \cref{step:rmp:rank,step:rmp:rrp} can be done
  deterministically in the stated cost bound via the
  column rank profile algorithm from \citep[Section 11]{Zhou12},
  just as was used in \cref{alg:linsolve}.

  Each matrix product \(\matr{T}_i\matr{A}\) can be explicitly
  computed in \(\softoh{mnd_{\matr{A}}}\) operations using \(\oh{nd_{\matr{A}}}\) Toeplitz-vector
  products, each done in \(\softoh{m}\) operations by relying on fast polynomial multiplication \citep[Problem 5.1]{BP94}.

  If the algorithm returns, correctness is clear from the correctness of
  \cref{alg:linsolve}.

  What remains is to prove the expected number of iterations of each
  nested loop.

  From \cref{lem:pdiv}, for each random Toeplitz matrix \(\matr{T}_i\),
  the probability that \(\matr{S}\matr{T}_i\matr{A}\) is nonsingular
  is at least \(1 - r/\#\fieldsubset \ge 1/2\). Therefore the expected
  number of iterations of the first loop is at most 2, and the expected
  number of iterations of the nested while loop until \(i\) reaches \(t\)
  is at most \(2t\).

  Write \(f_i = \denom(\vect{w}_i)\).
  To find the expected number of iterations of the outer loop on
  \crefrange{step:rmp:gcd1}{step:rmp:gcd2}, we need the probability that
  \(\gcd(\dd_1,\ldots,\dd_t)=1\) given that each
  \(\matr{S}\matr{T}_i\matr{A}\) has full row rank $r$.

  If \(\gcd(\dd_1,\ldots,\dd_t)\ne 1\), then there is some irreducible polynomial
  \(p\)
  which divides every denominator \(\dd_1,\ldots,\dd_t\). Because the
  \(\matr{T}_i\)'s are chosen independently of each other, the events ``\(p\)
  divides \(\dd_i\)'' are pairwise independent; thus, according to
  \cref{lem:pdiv}, the probability that any given irreducible polynomial
  \(p\) is such a common factor is at most \((r/\#\fieldsubset)^{t-1}\).

  The degree of $\dd_1$ is at most \(rd_{\matr{A}}\)
  since \(\matr{S}\matr{T}_i\matr{A}\) is \(r\times n\) with degree \(d_{\matr{A}}\);
  this also gives an upper bound on the number of
  distinct irreducible factors $p$ of $\dd_1$.
  Taking the union bound we see that the
  probability of \emph{any} factor being shared by all
  denominators is at most
  \[\frac{r^t d_{\matr{A}}}{(\#\fieldsubset)^{t-1}},\]
  which is at most \(\tfrac{1}{2}\) from the definition of $t$.
  Therefore the expected number of iterations of the outer loop
  is \(\oh{1}\).

  The stated cost bound follows from \cref{lem:linsolve}.
  It does not involve $t$ because we can
  see that \(t\in\oh{\log (rd_{\matr{A}})}\), which is subsumed by the soft-oh
  notation.
\end{proof}

For the sake of simplicity in presentation, and because they do not
affect the asymptotic cost bound, we have omitted a few optimizations to
the Prover's algorithm that would be useful in practice, namely:
\begin{itemize}
  \item The Prover can reduce to the full column rank case by computing
    a column rank profile of \(\matr{A}\) once at the beginning (using
    \citet[Chapter 11]{Zhou12}), and then removing corresponding non-pivot
    columns from \(\matr{A}\) and \(\vect{v}\). This does not change the
    correctness, but means that each matrix \(\matr{S}\matr{T}_i\matr{A}\) is
    square.
  \item When each \(\matr{S}\matr{T}_i\matr{A}\) is square, computing \(\vect{w}_i\)
    can be done in a simpler way than by calling \cref{alg:linsolve}, as follows: check that
    \(\matr{S}\matr{T}_i\matr{A}\) is nonsingular to confirm the rank, and then use
    a fast linear system solver to obtain \(\vect{w}_i\).
  \item The solution vectors \(\vect{w}_i\) may be re-used in the
    subprotocols \rowmemf{} confirming that each
    \(\dd_i\vect{v}\in\rowsp_{\F[x]}(\matr{S}\matr{T}_i\matr{A})\).
\end{itemize}

We conclude the section by proving \rowmem{} is complete, sound,
and efficient. As above, write
\(d_{\matr{A}}\defeq\max(1,\deg(\matr{A}))\)
and \(d_{\vect{v}}\defeq\max(1,\deg(\vect{v}))\), and let \(r=\rank(\matr{A})\).

\begin{theorem}\label{thm:rowmem}
  Assuming \(\#\fieldsubset \ge 2\min(m,n)d_{\matr{A}}\), then
  \cref{pro:rowmem} is a complete and probabilistically sound
  interactive protocol which requires \(\oh{n + md_{\matr{A}}t + d_{\vect{v}}t}\)
  communication and has Verifier cost
  \(\oh{mnd_{\matr{A}}t + nd_{\vect{v}}t}\).
  If \(\vect{v}\in\rowsp_{\F[x]}(\matr{A})\), there is a Las Vegas
  randomized algorithm for the Prover with expected cost
  \(\tssoftoh{mnr^{\omega-2}d_{\matr{A}} + r^{\omega-1}d_{\vect{v}}}\).
  Otherwise, the probability that the Verifier incorrectly accepts is at
  most
  \((3 r d_{\matr{A}} + d_{\vect{v}} + 1)/\#\fieldsubset.\)
\end{theorem}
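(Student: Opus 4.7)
The plan is to prove the four claims --- completeness, Prover cost, Verifier/communication cost, and soundness --- in turn, leaning on the analyses of the sub-protocols \rankub{}, \ranklb{}, \progcd{}, and \rowmemf{}, together with Lemma~\ref{lem:denomgcd} and Lemma~\ref{lem:rowmemcomp}. Throughout I would use that the hypothesis \(\#\fieldsubset \ge 2\min(m,n)d_{\matr{A}}\) forces \(t = \softoh{1}\), which keeps the parameter \(t\) out of the final cost expressions (and keeps it from inflating the error bound beyond the stated leading term).

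For completeness, assume \(\vect{v} \in \rowsp_{\F[x]}(\matr{A})\). By Lemma~\ref{lem:rowmemcomp}, Algorithm~\ref{alg:rowmemprover} produces a commitment with \(\rho = r\), with \(\rank(\matr{C}_i\matr{A}) = \rho\), with \(d_i\vect{v}\in\rowsp_{\F[x]}(\matr{C}_i\matr{A})\), and with \(\gcd(d_1,\ldots,d_t)=1\). The degree checks on \cref{step:rowmem:rankcheck} follow because, as in the adjugate argument inside the proof of Lemma~\ref{lem:pdiv}, each \(d_i = \denom \vect{w_i}\) divides a \(\rho\times\rho\) minor of \(\matr{C}_i\matr{A}\), hence has degree at most \(\rho d_{\matr{A}}\). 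Each sub-protocol is then called on a true assertion and accepts by its own completeness guarantee, so the Verifier accepts.

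For the cost claims, I would add the contributions of Algorithm~\ref{alg:rowmemprover} and of every sub-protocol call. Since \(t = \softoh{1}\) and \(\rho \le r\), the Prover's total is dominated by the \(\softoh{mnr^{\omega-2}d_{\matr{A}} + r^{\omega-1}d_{\vect{v}}}\) bound of Lemma~\ref{lem:rowmemcomp}, all other sub-protocol Prover costs being absorbed using \(r^{\omega-1}\le mr^{\omega-2}\). The Verifier cost is dominated by the \(t\) calls to \rowmemf{} with a \(\rho\times n\) matrix of degree \(d_{\matr{A}}\) and a vector of degree at most \(\rho d_{\matr{A}} + d_{\vect{v}}\), which gives \(\oh{mnd_{\matr{A}}t + nd_{\vect{v}}t}\). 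The communication splits into the initial commitment (\(\oh{mt}\) for the Toeplitz \(\matr{C}_i\)'s and \(\oh{md_{\matr{A}}t}\) for the \(d_i\)'s) plus \(\oh{n}\) from \rankub{}, \(\oh{\rho t}\) from the \ranklb{} calls, \(\oh{md_{\matr{A}} + t}\) from \progcd{}, and \(\oh{(md_{\matr{A}}+d_{\vect{v}})t}\) from the \rowmemf{} calls, totalling \(\oh{n + md_{\matr{A}}t + d_{\vect{v}}t}\).

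For soundness, suppose \(\vect{v} \notin \rowsp_{\F[x]}(\matr{A})\) and yet every sub-protocol accepts. Then \rankub{} gives \(\rank(\matr{A})\le\rho\) and \ranklb{} gives \(\rank(\matr{C}_i\matr{A})\ge\rho\) for each \(i\); combined with the trivial inequality \(\rank(\matr{C}_i\matr{A})\le\rank(\matr{A})\), this pinches \(\rank(\matr{C}_i\matr{A})=\rank(\matr{A})=\rho\). Together with \progcd{} certifying \(\gcd(d_i)=1\) and \rowmemf{} certifying \(d_i\vect{v}\in\rowsp_{\F[x]}(\matr{C}_i\matr{A})\), the hypotheses of Lemma~\ref{lem:denomgcd} all hold, contradicting \(\vect{v}\notin \rowsp_{\F[x]}(\matr{A})\). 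So at least one sub-protocol must have accepted a false claim, and a union bound over the sub-protocol error probabilities --- dominated by the \rowmemf{} contribution \((4\rho d_{\matr{A}}+d_{\vect{v}}+1)/\#\fieldsubset\), obtained by plugging row dimension \(\rho\le r\) and vector degree \(\rho d_{\matr{A}}+d_{\vect{v}}\) into Theorem~\ref{thm:rowmemf} --- yields the stated bound. The main obstacle will be this bookkeeping: tracking how the degree \(\rho d_{\matr{A}}+d_{\vect{v}}\) of \(d_i\vect{v}\) propagates into each sub-protocol's error term, and verifying that the \(t\) factor on the \ranklb{} and \rowmemf{} contributions is safely absorbed into the advertised \((4rd_{\matr{A}}+d_{\vect{v}}+1)/\#\fieldsubset\) leading term under the field-size hypothesis.
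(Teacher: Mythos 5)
Your completeness and cost accounting follow the paper's route (Prover cost from \cref{lem:rowmemcomp}, Verifier/communication by summing the sub-protocol costs with \(t\in\softoh{1}\)), but your soundness argument has a genuine gap. You propose ``a union bound over the sub-protocol error probabilities --- dominated by the \rowmemf{} contribution'' and hope that the factor \(t\) on the \ranklb{} and \rowmemf{} terms is ``safely absorbed'' by the field-size hypothesis. A union bound is a \emph{sum}: it would give roughly \(\bigl((r d_{\matr{A}}+1) + t + r d_{\matr{A}} + t(4 r d_{\matr{A}} + d_{\vect{v}} + 1)\bigr)/\#\fieldsubset\), which strictly exceeds the claimed \((4 r d_{\matr{A}} + d_{\vect{v}} + 1)/\#\fieldsubset\); the claimed bound is an exact inequality, not an asymptotic one, so nothing gets absorbed no matter how large \(\#\fieldsubset\) is. The paper's proof does not sum at all: it partitions the situation into four \emph{mutually exclusive} cases according to the Prover's committed data and the true ranks/gcd (\(\rho<\rank(\matr{A})\); some \(\rank(\matr{C}_i\matr{A})<\rho\); all ranks equal \(\rho\) but \(\gcd\ne 1\); all ranks equal \(\rho\) and \(\gcd=1\)). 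In each case at least one specific statement checked by a sub-protocol is false, and since the Verifier accepts only if \emph{every} sub-protocol accepts, the acceptance probability in that case is bounded by that single sub-protocol's soundness error (in the last case, \cref{lem:denomgcd} is what guarantees some index \(i\) with \(d_i\vect{v}\notin\rowsp_{\F[x]}(\matr{C}_i\matr{A})\)). The overall bound is then the \emph{maximum} over the cases, attained in Case 4, with no factor \(t\) and no additive contributions from \rankub{} or \progcd{}. Without this disjoint-case/maximum structure your argument does not reach the stated probability.

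A smaller omission: for the Verifier cost you charge \(t\) calls to \rowmemf{} on the \(\rho\times n\) matrix \(\matr{C}_i\matr{A}\), but the Verifier cannot afford to form \(\matr{C}_i\matr{A}\) explicitly; the paper points out that \(\matr{C}_i\matr{A}\) is used only as a black box, evaluating \(\matr{A}(\alpha)\) once and multiplying by the Toeplitz \(\matr{C}_i\), which is what makes the per-call cost \(\oh{mnd_{\matr{A}} + nd_{\vect{v}}}\) legitimate. Your final totals are right, but the justification needs this remark.
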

\begin{proof}
  For the communication, note that sending each \(\matr{C}_i\) has
  communication cost \(\oh{m}\) from \cref{lem:subtoepcost}.
  Furthermore, the Verifier does
  not actually compute the products
  \(\matr{C}_i\matr{A}\), but rather uses these as a \emph{black box}
  for matrix-vector products in the two subprotocols.
  For any scalar \(\alpha\in\F\), the complexity of
  computing \(\matr{C}_i\matr{A}(\alpha)\) times any vector of scalars on
  the left or right-hand side is \(\oh{mnd_{\matr{A}}}\).

  Along with the degree conditions on each \(\dd_i\) and
  \cref{thm:rankub,lem:gcd,thm:ranklb,thm:rowmemf}, this proves the communication
  and Verifier cost claims.

  The Prover's cost comes from \cref{lem:rowmemcomp}, which
  dominates the cost for the Prover in any of the subprotocols.

  If the rank conditions being verified on
  \cref{step:rowmem:rankub,step:rowmem:rs2} are true, then all matrices
  \(\matr{C}_i\matr{A}\) have full row rank equal to the rank of \(\matr{A}\),
  that is, \(\rank(\matr{C}_i\matr{A}) = \rho = r\).
  And if the statements verified on
  \cref{step:rowmem:gcdcheck,step:rowmem:rs2} are true as well, then we have
  \(\vect{v}\in\rowsp_{\F[x]}(\matr{A})\) according to \cref{lem:denom_gcd}.
  Therefore the soundness of this protocol depends only on the
  probabilistic soundness of those subprotocols.

  For the remainder of the proof, we assume that
  \(\vect{v}\not\in\rowsp_{\F[x]}(\matr{A})\) and
  we want to know an upper
  bound on the probability that the Verifier incorrectly accepts.
  For this, we divide into cases
  depending on which subprotocol incorrectly accepted:

  \paragraph{Case 1: \(\rank(\matr{A}) > \rho\)}
  According to \cref{thm:rankub}, the probability that the Verifier
  incorrectly accepts in \rankub{} on \cref{step:rowmem:rankub} is at
  most \((rd_{\matr{A}}+1)/\#\fieldsubset\).

  \paragraph{Case 2: \(\rank(\matr{A})\le\rho\)
    and \(\gcd(\dd_1,\ldots,\dd_t)\ne 1\)}
  We know that
  each \(\deg(\dd_i)\le rd_{\matr{A}}\), where \(r\) is the true rank of \(\matr{A}\).
  By \cref{lem:gcd}, the probability that the Verifier incorrectly
  accepts in subprotocol \progcd{} is at most
  \((2\max_i(\deg(\dd_i))-1)/\#\fieldsubset\), which is at most
  \((2r d_{\matr{A}}-1)/\#\fieldsubset\).

  \paragraph{Case 3: \(\rank(\matr{A})\le\rho\)
    and \(\gcd(\dd_1,\ldots,\dd_t) = 1\)}
  Then, by \cref{lem:denom_gcd},
  there exists \(i\in\{1,\ldots,t\}\) such that
  \(\dd_i\vect{v}\not\in\rowsp_{\F[x]}(\matr{A})\), and thus
  either \(\rank(\matr{C}_i\matr{A})<\rho\) or
  \(\dd_i\vect{v}\not\in\rowsp_{\F[x]}(\matr{C}_i\matr{A})\).
  That is, the statement being verified by \rowmemf{} on the $i$th
  iteration of \cref{step:rowmem:rs2} is false.

  Because of the degree checks on \cref{step:rowmem:degcheck},
  we know that
  \(\deg(\dd_i\vect{v}) \le rd_{\matr{A}} + d_{\vect{v}}\).
  Therefore from
  \cref{thm:rowmemf}, the probability that the Verifier incorrectly
  accepts in \rowmemf{}
  is at most \((4rd_{\matr{A}} + d_{\vect{v}} + 1)/\#\fieldsubset\).

  Observe that the three cases are disjoint and cover all possibilities.
  In every case, the probability that the Verifier incorrectly accepts
  is at most that in Case 3, which proves the last claim in the
  Theorem statement.
\end{proof}

We note that it is always possible to conduct the checks on
\cref{step:rowmem:rs2} of \rowmem{}
in parallel, so that the
total number of \emph{rounds} of communication in the protocol is \(\oh{1}\).

A crucial factor in the communication and Verifier costs as seen in
\cref{thm:rowmem} is the value of \(t\), which in any case satisfies
\(t\in\oh{\log(\min(m,n))}\) due to the condition on the size of
\fieldsubset,
so this adds only a logarithmic factor to the
cost. Indeed, when the set \fieldsubset{} of field elements is
large enough, $t$ can be as small as 2.  For clarity, we state
as a corollary a condition under which this logarithmic factor can be
eliminated.

\begin{corollary}
  If
  \(\#\fieldsubset\ge 2mnd_{\matr{A}}\), then
  \cref{pro:rowmem} requires only \(\oh{n + md_{\matr{A}}+d_{\vect{v}}}\)
  communication and has Verifier cost
  \(\oh{mnd_{\matr{A}} + nd_{\vect{v}}}\).
\end{corollary}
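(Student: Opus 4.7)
The plan is to reduce the corollary to \cref{thm:rowmem} by showing that, under either hypothesis, the parameter \(t\) used by an honest Prover in \cref{pro:rowmem} can be taken to be \(\oh{1}\). Since the communication and Verifier cost in \cref{thm:rowmem} each carry a single factor of \(t\), bounding \(t\) by a constant immediately removes the logarithmic factor and yields the stated bounds.

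For the full row rank case, I would exhibit an honest Prover strategy with \(t=1\): take \(\matr{C}_1 \defeq \idMat{m}\). Then \(\matr{C}_1\matr{A}=\matr{A}\) trivially has rank \(\rho=m\), and whenever \(\vect{v}\in\rowsp_{\polRing}(\matr{A})\) the unique rational solution \(\vect{w_1}\) to \(\vect{w_1}\matr{A}=\vect{v}\) is already polynomial, so \(d_1 = \denom\vect{w_1} = 1\) and \(\gcd(d_1)=1\) holds. Thus all the checks on \crefrange{step:rowmem:rankub}{step:rowmem:rs2} proceed with just one pair \((\matr{C}_1,d_1)\).

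For the second case \(\#\fieldsubset\ge 2mnd_{\matr{A}}\), I would simply bound the formula defining \(t\). Using \(\rho\le\min(m,n)\), one gets
\[
\#\fieldsubset/\rho \;\ge\; 2mnd_{\matr{A}}/\min(m,n) \;=\; 2\max(m,n)\,d_{\matr{A}} \;\ge\; 2r\,d_{\matr{A}},
\]
so \(\log_{\#\fieldsubset/\rho}(2rd_{\matr{A}})\le 1\), and hence \(t\le 2\). In either case \(t\in\oh{1}\), and substituting into the Verifier cost \(\oh{mnd_{\matr{A}}t + nd_{\vect{v}}t}\) and communication \(\oh{n + md_{\matr{A}}t + d_{\vect{v}}t}\) from \cref{thm:rowmem} gives the claimed estimates.

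The only subtlety, and the step I would check most carefully, is the legitimacy of running \rowmem{} with \(t=1\) in the full row rank case: one must verify that completeness and soundness of \cref{thm:rowmem} do not secretly rely on the specific value of \(t\) prescribed in \cref{step:rowmem:pcomp}, but only on the four sub-protocol calls going through. Inspection of the case analysis in the proof of \cref{thm:rowmem} confirms that each of the four failure cases is analyzed independently of \(t\) (Cases 1--3 invoke a single sub-protocol, and Case 4 uses a union bound over \(t\) executions of \rowmemf{}, which only shrinks when \(t\) does), so no additional argument is needed.
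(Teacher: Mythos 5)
Your proposal is correct and matches the paper's (implicit) justification: the corollary is stated without a separate proof, relying on the preceding remark that the bounds of \cref{thm:rowmem} carry a single factor of \(t\), that \(t\le 2\) once \(\#\fieldsubset\ge 2mnd_{\matr{A}}\) (your logarithm computation is exactly this), and that in the full-row-rank case an honest Prover can succeed with a single \(\matr{C}_1\) (e.g.\ \(\idMat{m}\)) since all denominators are trivial, while soundness never depended on the size of \(t\). Your only slight mischaracterization is that the paper's Case 4 does not take a union bound over the \(t\) calls to \rowmemf{} but bounds the probability via one false statement; this does not affect your conclusion.
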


%%%%%%%%%%%%%%%%%%%%%%%%%%%%%%%%%%%%%%%%%%%%%%%%%%%%%%%%%%%%%%%%%%
\section{Row spaces and normal forms}%
\label{sec:rowspace_normalforms}

In this section, we use the row space membership protocol from the
previous section in order to certify the equality of the row spaces
of two matrices. Along with additional non-interactive checks by the
Verifier, this can also be applied to prove the correctness of certain
important normal forms of polynomial matrices.

\subsection{Row space subset and row basis}%
\label{sub:row_space}

We will use \rowmem{} to give a protocol for the certification of
\emph{row space subset}; by this we mean the problem of deciding whether the
row space of \(\matr{A}\) is contained in the row space of \(\matr{B}\), for
two given matrices \(\matr{A}\) and \(\matr{B}\).

Our approach is the following: take a random vector $\vect{\lambda}$ and
certify that the row space element \(\vect{\lambda} \matr{A}\) is in the row
space of \(\matr{B}\), the latter being done via row space membership
(\cref{sec:rowmem}).  We will see that taking \(\vect{\lambda}\) with
entries in the base field is enough to ensure good probability of success.

\begin{lemma}
  \label{lem:rowsp_subset_proba}
  Let \(\matr{A} \in \pmatRing{m}{n}\) and \(\matr{B} \in \pmatRing{\ell}{n}\).
  Let $R \in \{ \polRing, \fracRing \}$.
  Then the following statement holds:
  Assuming that
  \[
    \rowsp_{R}(\matr{A}) \not\subseteq \rowsp_{R}(\matr{B}),
  \]
  then the \(\field\)-vector space
  \[
    \vecspace \defeq  \left\{ \vect{\lambda} \in \matRing{1}{m} \mid
    \vect{\lambda}\matr{A} \in \rowsp_{R}(\matr{B}) \right\}
  \]
  has dimension at most \(m-1\).
  For \(\vect{\lambda}\in\matRing{1}{m}\) with entries chosen independently and uniformly at
  random from a finite subset \(\fieldsubset \subseteq \field\) then
  \(\vect{\lambda}\matr{A} \in \rowsp_{R}(\matr{B}) \) with
  probability at most \(\frac{1}{\#\fieldsubset}\).
\end{lemma}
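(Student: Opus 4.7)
The plan is to proceed in two stages: first show that $R$ is a proper $\field$-vector subspace of $\matRing{1}{m}$, and then apply the Schwartz-Zippel lemma to a linear form vanishing on $R$.

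To see that $R$ is an $\field$-subspace, I would just observe that $\rowsp_{\polRing}(\matr{B})$ is in particular closed under addition and under scalar multiplication by elements of $\field \subseteq \polRing$, so the preimage $R$ under the $\field$-linear map $\vect{\lambda}\mapsto \vect{\lambda}\matr{A}$ is an $\field$-subspace of $\matRing{1}{m}$. The bulk of the work is showing $R$ is proper, i.e.\ that there exists some $\vect{\lambda}_0\in\matRing{1}{m}$ with $\vect{\lambda}_0\matr{A}\notin\rowsp_{\polRing}(\matr{B})$.

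For this step I would argue by contrapositive. Suppose $R=\matRing{1}{m}$, so every scalar vector $\vect{\lambda}\in\matRing{1}{m}$ satisfies $\vect{\lambda}\matr{A}\in\rowsp_{\polRing}(\matr{B})$. Any $\vect{\mu}\in\pmatRing{1}{m}$ can be written as $\vect{\mu}=\sum_{k\ge 0}\vect{\mu}_k x^k$ with $\vect{\mu}_k\in\matRing{1}{m}$, whence
\[
\vect{\mu}\matr{A} \;=\; \sum_{k\ge 0} x^k (\vect{\mu}_k\matr{A}) \;\in\; \rowsp_{\polRing}(\matr{B}),
\]
using that $\rowsp_{\polRing}(\matr{B})$ is a $\polRing$-submodule. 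This gives $\rowsp_{\polRing}(\matr{A})\subseteq\rowsp_{\polRing}(\matr{B})$, contradicting the hypothesis. So $R\subsetneq\matRing{1}{m}$, i.e.\ $\dim_{\field} R\le m-1$.

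Finally, since $R$ is a proper subspace, it is contained in the kernel of some nonzero $\field$-linear form $L:\matRing{1}{m}\to\field$. Viewing $L$ as a nonzero multivariate polynomial of total degree $1$ in the $m$ coordinates of $\vect{\lambda}$, the Schwartz-Zippel lemma gives $\Pr[L(\vect{\lambda})=0]\le 1/\#\fieldsubset$ when $\vect{\lambda}$ is drawn uniformly from $\fieldsubset^m$; hence $\vect{\lambda}\notin R$ with probability at least $1-1/\#\fieldsubset$, which is exactly the claim. I do not expect a real obstacle here: the only mildly nontrivial step is the coefficient-extraction argument that reduces non-containment of polynomial row spaces to non-containment witnessed by a scalar vector.
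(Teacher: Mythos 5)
Your proposal is correct and follows essentially the same route as the paper: the paper also shows that $R=\matRing{1}{m}$ would force every row of $\matr{A}$ (hence, by the $\polRing$-module structure of $\rowsp_{\polRing}(\matr{B})$, all of $\rowsp_{\polRing}(\matr{A})$) into $\rowsp_{\polRing}(\matr{B})$, contradicting the hypothesis, and then bounds the probability of landing in the proper subspace $R$ via the degree-one Schwartz--Zippel argument (stated in the paper as \cref{lem:randinnerprod}). Your explicit coefficient-extraction of $\vect{\mu}=\sum_k \vect{\mu}_k x^k$ is just a slightly more detailed phrasing of the same step.
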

\begin{proof}
  Suppose that the vector space $\vecspace$ has dimension at least $m$. Then
  \(\vecspace\) is the entire space \(\field^{1\times m}\), and every row of
  \matr{A} is in \(\rowsp_{R}(\matr{B})\); hence
  \(\rowsp_{R}(\matr{A}) \subseteq  \rowsp_{R}(\matr{B})\), a contradiction.
  Then the probability that the uniformly random vector \(\vect{\lambda}\) belongs to the proper
  subspace \(\vecspace \subsetneq \F^{1\times m}\) follows from \cref{lem:randinnerprod}.
\end{proof}

\begin{protocol}
  \caption{\textsf{RowSpaceSubset}}
  \label{pro:rowsubset}
  \Public{$\matr{A}\in\F[x]^{m\times n}$,  $\matr{B}\in\F[x]^{\ell \times n}$,
    \(\rho\in\NN\)}
  \Certifies{\(\rowsp_{\F[x]}(\matr{A}) \subseteq \rowsp_{\F[x]}(\matr{B})\)
    and \(\rank(\matr{B}) = \rho\)}
  \begin{protocolsteps}
  \step \label{step:rowsubset:rowmem_choosec}
  \verifier{$\vect{\lambda} \randfrom \fieldsubset^{1\times m}$}
  \step \verifier{\(\vect{v} \gets \vect{\lambda} \matr{A}\)}
    \toprover{\vect{v}}
  \step\label{step:rowsubset:rowmem}
    \subprotocol{\begin{tabular}{@{}c@{}}
      \(\vect{v} \checkin \rowsp_{\polRing}(\matr{B})\)
      and \(\rank(\matr{B})\checkeq\rho\)\\
      using \rowmem{}
    \end{tabular}}
  \end{protocolsteps}
\end{protocol}

In the following, let \(r_{\matr{A}}\) and \(r_{\matr{B}}\) denote respectively the ranks of \(\matr{A}\) and
\(\matr{B}\), and let \(d_{\matr{A}}=\max(1,\deg(\matr{A}))\) and \(d_{\matr{B}}=\max(1,\deg(\matr{B}))\).
\begin{theorem}\label{thm:rowsubset}
  \Cref{pro:rowsubset} is a
    probabilistically sound interactive protocol,
    and is complete assuming
    \(\#\fieldsubset \geq 2\ell d_{\matr{B}}\) in its subprotocols.
    It requires
    \(\oh{n + (\ell d_{\matr{B}} + d_{\matr{A}})\log(\ell)}\)
  communication and has Verifier cost
  \[
    \oh{(\ell nd_{\matr{B}} + nd_{\matr{A}})\log(\ell) + m nd_{\matr{A}}}.
  \]
  If \(\rowsp_{\F[x]}(\matr{A}) \subseteq \rowsp_{\F[x]}(\matr{B})\), there is a Las Vegas
  randomized algorithm for the Prover with expected cost
  \[
    \softoh{\ell nr_{\matr{B}}^{\omega-2}d_{\matr{B}} + r_{\matr{B}}^{\omega-1}d_{\matr{A}}
    + m n d_{\matr{A}}}.
  \]
  Otherwise, the probability that the Verifier incorrectly accepts is at
  most \[\frac{4r_{\matr{B}}d_{\matr{B}} + d_{\matr{A}} + 2}{\#\fieldsubset}.\]
\end{theorem}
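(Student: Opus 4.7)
The plan is to analyze \cref{pro:rowsubset} by leveraging the fact that the only non-trivial work happens in the sub-protocol \rowmem{} called on input \((\matr{B},\vect{v})\), where \(\vect{v}=\vect{\lambda}\matr{A}\). Note that since \(\vect{\lambda}\) has entries in \(\field\), the vector \(\vect{v}\) is a polynomial vector of length \(n\) with \(\deg(\vect{v})\le d_{\matr{A}}\); in particular one may apply \rowmem{} with the substitutions \(m\leftarrow\ell\), \(d_{\matr{A}}\leftarrow d_{\matr{B}}\), and \(d_{\vect{v}}\leftarrow d_{\matr{A}}\).

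For \textbf{completeness}, if \(\rowsp_{\polRing}(\matr{A})\subseteq\rowsp_{\polRing}(\matr{B})\), then every \(\field\)-linear combination of rows of \(\matr{A}\) lies in \(\rowsp_{\polRing}(\matr{B})\); in particular \(\vect{v}\in\rowsp_{\polRing}(\matr{B})\) for any choice of \(\vect{\lambda}\), and \rowmem{} is complete under the stated bound on \(\#\fieldsubset\) by \cref{thm:rowmem}. For \textbf{soundness}, if \(\rowsp_{\polRing}(\matr{A})\not\subseteq\rowsp_{\polRing}(\matr{B})\), then by \cref{lem:rowsp_subset_proba} the event \(\vect{v}\in\rowsp_{\polRing}(\matr{B})\) occurs with probability at most \(1/\#\fieldsubset\); conditioned on the complementary event, \cref{thm:rowmem} applied to \((\matr{B},\vect{v})\) bounds the probability of a wrongful accept by \((4r_{\matr{B}}d_{\matr{B}}+d_{\matr{A}}+1)/\#\fieldsubset\). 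A union bound then yields the claimed \((4r_{\matr{B}}d_{\matr{B}}+d_{\matr{A}}+2)/\#\fieldsubset\).

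For the \textbf{cost} analysis, the Verifier's only task outside the sub-protocol is computing \(\vect{v}=\vect{\lambda}\matr{A}\), a vector-matrix product with a scalar left factor against a polynomial matrix, which costs \(\oh{mnd_{\matr{A}}}\); adding the Verifier cost of \rowmem{} from \cref{thm:rowmem} with \(t\in\oh{\log(\ell)}\) gives the stated total. For communication, the Verifier sends only \(\vect{\lambda}\in\field^{1\times m}\), and \rowmem{} contributes \(\oh{n+(\ell d_{\matr{B}}+d_{\matr{A}})\log(\ell)}\); the \(m\) field elements of \(\vect{\lambda}\) are absorbed into the \(\oh{n}\) and \(\oh{\ell d_{\matr{B}}\log(\ell)}\) terms (so no separate \(m\) appears in the bound). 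For the Prover, computing \(\vect{v}\) dominates at \(\oh{mnd_{\matr{A}}}\) and the remaining work is that of an honest prover for \rowmem{} on \((\matr{B},\vect{v})\), giving \(\softoh{\ell n r_{\matr{B}}^{\omega-2}d_{\matr{B}}+r_{\matr{B}}^{\omega-1}d_{\matr{A}}}\) by \cref{lem:rowmemcomp}.

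The one subtle point, and the main thing to be careful about, is the degree of \(\vect{v}\): although \(\matr{A}\) has degree \(d_{\matr{A}}\) and \(\vect{\lambda}\) is constant, one must track that the resulting \(d_{\vect{v}}\le d_{\matr{A}}\) is what plugs into every term where \rowmem{} refers to the vector degree, both in the soundness estimate and in the Verifier/Prover cost bounds. Once this substitution is made consistently, all stated bounds follow from \cref{lem:rowsp_subset_proba} and \cref{thm:rowmem} by direct substitution and a single union bound.
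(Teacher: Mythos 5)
Your proposal is correct and follows essentially the same route as the paper: the soundness is exactly the paper's argument (a union bound over the event $\vect{\lambda}\in R$, bounded by $1/\#\fieldsubset$ via \cref{lem:rowsp_subset_proba}, and the event that \rowmem{} incorrectly accepts, bounded via \cref{thm:rowmem} with the substitutions $m\leftarrow\ell$, $d_{\matr{A}}\leftarrow d_{\matr{B}}$, $d_{\vect{v}}\leftarrow d_{\matr{A}}$), and the completeness and cost claims follow by the same substitution into \cref{thm:rowmem} plus the $\oh{mnd_{\matr{A}}}$ for forming $\vect{v}=\vect{\lambda}\matr{A}$; the paper's own proof only spells out the soundness part, so your write-up is if anything more detailed.

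One caveat: your claim that the $m$ field elements of $\vect{\lambda}$ are ``absorbed into the $\oh{n}$ and $\oh{\ell d_{\matr{B}}\log(\ell)}$ terms'' is not justified, since $m$ is unrelated to $n$, $\ell$, and $d_{\matr{B}}$ in general (take $m \gg n$). The transmission of $\vect{\lambda}$ genuinely costs $\Theta(m)$ and the honest accounting is $\oh{m + n + (\ell d_{\matr{B}} + d_{\matr{A}})\log(\ell)}$; note the paper's summary table indeed lists the communication of \rowsubset{} as $\softoh{md+n}$, so the displayed bound in \cref{thm:rowsubset} silently drops this $\oh{m}$ term rather than absorbing it. This is a minor bookkeeping point about the stated bound, not a flaw in your overall argument, but you should state the $\oh{m}$ term explicitly rather than claim it disappears.
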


\begin{proof}
  The Verifier may incorrectly accept if either
  \(\vect{\lambda}\) is such that
  \(\vect{\lambda}\matr{A} \in \rowsp_{\polRing}(\matr{B})\),
  which happens with probability at most
  \(1/\#\fieldsubset\) by \cref{lem:rowsp_subset_proba}, or the
  subprotocol \rowmem{} has incorrectly accepted.
  From \Cref{thm:rowmem}, and the union bound, we obtain the claimed
  probability bound.
\end{proof}

Repeating this check in both directions proves that two matrices have
the same row space.

\begin{protocol}
  \caption{\textsf{RowSpaceEquality}}
  \label{pro:rowspaceeq}
  \Public{$\matr{A}\in\F[x]^{m\times n}$,  $\matr{B}\in\F[x]^{\ell \times n}$,
    \(\rho\in\NN\)}
  \Certifies{\(\rowsp_{\F[x]}(\matr{A}) = \rowsp_{\F[x]}(\matr{B})\)
    and \(\rank(\matr{A}) = \rank(\matr{B}) = \rho\)}
  \begin{protocolsteps}
  \step\label{step:rowspaceeq:AB}
    \subprotocol{\begin{tabular}{@{}c@{}}
      \(\rowsp_{\polRing}(\matr{A}) \checksub \rowsp_{\polRing}(\matr{B})\)
        and \(\rank(\matr{B}) = \rho\)\\
      using \rowsubset{}
    \end{tabular}}
  \vspace{0.2cm}\step\label{step:rowspaceeq:BA}
    \subprotocol{\begin{tabular}{@{}c@{}}
      \(\rowsp_{\polRing}(\matr{B}) \checksub \rowsp_{\polRing}(\matr{A})\)
        and \(\rank(\matr{A}) = \rho\)\\
      using \rowsubset{}
    \end{tabular}}
  \end{protocolsteps}
\end{protocol}

\begin{theorem}
  \label{thm:rowspaceeq}
  Let \(r=\max(r_{\matr{A}},r_{\matr{B}})\) and \(d=\max(d_{\matr{A}},d_{\matr{B}})\).
  \Cref{pro:rowspaceeq} is a probabilistically sound interactive protocol, and
  is complete assuming \(\#\fieldsubset \geq 2\max(m d_{\matr{A}},\ell
  d_{\matr{B}})\).
  It requires
    \[\oh{(m\log(m) + \ell\log(\ell))d + n} \subset \softoh{md + \ell d + n}\]
  communication and has Verifier cost
  \[\oh{(m\log(m) + \ell\log(\ell))nd}
    \subset \softoh{mnd + \ell n d}.\]
  If \(\rowsp_{\F[x]}(\matr{A}) =\rowsp_{\F[x]}(\matr{B})\), there is a Las Vegas
  randomized algorithm for the Prover with expected cost
  \(\tssoftoh{(m+\ell) nr^{\omega-2}d  }\).
  Otherwise, the probability that the Verifier incorrectly accepts is at
  most \((4rd +  d+ 2)/\#\fieldsubset.\)
\end{theorem}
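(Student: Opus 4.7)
The plan is to reduce the entire theorem to two invocations of \Cref{thm:rowsubset}, since \rowspaceeq{} is built by calling \rowsubset{} twice, swapping the roles of $\matr{A}$ and $\matr{B}$. There is no genuinely hard step: the protocol is a composition, and the work is really about bookkeeping and verifying that the hypothesis on $\#\fieldsubset$ is strong enough for both sub-calls.

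For \textbf{completeness}, I would observe that if $\rowsp_{\polRing}(\matr{A}) = \rowsp_{\polRing}(\matr{B})$ then both inclusions hold and each call to \rowsubset{} succeeds, provided its own field-size hypothesis is met. \Cref{thm:rowsubset} requires $\#\fieldsubset \geq 2\ell d_{\matr{B}}$ for the first call (which certifies $\rowsp(\matr{A}) \subseteq \rowsp(\matr{B})$) and, by the symmetric role-swap, $\#\fieldsubset \geq 2 m d_{\matr{A}}$ for the second call. Both are implied by the hypothesis $\#\fieldsubset \geq 2\max(m d_{\matr{A}},\ell d_{\matr{B}})$.

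For \textbf{soundness}, if $\rowsp(\matr{A}) \ne \rowsp(\matr{B})$ then at least one of the two inclusions fails, and the Verifier accepts overall only if the corresponding incorrect sub-call accepts. \Cref{thm:rowsubset} bounds the two failure probabilities by $(4 r_{\matr{B}} d_{\matr{B}} + d_{\matr{A}} + 2)/\#\fieldsubset$ and $(4 r_{\matr{A}} d_{\matr{A}} + d_{\matr{B}} + 2)/\#\fieldsubset$ respectively; taking the maximum and replacing $r_{\matr{A}}, r_{\matr{B}}$ by $r$ and $d_{\matr{A}}, d_{\matr{B}}$ by $d$ yields the claimed bound $(4rd + d + 2)/\#\fieldsubset$. (Since we are in the failing case, only one of the two bounds needs to be applied, so no union bound is lost.)

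For the \textbf{complexity claims}, I would just add the costs of the two sub-calls from \Cref{thm:rowsubset}, substituting $d = \max(d_{\matr{A}}, d_{\matr{B}})$ and $r = \max(r_{\matr{A}}, r_{\matr{B}})$. Summing the two communication bounds $\oh{n + (\ell d_{\matr{B}} + d_{\matr{A}})\log(\ell)}$ and $\oh{n + (m d_{\matr{A}} + d_{\matr{B}})\log(m)}$ gives $\oh{(m\log m + \ell\log\ell)d + n}$ after dropping dominated terms; an identical calculation works for the Verifier cost, where the $mnd_{\matr{A}}$ and $\ell n d_{\matr{B}}$ terms are absorbed. The honest Prover just runs both sub-provers, which gives expected cost $\softoh{(m+\ell) n r^{\omega-2} d + r^{\omega-1} d}$, and the second term is dominated by the first, yielding the stated bound. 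The main (mild) obstacle is just being careful with the asymmetric roles of $m, \ell, d_{\matr{A}}, d_{\matr{B}}$ when applying \Cref{thm:rowsubset} in both directions, but after uniformizing to $r$ and $d$ the estimates follow directly.
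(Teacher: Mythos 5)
Your proposal is correct and matches the paper's (implicit) argument: the paper states this theorem without a written proof precisely because it follows by composing the two \rowsubset{} calls via \Cref{thm:rowsubset}, exactly as you do, including the observation that soundness needs only the bound for whichever inclusion is false and that the field-size hypothesis covers both directions. Your bookkeeping of the asymmetric costs and the absorption of the lower-order terms (using $r\le n$ so that $r^{\omega-1}d$ is dominated) is accurate.
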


From \rowspaceeq, we deduce a protocol verifying the property that \(\matr{B}\) is
a row basis of \(\matr{A}\), that is, a matrix which has the same row space
as \(\matr{A}\) and which has full row rank.

\begin{protocol}
  \caption{\textsf{RowBasis}}
  \label{pro:rowbasis}
  \Public{$\matr{A}\in\F[x]^{m\times n}$,  $\matr{B}\in\F[x]^{\ell \times n}$}
  \Certifies{\(\matr{B}\) is a row basis of \(\matr{A}\)}
  \begin{protocolsteps}
  \step\label{step:rowbasis:AB}
    \subprotocol{\begin{tabular}{@{}c@{}}
      \(\rowsp_{\polRing}(\matr{B}) \checkeq \rowsp_{\polRing}(\matr{A})\)
        and \(\rank(\matr{B})\checkeq\ell\)\\
      using \rowspaceeq{}
    \end{tabular}}
  \end{protocolsteps}
\end{protocol}
\begin{corollary}
  Let \(r=\max(r_{\matr{A}},r_{\matr{B}})\) and
  \(d=\max(d_{\matr{A}},d_{\matr{B}})\).  Then, \Cref{pro:rowbasis} is a
  probabilistically sound interactive protocol, and is complete assuming
  \(\#\fieldsubset \geq 2\max(m d_{\matr{A}}, \ell d_{\matr{B}})\).
  It requires
  \[\oh{(m\log(m) + \ell\log(\ell))d+n} \subset \softoh{md+\ell d +n}\]
  communication and has Verifier cost
  \[\oh{(m\log(m) + \ell \log(\ell))nd} \subset \softoh{mnd + \ell nd} . \]
  If \(\matr{B}\) is a row basis of \(\rowsp_{\F[x]}(\matr{A})\), there is a Las Vegas
  randomized algorithm for the Prover with expected cost
  \(\tssoftoh{mn\ell^{\omega-2}d  }\), with \(\ell=r_{\matr{A}}\) in this case.
  Otherwise, the probability that the Verifier incorrectly accepts is at
  most \((4rd + d+ 3)/\#\fieldsubset.\)
\end{corollary}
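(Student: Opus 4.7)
The plan is to derive this corollary by composing the two sub-protocols in \Cref{pro:rowbasis}: \ranklb{} for \cref{step:rowbasis:fullrank} and \rowspaceeq{} for \cref{step:rowbasis:AB}. All of the work has essentially already been done in \Cref{thm:ranklb} and \Cref{thm:rowspaceeq}; the task is to combine their guarantees and simplify the resulting costs using the fact that when $\matr{B}$ is an honest row basis, $\ell = r_{\matr{A}} = r_{\matr{B}} \le m$.

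For completeness, I would argue that if $\matr{B}$ is a row basis of $\rowsp_{\polRing}(\matr{A})$, then by definition $\matr{B}$ has full row rank $\ell = r_{\matr{A}}$, so \cref{step:rowbasis:fullrank} succeeds by the completeness of \ranklb{} (using that $\#\fieldsubset \ge \ell d_{\matr{B}} + 1$, which follows from the hypothesis since $\ell = r_{\matr{A}}$), and $\rowsp_{\polRing}(\matr{B}) = \rowsp_{\polRing}(\matr{A})$, so \cref{step:rowbasis:AB} succeeds by the completeness of \rowspaceeq{} (again using the lower bound on $\#\fieldsubset$).

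For soundness, suppose $\matr{B}$ is not a row basis of $\rowsp_{\polRing}(\matr{A})$. Then either $\rank(\matr{B}) < \ell$, in which case the Verifier incorrectly accepts in \cref{step:rowbasis:fullrank} with probability at most $1/\#\fieldsubset$ by \Cref{thm:ranklb}; or $\rank(\matr{B}) = \ell$ but $\rowsp_{\polRing}(\matr{B}) \ne \rowsp_{\polRing}(\matr{A})$, in which case the Verifier incorrectly accepts in \cref{step:rowbasis:AB} with probability at most $(4rd + d + 2)/\#\fieldsubset$ by \Cref{thm:rowspaceeq}. A union bound over these two disjoint cases yields the claimed $(4rd + d + 3)/\#\fieldsubset$.

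For the cost bounds, the Prover's work is dominated by the \rowspaceeq{} sub-protocol, whose expected cost from \Cref{thm:rowspaceeq} is $\softoh{(m+\ell) n r^{\omega-2} d}$; when $\matr{B}$ is an honest row basis, $\ell = r \le m$, so this simplifies to $\softoh{m n \ell^{\omega-2} d}$ as stated. The communication and Verifier costs are likewise inherited from \rowspaceeq{}, since the \ranklb{} subprotocol on $\matr{B}$ contributes only $\oh{\ell}$ communication and $\oh{\ell^2 d_{\matr{B}}}$ Verifier work, both absorbed into the \rowspaceeq{} bounds. The only point requiring a small amount of care is bookkeeping the field-size condition: the first subprotocol needs $\#\fieldsubset \ge \ell d_{\matr{B}} + 1$ while the second needs $\#\fieldsubset \ge 2\max(m d_{\matr{A}}, \ell d_{\matr{B}})$, and the assumed bound in the corollary statement subsumes both (using $\ell = r_{\matr{A}} \le m$ in the honest case).
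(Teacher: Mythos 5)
Your decomposition is exactly the one the paper intends: the corollary is stated without an explicit proof and follows by composing \cref{thm:ranklb} for \cref{step:rowbasis:fullrank} with \cref{thm:rowspaceeq} for \cref{step:rowbasis:AB}, bounding the failure probability by $1/\#\fieldsubset + (4rd+d+2)/\#\fieldsubset$ (in fact, since your two cases are disjoint, the maximum $(4rd+d+2)/\#\fieldsubset$ already suffices, and is below the stated bound), and using $\ell = r_{\matr{A}} = r_{\matr{B}} \le m$ in the honest case to simplify the Prover cost. The cost accounting for communication, Verifier, and Prover is correct.

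The one claim that does not go through as written is the field-size bookkeeping: you assert that the hypothesis $\#\fieldsubset \ge 2\max(r_{\matr{A}}d_{\matr{A}}, m d_{\matr{B}})$ subsumes the condition $\#\fieldsubset \ge 2\max(m d_{\matr{A}}, \ell d_{\matr{B}})$ required by \cref{thm:rowspaceeq}. It subsumes the $\ell d_{\matr{B}}$ term (since $\ell \le m$), but not the $m d_{\matr{A}}$ term: take $m$ large, $r_{\matr{A}}$ small and $d_{\matr{A}} > d_{\matr{B}}$, and then $m d_{\matr{A}}$ exceeds $\max(r_{\matr{A}}d_{\matr{A}}, m d_{\matr{B}})$. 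To justify the corollary's weaker hypothesis one cannot use \cref{thm:rowspaceeq} purely as a black box; one must note that the completeness of \rowmem{} on a matrix really only requires $\#\fieldsubset \ge 2\rank$ (for the honest-Prover algorithm, \cref{lem:rowmemcomp}) together with $\#\fieldsubset \ge 2\max_i\deg(d_i)$ for \progcd{} (\cref{lem:gcd}), where the denominators $d_i$ have degree at most $\rank\cdot\deg$; for the direction $\rowsp_{\polRing}(\matr{B}) \subseteq \rowsp_{\polRing}(\matr{A})$ this amounts to $2 r_{\matr{A}} d_{\matr{A}}$ rather than $2 m d_{\matr{A}}$ (the hypotheses of \cref{thm:rowmem,thm:rowspaceeq} use dimensions only as an over-estimate of rank). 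With that refinement, your argument is complete and matches the paper's implicit derivation.
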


\subsection{Normal forms}
\label{sub:normal_forms}

Here, we give protocols for certifying \emph{normal forms} of polynomial
matrices, including the Hermite form \citep{Hermite1851,MacDuffee33}
and the Popov form \citep{Popov72,Kailath80}.  These forms are specific row
bases with useful properties such as being triangular for the former or having
minimal degrees for the latter, and being unique in the sense that a given
matrix in \(\pmatRing{m}{n}\) has exactly one row basis in Hermite
(resp.~Popov) form.

Roughly speaking, the Hermite form is a row echelon form that stays
within the underlying ring.

\begin{definition}
  \label{def:hermite}
  A matrix \(\hermite = [b_{i,j}] \in \pmatRing{r}{n}\) with \(r\le n\) is in
  \emph{Hermite form}  if there are \emph{pivot indices}
  \(1 \le k_1 < \cdots < k_r \le n\)
  such that:
  \begin{enumerate}[(i)]
    \item \label{item:hermite:monic}
      (Pivots are monic, hence nonzero)

      \(b_{i,k_i}\) is monic for all
      \(1\le i\le r\),

    \item \label{item:hermite:triangular}
      (Entries right of pivots are zero)

      \(b_{i,j} = 0\) for all \(i\le i\le r\) and \(k_i < j \le n\),

    \item \label{item:hermite:normalized}
      (Entries below pivots have smaller degree)

      \(\deg(b_{i',k_i})<\deg(b_{i,k_i})\) for all \(1\le i < i' \le r\).
  \end{enumerate}
\end{definition}

Each entry at row $i$ and column \(k_i\) is called a \emph{pivot}.
Observe that these conditions guarantee $\hermite$ has full row rank,
hence the use of the notation $r$ for the row dimension.
For a matrix \(\matr{A} \in \pmatRing{m}{n}\), its Hermite form \(\hermite \in
\pmatRing{r}{n}\) is the unique row basis of \(\matr{A}\) which is in Hermite
form.

Protocol \ishermite{} certifies that a matrix
\(\hermite\in\pmatRing{\ell}{n}\) is the Hermite form of \(\matr{A}\).
It first checks that \(\hermite\) is in Hermite form,
and then it checks that \(\hermite\) and
\(\matr{A}\) have the same row space using \rowspaceeq{} from
\cref{sub:row_space}.

\begin{protocol}
  \caption{\textsf{HermiteForm}}
  \label{pro:hermite}
  \Public{\(\matr{A}\in\pmatRing{m}{n}\), \(\hermite \in\pmatRing{\ell}{n}\)}
  \Certifies{\(\hermite\) is the Hermite form of \(\matr{A}\)}
  \begin{protocolsteps}
    \step \label{step:hermite:hform}
      \verifier{Check that $\hermite$ satisfies \cref{def:hermite}}
    \vspace{0.1cm}\step\label{step:hermite:rowspaceeq}
      \subprotocol{\begin{tabular}{@{}c@{}}
        \(\rowsp_{\polRing}(\matr{A}) \checkeq \rowsp_{\polRing}(\hermite)\)
          and \(\rank(\matr{B})\checkeq\ell\)\\
        using \rowspaceeq{}
      \end{tabular}}
  \end{protocolsteps}
\end{protocol}

\begin{theorem}\label{thm:hermite}
  Let \(r=\max(r_{\matr{A}},r_{\hermite})\) and \(d=\max(d_{\matr{A}},d_{\hermite})\).
  \Cref{pro:hermite} is a probabilistically sound interactive protocol
  and is complete assuming \(\#\fieldsubset \geq 2\max(m d_{\matr{A}},\ell d_{\hermite})\)
  in its subprotocol.
  It requires
  \(
    \oh{md\log(m) + n}
  \)
  communication and has Verifier cost
  \(
    \oh{mnd\log(m)}
  \).
  If \hermite{} is the Hermite form of \matr{A}, there is a Las
  Vegas randomized algorithm for the Prover with expected cost
  \(\tssoftoh{m nr^{\omega-2}d  }\). Otherwise, the probability that the
  Verifier incorrectly accepts is at most \((4rd +  d+
  2)/\#\fieldsubset\).
\end{theorem}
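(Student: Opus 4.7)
The plan is to reduce all three parts of the theorem---completeness, soundness, and the complexity bounds---to the corresponding guarantees already established for \rowspaceeq{} in \cref{thm:rowspaceeq}, relying crucially on the classical fact that the Hermite form of a polynomial matrix is unique.

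For completeness, I would argue that if \(\hermite\) is indeed the Hermite form of \(\matr{A}\), then by \cref{def:hermite} the row dimension satisfies \(\ell = r_{\matr{A}} \le m\), so \cref{step:hermite:ok_dims} passes; the entries of \(\hermite\) satisfy \cref{def:hermite} by hypothesis, so \cref{step:hermite:hform} passes deterministically; and \(\hermite\) has the same \(\polRing\)-row space as \(\matr{A}\) by definition, so the completeness of \rowspaceeq{} under the assumed lower bound \(\#\fieldsubset \geq 2\max(m d_{\matr{A}}, \ell d_{\hermite})\) handles \cref{step:hermite:rowspaceeq}.

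For soundness, the key observation is that whenever the Verifier accepts, the matrix \(\hermite\) is genuinely in Hermite form---so in particular it has full row rank \(\ell\), since the pivots are nonzero and arranged in strict echelon pattern---and, provided the row-space check was itself correct, it shares its row space with \(\matr{A}\). By uniqueness of the Hermite form, these two conditions together force \(\hermite\) to be the Hermite form of \(\matr{A}\). Consequently, whenever \(\hermite\) is \emph{not} the Hermite form of \(\matr{A}\) but the Verifier accepts, the only possible culprit is an incorrect acceptance inside \rowspaceeq{}, which by \cref{thm:rowspaceeq} occurs with probability at most \((4rd + d + 2)/\#\fieldsubset\).

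For the cost analysis, the local checks on \cref{step:hermite:ok_dims} and \cref{step:hermite:hform} reduce to a linear scan of the entries of \(\hermite\) to locate the pivots \(k_1 < \cdots < k_\ell\) and to test monicity, triangularity, and the degree inequalities; this costs \(\oh{\ell n d}\) field operations and is dominated by the Verifier cost \(\oh{mnd\log m}\) inherited from \rowspaceeq{}. The communication and the Las Vegas Prover cost similarly come directly from \cref{thm:rowspaceeq}, since no new work is required beyond the sub-protocol. The one point to be careful about is ensuring that the verification of the structural conditions of \cref{def:hermite} really can be performed in time linear in the size of \(\hermite\); this is a routine scan, so the genuine content of the proof is the soundness step, where uniqueness of the Hermite form converts a row-space equality into equality of matrices.
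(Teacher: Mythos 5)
Your proposal is correct and follows essentially the same route as the paper: a deterministic scan verifying the structural conditions of \cref{def:hermite} (which forces full row rank \(\ell\)), reduction of everything else to \rowspaceeq{} via \cref{thm:rowspaceeq}, and uniqueness of the Hermite form to convert row-space equality into the claimed normal-form statement. The only difference is that you make the uniqueness step explicit, which the paper leaves implicit; your \(\oh{\ell n d}\) bound for the structural check is also fine, being dominated by the Verifier cost just as in the paper.
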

\begin{proof}
  To check that \hermite{} is in Hermite form at
  \cref{step:hermite:hform}, the Verifier first computes the pivot
  indices as the index of the first nonzero on each row, then checks the
  degree conditions specified in \cref{def:hermite}. (If any row is
  zero, \hermite{} is not in Hermite form.) This is a deterministic
  check with complexity only \(\oh{\ell n}\).

  As discussed previously, the fact that \hermite{} is in Hermite form
  immediately implies that it has full row rank
  $\ell$, and hence checking the row space equality is sufficient to
  confirm that \hermite{} is a row basis for \matr{A}.

  The subprotocol \rowspaceeq{} dominates the complexity and is also the only
  possibility for the Verifier to incorrectly accept when the statement
  is false; hence the stated costs follow directly from
  \cref{thm:rowspaceeq}.
\end{proof}

While the Hermite form has an echelon shape, it is also common in polynomial
matrix computations to resort to the Popov form, for which the pivot of a row
is no longer the rightmost nonzero entry but rather the rightmost entry whose
degree is maximal among the entries of that row. This form loses the
echelon shape, but has the advantage of having smaller-degree
entries than the Hermite form.

Here we consider the more
general \emph{shifted} forms \citep{BarBul92,BeLaVi06}, which encompass Hermite
forms and Popov forms via the use of the following degree measure. For a given
tuple \(\shift = (s_1,\ldots,s_n) \in \ZZ^n\), the \(\shift\)-degree of the row
vector \(\vect{v} = [v_1 \; \cdots \; v_n] \in \pmatRing{1}{n}\) is
\[
  \deg_{\shift}(\vect{v}) = \max(\deg(v_1)+s_1,\ldots,\deg(v_n)+s_n).
\]
We use the notation \(\popov_{i,*}\) to denote the $i$th row of the
matrix \popov.

\begin{definition}
  \label{def:spopov}
  Let \(\shift = (s_1,\ldots,s_n) \in \ZZ^n\). A matrix \(\popov = [b_{i,j}]
  \in \pmatRing{r}{n}\) with \(r\le n\) is in \emph{\(\shift\)-Popov form} if
  there are indices \(1 \le k_1 < \cdots < k_r \le n\) such that,
  \begin{enumerate}[(i)]
    \item \label{item:popov:monic}
      (Pivots are monic and determine the row degree)

      \(b_{i,k_i}\) is monic and \(\deg(b_{i,k_i})
      + s_{k_i} = \deg_{\shift}(\popov_{i,*})\)
      for all \(1\le i\le r\),

    \item \label{item:popov:triangular}
      (Entries right of pivots do not reach the row degree)

      \(\deg(b_{i,j}) + s_{j} < \deg_{\shift}(\popov_{i,*})\)
      for all \(1\le i\le r\) and \(k_i < j \le n\),

    \item \label{item:popov:normalized}
      (Entries above and below pivots have lower degree)

      \(\deg(b_{i',k_i})<\deg(b_{i,k_i})\)
      \(1\le i' \neq i \le r\).
  \end{enumerate}
\end{definition}

The usual Popov form corresponds to the
uniform shift \(\shift = (0,\ldots,0)\).
Furthermore, one can verify that, specifying the shift as \(\shift = (n
t,\ldots,2t,t)\) for any given $t>\deg(\popov)$, then the Hermite form
is the same as the $\shift$-Popov form
\citep[Lem.\,2.6]{BeLaVi06}.

For a matrix \(\matr{A} \in \pmatRing{m}{n}\), there exists a unique row basis
\(\popov \in \pmatRing{r}{n}\) of \(\matr{A}\) which is in \(\shift\)-Popov
form \citep[Thm.\,2.7]{BeLaVi06}; \(\popov\) is called the
\emph{\(\shift\)-Popov form of \(\matr{A}\)}. Generalizing
\cref{pro:hermite} to this more general normal form yields
\cref{pro:spopov} (although the former could be derived as a particular case
of the latter for a specific shift \(\shift\), we preferred to write both
explicitly for the sake of clarity).

\begin{protocol}
  \caption{\textsf{ShiftedPopovForm}}
  \label{pro:spopov}
  \Public{\(\matr{A}\in\pmatRing{m}{n}\), \(\shift=(s_1,\ldots,s_n)\in\ZZ^n\), \(\popov\in\pmatRing{\ell}{n}\)}
  \Certifies{\(\popov\) is the \(\shift\)-Popov form of \(\matr{A}\)}
  \begin{protocolsteps}
    \step \verifier{Check that $(\shift,\popov)$ satisfies \cref{def:spopov}}
    \vspace{0.1cm}\step\label{step:spopov:rowbasis}
      \subprotocol{\begin{tabular}{@{}c@{}}
        \(\rowsp_{\polRing}(\matr{A}) \checkeq \rowsp_{\polRing}(\hermite)\)
          and \(\rank(\matr{B})\checkeq\ell\)\\
        using \rowspaceeq{}
      \end{tabular}}
  \end{protocolsteps}
\end{protocol}

The next result is identical to \cref{thm:hermite}, in both
statement and proof. The only difference in the protocol is determining the indices of
each pivot column in order to confirm the conditions of $\shift$-Popov
form; this can be accomplished in linear time by first computing the
$\shift$-degree of the row and then finding the rightmost column which
determines this shifted row degree.

\begin{theorem}
  Let
  \(r=\max(r_{\matr{A}},r_{\popov})\) and \(d=\max(d_{\matr{A}},d_{\popov})\).
  \Cref{pro:spopov} is a probabilistically sound interactive protocol
  and is complete assuming
  \(\#\fieldsubset \geq 2\max(m d_{\matr{A}},\ell d_{\popov})\) in its
  subprotocol. It requires
  \(
    \oh{md\log(m) + n}
  \)
  communication and has Verifier cost
  \(
    \oh{mnd\log(m)}
  \).
  If \(\popov\) is the \(\shift\)-Popov form of \(\matr{A}\), there is a Las
  Vegas randomized algorithm for the Prover with expected cost
  \(\tssoftoh{m nr^{\omega-2}d  }\). Otherwise, the probability that the
  Verifier incorrectly accepts is at most \((4rd +  d+
  2)/\#\fieldsubset\).
\end{theorem}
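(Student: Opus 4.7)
The plan is to mirror the proof of \cref{thm:hermite} almost verbatim, with the only substantive work being the verification that the non-interactive shape check for \(\shift\)-Popov form can be carried out cheaply by the Verifier, and that it implies \(\popov\) has full row rank.

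First I would describe \cref{step:spopov:ok_dims} and the shape check: for each row \(i\in\{1,\ldots,\ell\}\) the Verifier scans the entries \(b_{i,j}\) and computes \(\deg_{\shift}(\popov_{i,*}) = \max_j(\deg(b_{i,j})+s_j)\), then takes the pivot index \(k_i\) to be the largest \(j\) such that \(\deg(b_{i,j})+s_j = \deg_{\shift}(\popov_{i,*})\). Once the \(k_i\)'s are in hand, conditions (i)--(iii) of \cref{def:spopov} (monic leading coefficient, strict inequality on all entries to the right of \(k_i\), and strict degree inequalities on the entries above and below each pivot) amount to inspecting each coefficient of \(\popov\) a constant number of times, for a total deterministic cost of \(\oh{\ell n d}\), which is absorbed in \(\oh{mnd\log(m)}\). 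The Verifier also checks that \(k_1<\cdots<k_\ell\).

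Next I would observe that any matrix satisfying \cref{def:spopov} has full row rank \(\ell\): the submatrix indexed by the pivot columns \(k_1,\ldots,k_\ell\) has nonzero diagonal entries \(b_{i,k_i}\) which by condition (iii) dominate in degree the off-diagonal entries in the same column, so its determinant is a nonzero polynomial (this is the standard argument used for Popov forms). Hence once the shape check passes, confirming that \(\popov\) is a row basis of \(\matr{A}\) reduces to checking row-space equality, which is handled by the sub-protocol \rowspaceeq{} on \cref{step:spopov:rowbasis}.

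With these two observations in place, completeness, soundness, communication, Verifier cost, and the Las Vegas Prover cost all follow directly from \cref{thm:rowspaceeq} applied to the pair \((\matr{A},\popov)\) of dimensions \(m\times n\) and \(\ell\times n\); the shape check adds only a deterministic linear-in-input term and does not change the probability of incorrect acceptance. The cardinality condition \(\#\fieldsubset\ge 2\max(m d_{\matr{A}},\ell d_{\popov})\) is exactly what \cref{thm:rowspaceeq} requires for completeness, and the soundness bound \((4rd+d+2)/\#\fieldsubset\) is inherited verbatim. I do not expect any step to be a real obstacle; the only point where care is needed is confirming that the definition of pivot index via the largest \(j\) achieving \(\deg_{\shift}(\popov_{i,*})\) is well-defined and computable in linear time, which follows from a single pass through each row.
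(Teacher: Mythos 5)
Your proposal is correct and follows essentially the same route as the paper, which simply reuses the proof of \cref{thm:hermite} with the pivot indices determined by computing each shifted row degree and taking the rightmost column attaining it, then delegating everything else to \rowspaceeq{} via \cref{thm:rowspaceeq}. Your explicit full-row-rank argument via the pivot submatrix determinant is a small elaboration of a fact the paper takes as standard, not a different approach.
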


%%%%%%%%%%%%%%%%%%%%%%%%%%%%%%%%%%%%%%%%%%%%%%%%%%%%%%%%%%%%%%%%%%%%%%%%%%%%%%%
\section{Saturation and kernel bases}
\label{sec:saturation}

In this section, we use the protocols described in previous sections to
design protocols verifying computations related to saturations and kernels of
polynomial matrices.

\subsection{Saturation and saturated matrices}

The saturation of a matrix over a principal ideal domain is a useful tool in
computations; we refer to \citep[Section II.\S2.4]{BourbakiCommAlg} for a general
definition of saturation. It was exploited for example in \citep{ZhoLab13} where a
matrix is factorized as the product of a column basis times some saturation
basis, and in \citep{NRS18} in order to find the location of pivots in the
context of the computation of normal forms. The
saturation can be computed from the Hermite form, as described in
\citep[Section 8]{PS10} for integer matrices, and alternatively it can be
obtained as a left kernel basis of a right kernel basis of the matrix as
we prove below (\cref{lem:saturation_is_kernel}).

\begin{definition}
  The \emph{saturation} of a matrix $\matr{A}\in \pmatRing{m}{n}$ is the
  $\polRing$-module
  \[
    \satur (\matr{A}) = \pmatRing{1}{n} \cap \rowsp_{\fracRing}(\matr{A});
  \]
  it contains $\rowsp_{\polRing}(\matr{A})$ and has rank \(r = \rank(\matr{A})\).
  A \emph{saturation basis of \(\matr{A}\)} is a matrix in \(\pmatRing{r}{n}\)
  whose rows form a basis of the saturation of \(\matr{A}\).
  A matrix is said to be \emph{saturated} if its saturation is equal to its
  \(\polRing\)-row space.
\end{definition}

Two matrices with the same saturation may have different \(\polRing\)-row
spaces. For example, the matrices
\[
  \begin{bmatrix}
    1 & 1 \\
    \var^2 & \var^2+\var \\
    \var & \var
  \end{bmatrix}
  \quad \text{and} \quad
  \begin{bmatrix}
    1 & 1+\var^2 \\
    0 & \var^2
  \end{bmatrix}
\]
have the same saturation \(\pmatRing{1}{2}\), but the \(\polRing\)-row space of
the former matrix contains $[0 ~ \var]$ which is not in the \(\polRing\)-row
space of the latter matrix. We also remark that all nonsingular matrices in
\(\pmatRing{n}{n}\) have saturation equal to \(\pmatRing{1}{n}\).

The saturation is defined in terms of the \(\fracRing\)-row space of the
matrix: two matrices have the same saturation if and only if they have the same
\(\fracRing\)-row space. In particular, \(\matr{A}\) is saturated if and only
if any row basis of \(\matr{A}\) is saturated. This yields the following
characterization for matrices having full column rank.

\begin{lemma}
  \label{lem:saturated_fullcolrank}
  Let \(\matr{A} \in \pmatRing{m}{n}\) have full column rank. Then \(\matr{A}\)
  is saturated if and only if \(\rowsp_{\polRing}(\matr{A}) =
  \pmatRing{1}{n}\).
\end{lemma}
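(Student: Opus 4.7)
The plan is to exploit the fact that full column rank forces the $\fracRing$-row space to be as large as possible, which pins down the saturation exactly.

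First I would observe that if $\matr{A}\in\pmatRing{m}{n}$ has full column rank, then $\rank(\matr{A})=n$, and thus $\rowsp_{\fracRing}(\matr{A})$ is an $n$-dimensional $\fracRing$-subspace of $\fracRing^{1\times n}$; by dimension it must equal $\fracRing^{1\times n}$. Plugging this into the definition of saturation gives
\[
  \satur(\matr{A})
  \;=\; \pmatRing{1}{n} \,\cap\, \rowsp_{\fracRing}(\matr{A})
  \;=\; \pmatRing{1}{n} \,\cap\, \fracRing^{1\times n}
  \;=\; \pmatRing{1}{n}.
\]

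With this identification, both directions become essentially trivial. For the forward direction, if $\matr{A}$ is saturated then by definition $\rowsp_{\polRing}(\matr{A}) = \satur(\matr{A}) = \pmatRing{1}{n}$. For the reverse direction, we always have the chain of inclusions $\rowsp_{\polRing}(\matr{A}) \subseteq \satur(\matr{A}) \subseteq \pmatRing{1}{n}$; if the outer two modules coincide, all three do, and $\matr{A}$ is saturated.

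There is no real obstacle here; the only thing worth being careful about is justifying $\rowsp_{\fracRing}(\matr{A}) = \fracRing^{1\times n}$, which rests on the standard equality of row rank and column rank over the field $\fracRing$. Once that is in hand, the two implications are immediate from the definitions.
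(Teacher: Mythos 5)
Your proposal is correct and follows essentially the same route as the paper: both arguments deduce from full column rank that \(\rowsp_{\fracRing}(\matr{A})\) is all of \(\fracRing^{1\times n}\) (the paper phrases this via row bases being nonsingular \(n\times n\) matrices, you via equality of row and column rank over \(\fracRing\)), conclude \(\satur(\matr{A}) = \pmatRing{1}{n}\), and then read off the equivalence from the definition of saturated.
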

\begin{proof}
  Since \(\matr{A}\) has full column rank, its row bases are nonsingular \(n
  \times n\) matrices, or equivalently, \(\rowsp_{\fracRing}(\matr{A}) =
  \fracRing^{1 \times n}\).  Hence the saturation of \(\matr{A}\) is \(\pmatRing{1}{n}\),
  and the equivalence follows by definition of being saturated.
\end{proof}

Thus, in this case, verifying that \(\matr{A}\) is saturated boils down to
verifying that \(\pmatRing{1}{n}\) is a subset of
\(\rowsp_{\polRing}(\matr{A})\), which can be done using \rowsubset.

To obtain a similar result in the case of matrices with full row rank, we will
rely on the following characterization of the saturation using kernel bases.

\begin{lemma}
  \label{lem:saturation_is_kernel}
  Let \(\matr{A}\in\pmatRing{m}{n}\) have rank \(r\), and let
  \(\matr{K}\in\pmatRing{n}{(n-r)}\) be a basis for the right kernel of \(\matr{A}\).
  Then, \(\satur(\matr{A})\) is the left kernel of \(\matr{K}\). In particular,
  the saturation bases of \(\matr{A}\) are precisely the left kernel bases of
  \(\matr{K}\).
\end{lemma}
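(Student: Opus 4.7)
The plan is to prove set equality $\satur(\matr{A}) = \{\vect{v}\in\pmatRing{1}{n} \mid \vect{v}\matr{K} = \zeromat\}$ by showing both inclusions, and then derive the ``in particular'' statement by a rank count.

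First I would handle the easy inclusion $\satur(\matr{A}) \subseteq \ker_{\text{left}}(\matr{K})$. Let $\vect{v} \in \satur(\matr{A})$. By definition $\vect{v}$ is polynomial and lies in $\rowsp_{\fracRing}(\matr{A})$, so there exists $\vect{u}\in\fracRing^{1\times m}$ with $\vect{v}=\vect{u}\matr{A}$. Then $\vect{v}\matr{K}=\vect{u}(\matr{A}\matr{K})=\zeromat$ since $\matr{K}$ is a right kernel basis of $\matr{A}$. Also observe that this computation shows $\vect{v}\in\pmatRing{1}{n}$ automatically by hypothesis.

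Next comes the reverse inclusion, which is the main point of the lemma and the step that needs care. Suppose $\vect{v}\in\pmatRing{1}{n}$ satisfies $\vect{v}\matr{K}=\zeromat$; I want to show $\vect{v}\in\rowsp_{\fracRing}(\matr{A})$, so that together with $\vect{v}$ polynomial this gives $\vect{v}\in\satur(\matr{A})$. The natural way is to argue inside the $\fracRing$-vector space $\fracRing^{1\times n}$. Here $\rowsp_{\fracRing}(\matr{A})$ has dimension $r$, while the $\fracRing$-left kernel of $\matr{K}$ has dimension $n-(n-r)=r$ because $\matr{K}\in\pmatRing{n}{(n-r)}$ has full column rank $n-r$ (as a polynomial right kernel basis of a rank-$r$ matrix). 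The first inclusion, applied over $\fracRing$, gives $\rowsp_{\fracRing}(\matr{A}) \subseteq \ker^{\fracRing}_{\text{left}}(\matr{K})$, and equality of dimensions then forces equality of the two subspaces. Since $\vect{v}$ (viewed over $\fracRing$) lies in the left kernel of $\matr{K}$, it thus lies in $\rowsp_{\fracRing}(\matr{A})$, completing the inclusion.

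Finally, for the ``in particular'' part, a saturation basis of $\matr{A}$ is by definition a matrix in $\pmatRing{r}{n}$ whose rows are an $\polRing$-basis of $\satur(\matr{A})$; having just shown $\satur(\matr{A})$ equals the polynomial left kernel of $\matr{K}$, and observing that this $\polRing$-module has rank $r$ (matching the row dimension in both definitions), the notions of ``saturation basis of $\matr{A}$'' and ``left kernel basis of $\matr{K}$'' coincide.

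The only real obstacle is making sure the dimension equality over $\fracRing$ is justified cleanly, in particular that $\matr{K}$ remains a right kernel basis over $\fracRing$; this follows because $\matr{K}$ has full column rank $n-r$ as a polynomial matrix, hence also as a matrix over $\fracRing$, and $\matr{A}\matr{K}=\zeromat$ forces its columns to span the $\fracRing$-right kernel, whose dimension is exactly $n-r$.
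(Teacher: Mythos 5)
Your proof is correct, and for the key (reverse) inclusion it takes a different route than the paper. You argue entirely inside the \(\fracRing\)-vector space \(\fracRing^{1\times n}\): since a right kernel basis \(\matr{K}\) has full column rank \(n-r\), the \(\fracRing\)-left kernel of \(\matr{K}\) has dimension \(r\), which matches \(\dim \rowsp_{\fracRing}(\matr{A})\), and the easy containment \(\rowsp_{\fracRing}(\matr{A})\subseteq\ker_{\fracRing}(\matr{K})\) then forces equality of these subspaces; hence any polynomial \(\vect{v}\) with \(\vect{v}\matr{K}=\zeromat\) lies in \(\pmatRing{1}{n}\cap\rowsp_{\fracRing}(\matr{A})=\satur(\matr{A})\). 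The paper instead works at the \(\polRing\)-module level: it observes it suffices to treat the rows of a polynomial left kernel basis \(\matr{B}\in\pmatRing{r}{n}\) of \(\matr{K}\) (since \(\satur(\matr{A})\) is closed under \(\polRing\)-combinations), picks \(r\) linearly independent rows \(\hat{\matr{A}}\) of \(\matr{A}\), writes \(\hat{\matr{A}}=\matr{U}\matr{B}\) with \(\matr{U}\in\pmatRing{r}{r}\) necessarily nonsingular, and concludes from \(\matr{B}=\matr{U}^{-1}\hat{\matr{A}}\) that each row of \(\matr{B}\) is an \(\fracRing\)-combination of rows of \(\matr{A}\). Your dimension count is more direct and avoids introducing a left kernel basis of \(\matr{K}\) and the nonsingular change-of-basis matrix; the paper's factorization argument explicitly exhibits any left kernel basis of \(\matr{K}\) as lying in the rational row space of \(\matr{A}\), in the same basis-and-nonsingular-factor style used in later results such as the kernel basis characterization. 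Both arguments rest on the same rank facts (that a right kernel basis has full column rank \(n-r\), which you justify), and your one-line handling of the ``in particular'' claim --- bases of equal modules coincide --- is exactly what is needed.
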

\begin{proof}
  Each row of \(\matr{A}\) is in the left kernel of \(\matr{K}\), hence so is
  any polynomial vector \(\vect{v} \in \pmatRing{1}{n}\) which is an
  \(\fracRing\)-linear combination of rows of \(\matr{A}\), that is, any
  \(\vect{v} \in \satur(\matr{A})\).

  For the other direction, it is enough to prove that each row of a given left kernel basis
  \(\matr{B}\in\pmatRing{r}{n}\) of \(\matr{K}\) is in \(\satur(\matr{A})\).
  Let \(\matr{\hat{A}}\in\pmatRing{r}{n}\) be a set of \(r\) linearly independent
  rows of \(\matr{A}\); since these rows are in the left kernel of
  \(\matr{K}\), we have \(\matr{\hat{A}}=\matr{U}\matr{B}\) for some nonsingular
  \(\matr{U}\in\pmatRing{r}{r}\). Thus each row of
  \(\matr{B}=\matr{U}^{-1}\matr{\hat{A}}\) is an \(\fracRing\)-linear
  combination of rows of \(\matr{A}\).
\end{proof}

Combining this with \citep[Lemma~3.3]{ZhoLab13}, it follows that for any saturation
basis \(\matr{B}\in\pmatRing{r}{n}\) of \(\matr{A}\) and any factorization
\(\matr{A} = \matr{C} \matr{B}\) with \(\matr{C}\in\pmatRing{m}{r}\), then
\(\matr{C}\) is a column basis of \(\matr{A}\). If \(\matr{A}\) has full row
rank we obtain that \(\matr{C}\) is nonsingular, and that \(\matr{A}\) is
saturated if and only if \(\matr{C}\) is unimodular, or equivalently
\(\colsp_{\polRing}(\matr{A}) = \pmatRing{m}{1}\). For the sake of
completeness, we now present a concise proof of this characterization
(\cref{lem:saturated_fullrowrank}); we will need the following standard result
which essentially says that any kernel basis is saturated (see for example
\citep[Lemma~2.2]{gn18} for a proof).

\begin{fact}
  \label{lem:kerbas_unimodular_colbas} 
  Let \(\matr{K} \in \pmatRing{n}{\ell}\). For any left kernel basis \(\matr{B}
  \in \pmatRing{r}{n}\) of \(\matr{K}\), we have \(\colsp_{\polRing}(\matr{B})
  = \pmatRing{r}{1}\).
\end{fact}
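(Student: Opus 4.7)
The plan is to prove the contrapositive: if $\colsp_{\polRing}(\matr{B}) \subsetneq \pmatRing{r}{1}$, then $\matr{B}$ cannot be a left kernel basis of $\matr{K}$. First, I would record the basic structural facts: since $\matr{B}$ is a left kernel basis of $\matr{K}$, it has full row rank $r$ (otherwise a nontrivial $\polRing$-relation among its rows would give a smaller spanning set for the same left kernel module). Using the Smith normal form over the principal ideal domain $\polRing$, the column space $\colsp_{\polRing}(\matr{B})$ fails to equal $\pmatRing{r}{1}$ exactly when some invariant factor of $\matr{B}$ is a non-unit, which is equivalent to the existence of an irreducible $p \in \polRing$ such that $\rank(\matr{B} \bmod p) < r$.

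The heart of the argument would then be the following contradiction. Pick such an irreducible $p$ and choose $\vect{v} \in \pmatRing{1}{r}$ with $\vect{v} \not\equiv \zerovect \pmod{p}$ but $\vect{v}\matr{B} \equiv \zerovect \pmod{p}$. Then $\vect{v}\matr{B} = p\vect{w}$ for some $\vect{w} \in \pmatRing{1}{n}$. Multiplying by $\matr{K}$ on the right gives $p\vect{w}\matr{K} = \vect{v}(\matr{B}\matr{K}) = \zeromat$, and since $\polRing$ is an integral domain this forces $\vect{w}\matr{K} = \zeromat$. Hence $\vect{w}$ lies in the left kernel of $\matr{K}$, so by the defining property of a left kernel basis there exists $\vect{u} \in \pmatRing{1}{r}$ with $\vect{w} = \vect{u}\matr{B}$. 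Substituting back yields $\vect{v}\matr{B} = p\vect{u}\matr{B}$, and using that $\matr{B}$ has full row rank (so right-cancellation holds over the fraction field, hence over $\polRing$), we conclude $\vect{v} = p\vect{u}$, contradicting $\vect{v} \not\equiv \zerovect \pmod{p}$.

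The main obstacle I would expect is justifying cleanly the equivalence in the first paragraph between $\colsp_{\polRing}(\matr{B}) = \pmatRing{r}{1}$ and the modular full-rank condition at every prime; this is a standard consequence of Smith normal form but is the one place the proof uses nontrivial structure theory of modules over $\polRing$. Once that is set up, the rest reduces to a short chase through the kernel-basis definition together with the integral-domain and full-row-rank properties, both of which are elementary.
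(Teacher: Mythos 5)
Your proof is correct. Note first that the paper does not actually prove this statement: it is recorded as a Fact and the proof is deferred to an external reference (\citep[Lemma~2.2]{gn18}), so there is no in-paper argument to compare against. Your route is a valid, self-contained one: the Smith-form characterization you invoke is exactly right (writing \(\matr{B}=\matr{U}\matr{S}\matr{V}\) with \(\matr{U},\matr{V}\) unimodular, \(\colsp_{\polRing}(\matr{B})=\pmatRing{r}{1}\) holds precisely when all invariant factors are units, and an irreducible \(p\) dividing the last invariant factor is precisely one for which \(\rank(\matr{B}\bmod p)<r\)); the lifted vector \(\vect{v}\) with \(\vect{v}\not\equiv\zerovect\) and \(\vect{v}\matr{B}=p\vect{w}\) exists because \(\matr{B}\bmod p\) has a nontrivial left kernel over the field \(\polRing/\langle p\rangle\); the step \(p\vect{w}\matr{K}=\zerovect\Rightarrow\vect{w}\matr{K}=\zerovect\) uses only that \(\polRing\) is a domain; the basis property gives \(\vect{w}=\vect{u}\matr{B}\); and \((\vect{v}-p\vect{u})\matr{B}=\zerovect\) forces \(\vect{v}=p\vect{u}\) since the rows of \(\matr{B}\) are \(\fracRing\)-linearly independent, yielding the contradiction. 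One small simplification: you do not need the somewhat roundabout justification of full row rank in your first paragraph --- the rows of \(\matr{B}\) are \(\polRing\)-linearly independent by the very definition of a basis, hence \(\fracRing\)-linearly independent after clearing denominators, which is all the cancellation step requires.
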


\begin{lemma}
  \label{lem:saturated_fullrowrank}
  Let \(\matr{A} \in \pmatRing{m}{n}\) have full row rank. Then, \(\matr{A}\)
  is saturated if and only if \(\colsp_{\polRing}(\matr{A}) =
  \pmatRing{m}{1}\).
\end{lemma}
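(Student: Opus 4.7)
The plan is to reduce this characterization to the two preceding results: \cref{lem:saturation_is_kernel}, relating the saturation to a left kernel, and \cref{lem:kerbas_unimodular_colbas}, stating that kernel bases have unit column span. Let \(\matr{K} \in \pmatRing{n}{(n-m)}\) be any right kernel basis of \(\matr{A}\); since \(\matr{A}\) has full row rank \(m\), this right kernel has rank \(n-m\) and the left kernel of \(\matr{K}\) has rank \(m\). By \cref{lem:saturation_is_kernel}, \(\satur(\matr{A})\) equals the left kernel of \(\matr{K}\), and the saturation bases of \(\matr{A}\) are precisely the left kernel bases of \(\matr{K}\), all of which are matrices with \(m\) rows. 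Since the rows of \(\matr{A}\) lie in \(\satur(\matr{A})\) and \(\matr{A}\) itself has full row rank \(m\), saying that \(\matr{A}\) is saturated is equivalent to saying that \(\matr{A}\) is a saturation basis of itself, which by the preceding is equivalent to \(\matr{A}\) being a left kernel basis of \(\matr{K}\).

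With this equivalence in hand, the forward direction is immediate: if \(\matr{A}\) is a left kernel basis of \(\matr{K}\), then \cref{lem:kerbas_unimodular_colbas} directly yields \(\colsp_{\polRing}(\matr{A}) = \pmatRing{m}{1}\).

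For the reverse direction, I would fix some left kernel basis \(\matr{B} \in \pmatRing{m}{n}\) of \(\matr{K}\). Each row of \(\matr{A}\) lies in the left kernel of \(\matr{K}\), so there exists a unique \(\matr{C} \in \pmatRing{m}{m}\) with \(\matr{A} = \matr{C}\matr{B}\); both \(\matr{A}\) and \(\matr{B}\) have full row rank \(m\), so \(\matr{C}\) is nonsingular. Any column of \(\matr{A}\) takes the form \(\matr{A}\vect{y} = \matr{C}(\matr{B}\vect{y})\), so \(\colsp_{\polRing}(\matr{A}) \subseteq \colsp_{\polRing}(\matr{C})\); the hypothesis \(\colsp_{\polRing}(\matr{A}) = \pmatRing{m}{1}\) therefore forces \(\colsp_{\polRing}(\matr{C}) = \pmatRing{m}{1}\). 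Writing each standard basis vector \(\vect{e}_i \in \pmatRing{m}{1}\) as \(\matr{C}\vect{w}_i\) and assembling the \(\vect{w}_i\) as columns of a matrix \(\matr{W} \in \pmatRing{m}{m}\) yields \(\matr{C}\matr{W} = \idMat{m}\). Hence \(\matr{C}\) is a square nonsingular matrix whose inverse has polynomial entries, which is exactly the characterization of unimodularity recalled in the preliminaries. Unimodularity of \(\matr{C}\) implies \(\matr{A} = \matr{C}\matr{B}\) is also a left kernel basis of \(\matr{K}\), and thus a saturation basis of \(\matr{A}\).

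The main obstacle is the reverse direction, where one must convert the polynomial column span condition into unimodularity of the factor \(\matr{C}\) in the factorization \(\matr{A} = \matr{C}\matr{B}\). The key observation is that the column span passes through \(\matr{C}\), i.e.\ \(\colsp_{\polRing}(\matr{A}) \subseteq \colsp_{\polRing}(\matr{C})\), which combined with the nonsingularity of \(\matr{C}\) produces a polynomial right inverse and hence unimodularity via the standard determinant-based characterization.
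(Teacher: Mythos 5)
Your proof is correct and follows essentially the same route as the paper: the forward direction combines \cref{lem:saturation_is_kernel} with \cref{lem:kerbas_unimodular_colbas} exactly as in the paper, and your reverse direction factors \(\matr{A} = \matr{C}\matr{B}\) through a saturation basis and uses the polynomial right inverse supplied by \(\colsp_{\polRing}(\matr{A}) = \pmatRing{m}{1}\) to conclude that \(\matr{C}\) is unimodular, which is the paper's argument with \(\matr{U}\) in place of \(\matr{C}\) (your detour through \(\colsp_{\polRing}(\matr{C}) = \pmatRing{m}{1}\) is just a repackaging of the paper's step \(\matr{U}(\matr{B}\matr{V}) = \idMat{m}\)).
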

\begin{proof}
  If \(\matr{A}\) is saturated, it is a basis of its own saturation since it has
  full row rank. Then writing \(\matr{K}\) for a right kernel basis of
  \matr{A}, by \cref{lem:saturation_is_kernel}, \matr{A} is a left kernel
  basis of \matr{K}. Then \cref{lem:kerbas_unimodular_colbas} gives
  \(\colsp_{\polRing}(\matr{A}) = \pmatRing{m}{1}\).

  Conversely, assume
  \(\colsp_{\polRing}(\matr{A}) = \pmatRing{m}{1}\). Since the row space of
  \(\matr{A}\) is a submodule of its saturation, we have \(\matr{A} = \matr{U}
  \matr{B}\) where \(\matr{B} \in \pmatRing{m}{n}\) is a saturation basis of
  \(\matr{A}\) and \(\matr{U} \in \pmatRing{m}{m}\) is nonsingular. By
  assumption, we have \(\matr{A} \matr{V} = \idMat{m}\) for some \(\matr{V} \in
  \pmatRing{n}{m}\), hence \(\matr{U}(\matr{B} \matr{V}) = \idMat{m}\).
  Because these are all polynomial matrices, this means that
  \(\matr{U}\) is unimodular, and \(\matr{A} = \matr{U} \matr{B}\) implies that
  \(\matr{A}\) is saturated.
\end{proof}

We are now ready to state \cref{pro:saturated} for the certification that a
matrix is saturated, assuming it has either full row rank or full column rank.
The latter restriction is satisfied in all the applications we have in mind,
including the two we present below (\cref{sub:kernel_completion}): unimodular completability and kernel basis
certification. We note that, if one accepts a communication cost similar to the size
of the public matrix \(\matr{A}\), then removing this assumption is easily done by
making use of a row basis of \(\matr{A}\).

\begin{protocol}
  \caption{\textsf{Saturated}}
  \label{pro:saturated}
  \Public{\(\matr{A}\in\pmatRing{m}{n}\)}
  \Certifies{\(\matr{A}\) is saturated and \(\matr{A}\) has full rank}
  \begin{protocolsteps}
  \step\label{step:is_saturated:bli}
     \subprotocol{
       \begin{minipage}{0.8\textwidth}
         \textbf{if} \(m\le n\): \hfill \texttt{// full row rank} \linebreak
         \phantom{If} \(\pmatRing{m}{1} \checksub \colsp_{\polRing}(\matr{A})\)
                      and \(\rank(\matr{A}) \checkeq m\) \hfill{} \linebreak
         \phantom{If} using \rowsubset{} with public matrices \(\idMat{m}\) and \(\transpose{\matr{A}}\) \hfill{} \linebreak
         \textbf{else}: \hfill \texttt{// full column rank} \linebreak
         \phantom{If} \(\pmatRing{1}{n} \checksub \rowsp_{\polRing}(\matr{A})\)
                      and \(\rank(\matr{A}) \checkeq n\) \hfill{} \linebreak
         \phantom{If} using \rowsubset{} with public matrices \(\idMat{n}\) and \(\matr{A}\) \hfill{}
       \end{minipage}
     }
  \end{protocolsteps}
\end{protocol}

\begin{theorem}
  \label{thm:saturated}
  Let \(d = \max(1,\deg(\matr{A}))\), \(\mu=\max(m,n)\),
  and \(\nu=\min(m,n)\). \Cref{pro:saturated} is a probabilistically sound
  interactive protocol and is complete assuming \(\#\fieldsubset \geq 2 \mu d\) in
  its subprotocol. It requires \(\oh{\mu d \log\mu}\) communication
  and has Verifier cost \(\oh{m n d \log\mu}\). Assuming that
  \(\matr{A}\) has full rank and is saturated, there is a Las
  Vegas randomized algorithm for the Prover with expected cost \(\tssoftoh{\mu
    \nu^{\omega-1} d}\), and otherwise the probability that the Verifier
    incorrectly accepts is at most \((4 \nu d + 2) / \#\fieldsubset\).
\end{theorem}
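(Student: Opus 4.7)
My plan is to reduce this statement entirely to \cref{thm:rowsubset} by invoking \cref{lem:saturated_fullrowrank} in the case \(m \le n\) and \cref{lem:saturated_fullcolrank} in the case \(m > n\) as the key algebraic characterizations. Once the reduction is set up correctly, every clause of the theorem (completeness, soundness, communication, Verifier cost, Prover cost, bound on \(\#\fieldsubset\)) follows by direct substitution into the already-proved bounds of \rowsubset.

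First, for \(m \le n\), the full-rank hypothesis means \(\matr{A}\) has full row rank, and \cref{lem:saturated_fullrowrank} says \(\matr{A}\) is saturated if and only if \(\colsp_{\polRing}(\matr{A}) = \pmatRing{m}{1}\). The reverse inclusion is automatic; the nontrivial inclusion \(\pmatRing{m}{1} \subseteq \colsp_{\polRing}(\matr{A})\) transposes to \(\rowsp_{\polRing}(\idMat{m}) \subseteq \rowsp_{\polRing}(\transpose{\matr{A}})\), which is exactly what the protocol passes to \rowsubset. The case \(m > n\) is symmetric: full column rank plus \cref{lem:saturated_fullcolrank} reduces to checking \(\rowsp_{\polRing}(\idMat{n}) \subseteq \rowsp_{\polRing}(\matr{A})\). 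Completeness and probabilistic soundness then follow from the corresponding properties of \rowsubset (\cref{thm:rowsubset}), with the hypothesis on \(\#\fieldsubset\) matching as I check below.

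Next comes the bookkeeping. In both branches the first public matrix for \rowsubset is \(\idMat{\nu}\) of size \(\nu \times \nu\) with degree bound \(1\), while the second is the input matrix (possibly transposed) of row dimension \(\mu\), column dimension \(\nu\), rank \(\nu\), and degree \(d\). Substituting these into \cref{thm:rowsubset} gives communication \(\oh{\nu + (\mu d + 1) \log \mu} = \oh{\mu d \log \mu}\); Verifier cost \(\oh{(\mu \nu d + \nu)\log \mu + \nu^2}\), which simplifies to \(\oh{m n d \log \mu}\) since \(\mu \nu = m n\) and \(\nu^2 \le m n\); and honest Prover cost \(\softoh{\mu \nu \cdot \nu^{\omega-2} d + \nu^{\omega-1} + \nu^2} = \softoh{\mu \nu^{\omega-1} d}\). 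The completeness threshold \(\#\fieldsubset \ge 2\max(\nu, \mu d)\) reduces to \(\#\fieldsubset \ge 2 \mu d\) as claimed, and the soundness bound \((4 r_{\matr{B}} d_{\matr{B}} + d_{\matr{A}} + 2)/\#\fieldsubset\) evaluates to roughly \((4\nu d + 2)/\#\fieldsubset\) (the identity matrix contributing a constant \(d_{\matr{A}}\) that is folded into the additive term).

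The only genuine obstacle is keeping the roles of rows and columns consistent across the two symmetric branches. Because \rowsubset is asymmetric in its two arguments, with the second matrix's row dimension \(\ell\) and degree appearing inside the dominant \(\log \ell\) factor, one must place \(\idMat{\nu}\) (the small, low-degree matrix) as the first argument and the larger input matrix as the second in both branches; otherwise the bounds would be \(\oh{\mu^2 d \log \nu}\) or worse. Once that placement is recorded, the proof is purely a substitution exercise into \cref{thm:rowsubset}.
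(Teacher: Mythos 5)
Your proposal is correct and follows the paper's own argument: the paper likewise proves this theorem by citing \cref{lem:saturated_fullrowrank,lem:saturated_fullcolrank} and \cref{thm:rowsubset}, noting that in both branches \rowsubset{} is invoked on $\idMat{\nu}$ and a $\mu\times\nu$ matrix of rank at most $\nu$ and degree at most $d$. Your explicit substitution of these parameters into \cref{thm:rowsubset} is exactly the bookkeeping the paper leaves implicit (up to the same harmless constant slack in the failure probability that the paper also glosses over).
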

\begin{proof}
  This directly follows from
  \cref{lem:saturated_fullrowrank,lem:saturated_fullcolrank} and
  \cref{thm:rowsubset}. Remark that in both cases \(m\le n\) and \(m > n\),
  the protocol \rowsubset{} is applied with public matrices \(\idMat{\nu}\)
  and a \(\mu\times\nu\) matrix of rank at most \(\nu\) and degree at
  most \(d\).
\end{proof}

Concerning the certification of a saturation basis of \(\matr{A}\), our
protocol will rely on the following characterization.

\begin{lemma}
  \label{lem:saturation_basis}
  Let \(\matr{A}\in\pmatRing{m}{n}\). Then, a matrix \(\matr{B}\in\pmatRing{\ell}{n}\)
  is a saturation basis of \(\matr{A}\) if and only if the following conditions
  are satisfied:
  \begin{enumerate}[(i)]
    \item \(\rowsp_{\fracRing}(\matr{A}) \subseteq \rowsp_{\fracRing}(\matr{B})\),
    \item \(\rank(\matr{B}) = \ell\) and \(\rank(\matr{A}) \ge \ell\),
    \item \(\matr{B}\) is saturated.
  \end{enumerate}
\end{lemma}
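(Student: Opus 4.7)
The plan is to unwind the definition of a saturation basis and exploit the fact that \(\satur(\matr{A})\) depends on \(\matr{A}\) only through its \(\fracRing\)-row space, as was emphasized in the paragraph preceding the lemma.

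For the forward direction, I assume \(\matr{B}\) is a saturation basis of \(\matr{A}\). By definition, \(\ell=\rank(\matr{A})\) and the rows of \(\matr{B}\) form a basis of \(\satur(\matr{A})\), which immediately gives \(\rank(\matr{B})=\ell=\rank(\matr{A})\), i.e., condition (ii); and \(\rowsp_{\polRing}(\matr{B})=\satur(\matr{A})\supseteq\rowsp_{\polRing}(\matr{A})\), i.e., condition (i). For (iii), I observe that since a saturation basis has the same \(\fracRing\)-row space as the original matrix, we have \(\rowsp_{\fracRing}(\matr{B})=\rowsp_{\fracRing}(\matr{A})\), whence \(\satur(\matr{B})=\pmatRing{1}{n}\cap\rowsp_{\fracRing}(\matr{B})=\satur(\matr{A})=\rowsp_{\polRing}(\matr{B})\), showing \(\matr{B}\) is saturated.

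For the converse, assume (i), (ii), (iii). From (i), \(\rowsp_{\fracRing}(\matr{A})\subseteq\rowsp_{\fracRing}(\matr{B})\), so \(\rank(\matr{A})\le\rank(\matr{B})=\ell\); combined with \(\rank(\matr{A})\ge\ell\) from (ii), both ranks equal \(\ell\). Two \(\fracRing\)-subspaces of the same finite dimension with one contained in the other must be equal, so \(\rowsp_{\fracRing}(\matr{A})=\rowsp_{\fracRing}(\matr{B})\). Taking intersections with \(\pmatRing{1}{n}\) on both sides yields \(\satur(\matr{A})=\satur(\matr{B})\), and then using (iii) we get \(\rowsp_{\polRing}(\matr{B})=\satur(\matr{B})=\satur(\matr{A})\). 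Since \(\matr{B}\) has \(\ell=\rank(\matr{A})\) linearly independent rows that span \(\satur(\matr{A})\), those rows form a basis, so \(\matr{B}\) is a saturation basis of \(\matr{A}\).

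There is no real obstacle here; the only subtle point is being careful that the notion of ``saturation basis'' as defined in the text (a matrix in \(\pmatRing{r}{n}\) with \(r=\rank(\matr{A})\) whose rows form a basis of \(\satur(\matr{A})\)) pins down the number of rows to \(r\), which is why condition (ii) is stated as \(\rank(\matr{B})=\ell\) and \(\rank(\matr{A})\ge\ell\) rather than simply requiring equal ranks; the argument above shows the two ranks must in fact coincide with \(\ell\).
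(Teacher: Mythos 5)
Your proof is correct and follows essentially the same route as the paper's: the forward direction unwinds the definition (with your argument for (iii) simply spelling out what the paper dismisses as immediate from the fact that saturation depends only on the \(\fracRing\)-row space), and the converse matches the paper's argument that (i) and (ii) force \(\rank(\matr{A})=\rank(\matr{B})=\ell\) and hence equality of \(\fracRing\)-row spaces and of saturations, after which (iii) identifies \(\rowsp_{\polRing}(\matr{B})\) with \(\satur(\matr{A})\). No gaps; nothing further to add.
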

\begin{proof}
  If \(\matr{B}\) is a saturation basis of \(\matr{A}\), then by definition
  \(\matr{B}\) is saturated; \(\matr{B}\) has full row rank with \(\ell =
  \rank(\matr{B}) = \rank(\matr{A})\); and \(\rowsp_{\fracRing}(\matr{B}) = \rowsp_{\fracRing}(\matr A)\).

  Conversely, assume that the three items hold. The first two items together
  imply \(\ell = \rank(\matr{B}) = \rank(\matr{A})\), and hence $\rowsp_{\fracRing}(\matr{A}) = \rowsp_{\fracRing}(\matr{B})$.
  This means \(\satur(\matr{A}) = \satur(\matr{B})\), and the latter saturation is equal to \(\rowsp_{\polRing}(\matr{B})\) since
  \(\matr{B}\) is saturated by the third item.
  Hence \(\matr{B}\) has full row rank and \(\polRing\)-row space equal to the saturation of
  \(\matr{A}\).
\end{proof}

\begin{protocol}
  \caption{\textsf{SaturationBasis}}
  \label{pro:satbasis}
  \Public{\(\matr{A}\in\pmatRing{m}{n}\), \(\matr{B}\in\pmatRing{\ell}{n}\)}
  \Certifies{\(\matr{B}\) is a saturation basis of \(\matr{A}\)}
  \begin{protocolsteps}
  \step\label{step:satbasis:ok_dims}
    \verifier{Check \(\ell \checkle \min(m,n)\)}
  \step\label{step:satbasis:right_rank}
    \subprotocol{Check \(\rank(\matr{A}) \checkge \ell\) using \ranklb}
  \step\label{step:satbasis:vecsubset_1}
    \verifier{%
      $\begin{array}{@{}l@{}}
          \alpha \randfrom \fieldsubset \\
          \vect{\lambda} \randfrom \fieldsubset^{1 \times n} \\
          \vect{v} \leftarrow \vect{\lambda}\matr{A}(\alpha) \in \F^{1 \times n}
      \end{array}$
    }
    \toprover{$\alpha, \vect{v}$}
  \step\label{step:satbasis:vecsubset_2}
    \prover{\begin{tabular}{@{}l@{}}
      Find \(\vect{\gamma}\in\F^{1 \times n}\) such that\\
      \(\vect{\gamma}\matr{B}(\alpha) = \vect{v}\).
    \end{tabular}}
    \toverifier{$\vect{\gamma}$}
  \step\label{step:satbasis:vecsubset_3}
    \verifier{
      $\vect{\gamma}\matr{B}(\alpha) \checkeq \vect{v}$
     }
  \step\label{step:satbasis:is_saturated}
    \subprotocol{Check that \(\matr{B}\) is saturated using \saturated}
  \end{protocolsteps}
\end{protocol}

\begin{theorem}
  \label{thm:satbasis}
  \Cref{pro:satbasis} is a probabilistically sound interactive protocol and is
  complete assuming \(\#\fieldsubset \geq \max(\ell d_{\matr{A}}+1,
  2 n d_{\matr{B}})\) in its subprotocols. It requires \(\oh{n
    d_{\matr{B}} \log(n)}\) communication and
    has Verifier cost
  \(
    \oh{m n d_{\matr{A}} + \ell n d_{\matr{B}} \log(n)}.
  \)
  If \(\matr{B}\) is a saturation basis of \(\matr{A}\), then there is a Las
  Vegas randomized algorithm for the Prover with expected cost
  \[
    \softoh{m n \ell^{\omega-2} + m n d_{\matr{A}}
      + n \ell^{\omega-1} d_{\matr{B}}};
  \]
  otherwise the probability that the Verifier incorrectly accepts is at most $(4 \ell d_{\matr{B}}+2)/\#\fieldsubset$.
\end{theorem}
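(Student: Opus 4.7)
The plan is to invoke \cref{lem:saturation_basis} directly: the three conditions characterizing a saturation basis are (modulo one subtlety discussed below) precisely what \cref{step:satbasis:right_rank,step:satbasis:rowsp_subset,step:satbasis:is_saturated} check, together with the dimension condition from \cref{step:satbasis:ok_dims}. For completeness, if \(\matr{B}\) is a saturation basis of \(\matr{A}\), then \(\ell = \rank(\matr{A}) \le \min(m,n)\) and the three items of \cref{lem:saturation_basis} hold; the completeness of \ranklb, \rowsubset, and \saturated (this last one in its full-row-rank branch, since \(\ell \le n\)) under the stated field-size assumption then ensures the Verifier accepts deterministically.

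For soundness I would split into three disjoint cases, according to which item of \cref{lem:saturation_basis} fails. The main obstacle to this analysis is that the protocol does not explicitly certify the equation \(\rank(\matr{B}) = \ell\) which appears in item (ii); the key observation to get around this is that once both \(\rank(\matr{A}) \ge \ell\) (the truthful content of step 2) and \(\rowsp_{\polRing}(\matr{A}) \subseteq \rowsp_{\polRing}(\matr{B})\) (the truthful content of step 3) hold, we automatically get \(\rank(\matr{B}) \ge \rank(\matr{A}) \ge \ell\), and since \(\matr{B}\) has only \(\ell\) rows this forces \(\rank(\matr{B}) = \ell\). With that remark in hand, the three cases are: (1) \(\rank(\matr{A}) < \ell\), caught by \ranklb and \cref{thm:ranklb} with failure probability at most \(1/\#\fieldsubset\); (2) \(\rank(\matr{A}) \ge \ell\) but item (i) fails, caught by \rowsubset and \cref{thm:rowsubset} with failure probability at most \((4 r_{\matr{B}} d_{\matr{B}} + d_{\matr{A}} + 2)/\#\fieldsubset \le (4 \ell d_{\matr{B}} + d_{\matr{A}} + 2)/\#\fieldsubset\) since \(r_{\matr{B}} \le \ell\); and (3) items (i) and (ii) both hold but \(\matr{B}\) is not saturated. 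In this last case the key observation guarantees that \(\matr{B}\) genuinely has full row rank \(\ell\), so the hypothesis of \cref{thm:saturated} is actually met and \saturated detects the false claim with failure probability at most \((4 \ell d_{\matr{B}} + 2)/\#\fieldsubset\). Taking the maximum over the three disjoint cases yields the claimed soundness bound.

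For the complexities, I would simply sum the corresponding bounds from \cref{thm:ranklb,thm:rowsubset,thm:saturated}. The call to \saturated runs on the \(\ell\times n\) matrix \(\matr{B}\) with \(\ell \le n\), so in the notation of \cref{thm:saturated} we have \(\mu = n\) and \(\nu = \ell\); its Verifier cost \(\softoh{\ell n d_{\matr{B}}}\) and communication \(\softoh{n d_{\matr{B}}}\) combine with the dominant \(mn d_{\matr{A}}\) Verifier term and the \(d_{\matr{A}} \log(\ell)\) communication term from \rowsubset to give the stated totals. For the honest Prover, saturation-basis-ness gives \(r_{\matr{A}} = r_{\matr{B}} = \ell\); collecting the three sub-protocol Prover costs and using \(\omega \ge 2\) (so that the bare \(mn d_{\matr{A}}\) term from \rowsubset is absorbed into \(mn \ell^{\omega-2} d_{\matr{A}}\)) yields \(\softoh{mn \ell^{\omega-2} d_{\matr{A}} + n \ell^{\omega-1} d_{\matr{B}}}\).
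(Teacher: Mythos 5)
Your proposal is correct and follows essentially the same route as the paper: it invokes \cref{lem:saturation_basis}, uses the same key observation that \(\rank(\matr{A})\ge\ell\) together with \(\rowsp_{\polRing}(\matr{A})\subseteq\rowsp_{\polRing}(\matr{B})\) forces \(\rank(\matr{B})=\ell\) so the precondition of \saturated{} is met, and obtains the soundness bound as the maximum over the sub-protocol failure probabilities (the paper states the worst case directly; you spell out the three disjoint cases, which is just a more explicit version of the same argument). The complexity accounting likewise matches the paper's summation of \cref{thm:ranklb,thm:rowsubset,thm:saturated}.
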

\begin{proof}
  The check on \cref{step:satbasis:ok_dims} has no arithmetic cost, but
  ensures that \(\ell\) is less than or equal to $m$ and $n$.
  \Cref{step:satbasis:vecsubset_1,step:satbasis:vecsubset_2,step:satbasis:vecsubset_3} certify that $\rowsp_{\fracRing}(\matr{A}) \subseteq \rowsp_{\fracRing}(\matr{B})$.
  Note that this only communicates $O(n)$ field elements, and that the Verifier's cost at these steps amounts to the evaluation of $\matr{A}$ and $\matr{B}$ at $\alpha$ as well as two scalar vector-matrix products, hence a total of \(\tsoh{m n d_{\matr{A}} + \ell n d_{\matr{B}}}\) operations.
  Then the Verifier and Prover's costs follow from \cref{thm:ranklb,thm:saturated}.

  Note that \(\rank(\matr{A})\ge\ell\) and
  \(\rowsp_{\polRing}(\matr{A})\subseteq\rowsp_{\polRing}(\matr{B})\)
  imply that \(\rank(\matr{B})=\ell\), so that the precondition for
  the \saturated{} protocol on
  \cref{step:satbasis:is_saturated} is valid unless one of the previous
  checks failed.

  For the probability bound, we consider each of the checks in \cref{step:satbasis:right_rank,step:satbasis:vecsubset_3,step:satbasis:is_saturated}.
  By \cref{thm:ranklb}, the Verifier incorrectly accepts an $\matr A$ with rank $< \ell$ with probability at most $1/{\#\fieldsubset}$.
  By \cref{thm:saturated}, the Verifier incorrectly accepts an unsaturated $\matr B$ with probability at most $(4\ell d_{\matr B}+2)/\#\fieldsubset$.

  For \cref{step:satbasis:vecsubset_3}, assume now that $\rank(\matr{A}) \geq \ell$ and $\matr B$ is saturated, and in particular $\rank(\matr B) = \ell$, but that $\rowsp_{\fracRing}(\matr{A}) \not\subset \rowsp_{\fracRing}(\matr B)$.
  By \cref{lem:rowsp_subset_proba}, then for a random $\vect\lambda \in \fieldsubset^{1 \times m}$ we have $\vect\lambda \matr{A} \in \rowsp_{\fracRing}(\matr B)$ with probability at most $\frac 1 {\#\fieldsubset}$.
  Consider now that $\vect\lambda\matr{A} \notin \rowsp_{\fracRing}(\matr B)$.
  Let $\matr P$ be an $n \times n$ permutation matrix such that $\matr B\matr P = [ \matr B_0 \mid \matr B_1 ]$ with $\matr B_0 \in \polRing^{\ell \times \ell}$ and full rank.
  Let $\vect u \in \fracRing^{1 \times \ell}$ be the unique vector such that $\vect\lambda \matr{A} P = [ \vect u \matr B_0 \mid \vect{\hat a} ]$, for some $\vect{\hat a} \in \fracRing^{1 \times (n-\ell)}$.
  Then there is some index $i$ of $\vect{\hat a}$ that differs from index $i$ of $\vect u \matr B_1$.
  By Cramer's rule the entries of $\vect u$ can be written with common denominator $\det(B)$ and numerators of degree at most $(\ell-1)d_{\matr B}$.
  Hence the entries of $\vect u \matr B_1$ have denominators and numerators of degree at most $\ell d_{\matr B}$ each.
  Hence $\vect\lambda \matr A(\alpha)$ and $\vect u(\alpha) \matr B(\alpha)$ agree at index $i$ for at most $2\ell d_{\matr B}$ choices of $\alpha \in \field$, and we conclude that the Verifier incorrectly accepts in this case with probability at most $2\ell d_{\matr B}/\#\fieldsubset$.

  Summing up, the worst case is the check at \cref{step:satbasis:is_saturated}, where the Verifier incorrectly accepts with probability at most $(4\ell d_{\matr B} + 2)/\#\fieldsubset$.
\end{proof}

\subsection{Kernel bases and unimodular completability}
\label{sub:kernel_completion}

Here, we derive two protocols which follow from the ones concerning the
saturation. The second protocol is for the certification of kernel bases, while
the first protocol is about matrices that can be completed into unimodular
matrices.

The fast computation of such completions was studied by \citet{ZhoLab14}. We
say that \(\matr{A} \in \pmatRing{m}{n}\) is \emph{unimodular completable} if
\(m<n\) and there exists a matrix \(\matr{B}\in \pmatRing{(n-m)}{n}\) such that
\(\begin{bmatrix} \matr{A} \\ \matr{B} \end{bmatrix}\) is unimodular. Note that
if \(\matr{A}\) does not have full row rank, then it is not unimodular
completable. Otherwise, \citet[Lemma 2.10]{ZhoLab14} showed that \(\matr{A}\)
is unimodular completable if and only if \(\matr{A}\) has unimodular column
bases; by \cref{lem:saturated_fullrowrank}, this holds if and only if
\(\matr{A}\) is saturated. This readily leads us to \cref{pro:completable}.

\begin{protocol}
  \caption{\textsf{UnimodularCompletable}}
  \label{pro:completable}
  \Public{\(\matr{A}\in\pmatRing{m}{n}\)}
  \Certifies{\(\matr{A}\) is unimodular completable}
  \begin{protocolsteps}
  \step\label{step:is_completable:right_dims}
    \verifier{\(m \checklt n\)}
  \vspace{0.2cm}\step\label{step:is_completable:saturated}
    \subprotocol{\begin{tabular}{@{}c@{}}
      Check that \(\matr{A}\) is saturated
        and \(\rank(\matr{A})\checkeq m\)\\
      using \saturated{}
    \end{tabular}}
  \end{protocolsteps}
\end{protocol}

\begin{theorem}
  \label{thm:completable}
  \Cref{pro:completable} is a probabilistically sound interactive protocol and
  is complete assuming \(\#\fieldsubset \geq 2 m d\) in its subprotocols.
  It
  requires \(\oh{n d \log(n)}\) communication and has Verifier cost \(\oh{m n d
  \log(n)}\). If \(\matr{A}\) is unimodular completable, then there is a Las
  Vegas randomized algorithm for the Prover with expected cost \(\tssoftoh{n
  m^{\omega-1} d}\); otherwise the probability that the Verifier
  incorrectly accepts is at most \((4 m d + 2) / \#\fieldsubset\).
\end{theorem}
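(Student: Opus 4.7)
The plan is to reduce correctness of \Cref{pro:completable} to the characterization that \(\matr{A}\in\pmatRing{m}{n}\) is unimodular completable if and only if \(m<n\), \(\matr{A}\) has full row rank, and \(\matr{A}\) is saturated. This three-part equivalence is obtained by combining \citet[Lemma~2.10]{ZhoLab14}, which says that a full row rank matrix is unimodular completable if and only if its column bases are unimodular (equivalently, \(\colsp_{\polRing}(\matr{A})=\pmatRing{m}{1}\)), with \cref{lem:saturated_fullrowrank}, which tells us that the latter condition is the same as \(\matr{A}\) being saturated. The three checks performed in \cref{pro:completable} correspond exactly to these three conditions.

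For completeness, I would assume \(\matr{A}\) is unimodular completable, so \(m<n\), \(\rank(\matr{A})=m\), and \(\matr{A}\) is saturated. Then \cref{step:is_completable:right_dims} passes deterministically; \cref{step:is_completable:full_row_rank} passes by completeness of \ranklb{} from \cref{thm:ranklb} as soon as \(\#\fieldsubset\) exceeds its required bound; and \cref{step:is_completable:saturated} passes by completeness of \saturated{} from \cref{thm:saturated}, whose sub-protocol requirements on \(\#\fieldsubset\) are the binding constraint stated in the theorem.

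For soundness I would split on which of the three properties fails. If \(m\ge n\), the Verifier rejects outright. If \(m<n\) but \(\rank(\matr{A})<m\), then by \cref{thm:ranklb} the Verifier accepts \cref{step:is_completable:full_row_rank} with probability at most \(1/\#\fieldsubset\). If \(m<n\), \(\rank(\matr{A})=m\), but \(\matr{A}\) is not saturated, then applying \cref{thm:saturated} with \(\mu=n\) and \(\nu=m\) gives probability of incorrect acceptance at most \((4md+2)/\#\fieldsubset\). The worst of these disjoint cases dominates, yielding the stated bound.

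For the costs I would simply add those of the two sub-protocols: \ranklb{} with \(\rho=m\) contributes \(\oh{m}\) communication, \(\oh{m^2 d}\) Verifier cost, and expected Prover cost \(\softoh{m n m^{\omega-2} d}=\softoh{n m^{\omega-1} d}\); while \saturated{} with \(m\le n\) contributes \(\oh{nd\log n}\) communication, \(\oh{mnd\log n}\) Verifier cost, and expected Prover cost \(\softoh{n m^{\omega-1} d}\). Summing gives exactly the announced bounds. There is no genuinely hard step here; the main task is bookkeeping, ensuring that the \ZhoLab{}/\cref{lem:saturated_fullrowrank} characterization is applied in the right direction in each case and that parameters are correctly substituted into \cref{thm:ranklb,thm:saturated}.
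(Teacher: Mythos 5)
Your proposal is correct and follows essentially the same route as the paper: reduce to the characterization that unimodular completability (for \(m<n\)) is equivalent to full row rank plus being saturated via \citet[Lemma~2.10]{ZhoLab14} and \cref{lem:saturated_fullrowrank}, take completeness and soundness from \cref{thm:ranklb,thm:saturated} with the worst case being a full-rank but non-saturated \(\matr{A}\), and sum the sub-protocol costs using \(m<n\) so that \(m\) bounds the rank. The only difference is that your write-up spells out the case analysis that the paper's proof leaves implicit.
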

\begin{proof}
  The costs follow from \cref{thm:ranklb,thm:saturated}, noting that
  the protocol aborts early if \(m\ge n\), and therefore \(m\) is an
  upper bound on the rank in the \saturated{} subprotocol.
  The probability of the
  Verifier incorrectly accepting here is the same as in \saturated{}
  from \cref{thm:saturated}.
\end{proof}

Finally, \cref{pro:kernel} for the certification of kernel bases will follow
from the characterization in the next lemma.

\begin{lemma}
  \label{lem:kernel_basis}
  Let \(\matr{A}\in\pmatRing{m}{n}\) and let \(\matr{B}\in\pmatRing{\ell}{m}\). Then,
  \(\matr{B}\) is a left kernel basis of \(\matr{A}\) if and only if
  \begin{enumerate}[(i)]
    \item \(\rank(\matr{B}) = \ell\) and \(\rank(\matr{A}) \ge m-\ell\),
    \item \(\matr{B}\matr{A} = \zeromat\),
    \item \(\matr{B}\) is saturated.
  \end{enumerate}
\end{lemma}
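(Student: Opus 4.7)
The plan is to prove each direction separately, using standard facts about kernel bases, ranks, and the characterization of saturation proven earlier in \cref{lem:saturation_is_kernel}.

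For the forward direction, suppose $\matr{B}$ is a left kernel basis of $\matr{A}$. Then $\matr{B}\matr{A}=\zeromat$ follows directly from the definition, giving (ii). The left kernel of $\matr{A}$ is a free $\polRing$-module of rank $m-\rank(\matr{A})$, and since $\matr{B}$ is a basis of this module, its $\ell$ rows are $\polRing$-linearly independent, so $\rank(\matr{B})=\ell=m-\rank(\matr{A})$, giving (i). To prove (iii), I will argue that the $\polRing$-row space of $\matr{B}$ coincides with its saturation: if $\vect{v}\in\pmatRing{1}{m}\cap\rowsp_{\fracRing}(\matr{B})$, then $\vect{v}$ is an $\fracRing$-linear combination of the rows of $\matr{B}$, so $\vect{v}\matr{A}=\zeromat$, placing $\vect{v}$ in the left kernel of $\matr{A}$, hence in $\rowsp_{\polRing}(\matr{B})$ by the basis property.

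For the reverse direction, assume (i), (ii), and (iii). From (ii) we obtain $\rowsp_{\polRing}(\matr{B})\subseteq \ker_L(\matr{A})$, where $\ker_L(\matr{A})$ denotes the left kernel of $\matr{A}$. The rank of $\ker_L(\matr{A})$ equals $m-\rank(\matr{A})$, which by (i) is at most $\ell$. Since $\matr{B}$ has rank $\ell$ by (i) and its row space sits inside $\ker_L(\matr{A})$, the rank of $\ker_L(\matr{A})$ is exactly $\ell$ and the two modules have the same $\fracRing$-span. Now take any $\vect{v}\in\ker_L(\matr{A})$. Since $\vect{v}\in\pmatRing{1}{m}$ and lies in the common $\fracRing$-row span, $\vect{v}$ belongs to $\satur(\matr{B})$. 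By (iii), $\satur(\matr{B})=\rowsp_{\polRing}(\matr{B})$, so $\vect{v}\in\rowsp_{\polRing}(\matr{B})$. This establishes $\ker_L(\matr{A})=\rowsp_{\polRing}(\matr{B})$, and combined with the $\polRing$-linear independence of the $\ell$ rows of $\matr{B}$ (from (i)), $\matr{B}$ is a basis of the left kernel of $\matr{A}$.

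The only subtle step is the equality of the $\fracRing$-row spans in the reverse direction, which is why the rank condition in (i) is stated as an \emph{inequality} on $\rank(\matr{A})$: the combination with $\rank(\matr{B})=\ell$ and $\matr{B}\matr{A}=\zeromat$ forces $\rank(\matr{A})=m-\ell$ and the equality of rational spans. Once that is in place, using saturation to promote the containment to $\polRing$-row space equality is immediate.
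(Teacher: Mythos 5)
Your proof is correct, but it follows a genuinely different route from the paper's. For the forward direction, the paper simply invokes the previously stated \cref{lem:kerbas_unimodular_colbas} (kernel bases have column space equal to the full module) together with \cref{lem:saturated_fullrowrank} to get that \(\matr{B}\) is saturated, whereas you prove saturation directly: any polynomial vector in \(\rowsp_{\fracRing}(\matr{B})\) is annihilated by \(\matr{A}\) on the right, hence lies in the left kernel, hence in \(\rowsp_{\polRing}(\matr{B})\) by the basis property — in effect a self-contained re-derivation of the special case of that Fact which is needed here. For the converse, the paper introduces an auxiliary left kernel basis \(\matr{K}\), writes \(\matr{B}=\matr{U}\matr{K}\) with \(\matr{U}\) nonsingular, and uses the column-space characterization of saturation (\cref{lem:saturated_fullrowrank}) to produce \(\matr{B}\matr{V}=\idMat{\ell}\) and conclude that \(\matr{U}\) is unimodular; you instead avoid any unimodularity argument, deducing from the rank conditions that \(\rowsp_{\polRing}(\matr{B})\) and the left kernel have the same \(\fracRing\)-span, and then using the definition of ``saturated'' to promote this to the \(\polRing\)-module equality \(\ker_L(\matr{A})=\rowsp_{\polRing}(\matr{B})\). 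Your argument is more elementary and self-contained (it only needs the definition of saturation, the rank of the kernel module, and dimension counting over \(\fracRing\)), while the paper's is shorter given the machinery it has already set up and keeps the reasoning aligned with the column-space criterion that the \saturated{} protocol is built around. One small presentational caveat: you cite \cref{lem:saturation_is_kernel} in your opening plan but never actually use it — the direct arguments you give make it unnecessary.
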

\begin{proof}
  If \(\matr{B}\) is a left kernel basis of \(\matr{A}\), then we have
  \(\rank(\matr{B}) = \ell = m-\rank(\matr{A})\) as well as \(\matr{B}\matr{A} =
  \zeromat\); the third item follows from
  \cref{lem:kerbas_unimodular_colbas,lem:saturated_fullrowrank}.

  Now assume that the three items hold. Consider some left kernel basis
  \(\matr{K}\) of \(\matr{A}\). Then, \(\rank(\matr{K}) = m-\rank(\matr{A}) \le
  \ell\) by the first item, while the second item implies that the row space of
  \(\matr{B}\) is contained in the row space of \(\matr{K}\), hence \(\ell =
  \rank(\matr{B}) \le \rank(\matr{K})\); therefore \(\rank(\matr{K})=\ell\). As a
  result, \(\matr{B} = \matr{U} \matr{K}\) for some nonsingular \(\matr{U} \in
  \pmatRing{\ell}{\ell}\). Item $(iii)$
  implies \(\colsp_{\polRing}(\matr{B}) = \pmatRing{\ell}{1}\) according to \cref{lem:saturated_fullrowrank}, hence \(\idMat{\ell} =
  \matr{B} \matr{V} = \matr{U}\matr{K}\matr{V}\) for some \(\matr{V} \in
  \pmatRing{m}{\ell}\). Then, \(\matr{U}\) must be unimodular, and thus \(\matr{B}
  = \matr{U} \matr{K}\) is a left kernel basis of \(\matr{A}\).
\end{proof}

\begin{protocol}
  \caption{\textsf{KernelBasis}}
  \label{pro:kernel}
  \Public{\(\matr{A}\in\pmatRing{m}{n}\), \(\matr{B}\in\pmatRing{\ell}{m}\)}
  \Certifies{\(\matr{B}\) is a left kernel basis of \(\matr{A}\)}
  \begin{protocolsteps}
  \step\label{step:kernel:right_dims}
    \verifier{\(\ell \checkle m\)}
  \vspace{0.2cm}\step\label{step:kernel:rankA}
    \subprotocol{\(\rank(\matr{A}) \checkge m-\ell\) using \ranklb}
  \vspace{0.2cm}\step\label{step:kernel:in_kernel}
    \subprotocol{\(\matr{B}\matr{A} \checkeq \zeromat\) using \matmul}
  \vspace{0.2cm}\step\label{step:kernel:basis_of_kernel}
    \subprotocol{\begin{tabular}{@{}c@{}}
      Check that \(\matr{B}\) is saturated
        and \(\rank(\matr{B})\checkeq \ell\)\\
      using \saturated{}
    \end{tabular}}
  \end{protocolsteps}
\end{protocol}

\begin{theorem}
  \label{thm:kernel}
  \Cref{pro:kernel} is a probabilistically sound interactive protocol and is
  complete assuming \(\#\fieldsubset \geq
  \max((m-\ell)d_{\matr{A}}+1,2md_{\matr{B}})\) in its subprotocols. It
  requires \(\oh{m d_{\matr{B}} \log(m)}\) communication and has Verifier cost
  \[
    \oh{\ell m d_{\matr{B}} \log(m) + m n d_{\matr{A}}}.
  \]
  If \(\matr{B}\) is a left kernel basis of \(\matr{A}\), then there is a Las
  Vegas randomized algorithm for the Prover with expected cost
  \[
    \softoh{m \ell^{\omega-1} d_{\matr{B}} + m n (m-\ell)^{\omega-2}
    d_{\matr{A}}};
  \]
  otherwise the probability that the Verifier incorrectly accepts is at most
  \[
    \frac{\max(d_{\matr{A}} + d_{\matr{B}} + 1, 4\ell d_{\matr{B}} + 2)}{\#\fieldsubset}.
  \]
\end{theorem}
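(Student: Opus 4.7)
The plan is to reduce the protocol's correctness to the characterization given in \cref{lem:kernel_basis}: the matrix \(\matr{B}\) is a left kernel basis of \(\matr{A}\) if and only if (i) \(\rank(\matr{B})=\ell\) and \(\rank(\matr{A})\ge m-\ell\), (ii) \(\matr{B}\matr{A}=\zeromat\), and (iii) \(\matr{B}\) is saturated. The deterministic check \(\ell\le m\) on \cref{step:kernel:right_dims} ensures both that \(\rank(\matr{B})\ge \ell\) coincides with \(\rank(\matr{B})=\ell\), since \(\matr{B}\) has only \(\ell\) rows, and that \(\matr{B}\) fits the full-row-rank branch of \saturated{}. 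The remaining four sub-protocols verify exactly the three conditions of the lemma, so both completeness and soundness reduce to the analogous properties of the sub-protocols.

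For completeness, an honest Prover succeeds in each sub-protocol provided \(\#\fieldsubset\) is large enough: the two \ranklb{} calls require \(\#\fieldsubset\ge \ell d_{\matr{B}}+1\) and \(\#\fieldsubset\ge (m-\ell)d_{\matr{A}}+1\) respectively by \cref{thm:ranklb}, \matmul{} is always complete by \cref{thm:polyfreivalds}, and \saturated{} on the \(\ell\times m\) full-row-rank matrix \(\matr{B}\) requires \(\#\fieldsubset\ge 2md_{\matr{B}}\) by \cref{thm:saturated}. Taking the maximum recovers the condition \(\#\fieldsubset \geq \max((m-\ell)d_{\matr{A}}+1, 2md_{\matr{B}})\) stated in the theorem. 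For soundness, if \(\matr{B}\) is not a left kernel basis then at least one of (i)--(iii) is false, and the Verifier incorrectly accepts only when the sub-protocol testing that particular false sub-statement erroneously accepts; the overall error is thus bounded by the maximum of the individual sub-protocol errors, namely \(1/\#\fieldsubset\) for each \ranklb{}, \((d_{\matr{A}}+d_{\matr{B}}+1)/\#\fieldsubset\) for \matmul{}, and \((4\ell d_{\matr{B}}+2)/\#\fieldsubset\) for \saturated{} with \(\nu=\ell\).

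For the cost analysis, I would sum the communication and Verifier costs of the four sub-protocols and keep dominant terms: the \(\oh{m d_{\matr{B}}\log m}\) communication and \(\oh{\ell m d_{\matr{B}}\log m}\) Verifier cost of \saturated{} subsume both \ranklb{} invocations, while \matmul{} adds an \(\oh{m n d_{\matr{A}}}\) term to the Verifier from evaluating \(\matr{A}\). The Prover's expected cost is the maximum of the sub-protocol costs, namely \(\softoh{m\ell^{\omega-1}d_{\matr{B}}}\) from \saturated{} (which also absorbs \ranklb{} on \(\matr{B}\)) and \(\softoh{m n (m-\ell)^{\omega-2}d_{\matr{A}}}\) from \ranklb{} on \(\matr{A}\); \matmul{} contributes nothing to the Prover since \(\matr{C}=\zeromat\) is public. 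The main subtlety I anticipate is matching the \(\nu=\ell\) specialization of \saturated{} so that its soundness and Prover bounds take their full-row-rank form, which in turn hinges on the early deterministic check \(\ell\le m\).
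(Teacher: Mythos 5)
Your proposal is correct and follows essentially the same route as the paper's proof: invoke the characterization in \cref{lem:kernel_basis}, reduce completeness, soundness, and costs to \cref{thm:ranklb,thm:polyfreivalds,thm:saturated}, and observe that the worst case for soundness is a single false sub-statement, so the error is the maximum of the sub-protocol error bounds (with the \((d_{\matr{A}}+d_{\matr{B}}+1)/\#\fieldsubset\) term coming from \matmul{} and the \((4\ell d_{\matr{B}}+2)/\#\fieldsubset\) term from \saturated{} with \(\nu=\ell\)). Your accounting of the field-size condition, the \(\ell\le m\) check, and the cost aggregation matches the paper's (terser) argument.
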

\begin{proof}
  The costs follow from
  \cref{lem:kernel_basis,thm:ranklb,thm:polyfreivalds,thm:saturated}.
  As before, the worst case for the Verifier is that only one of the
  three checked statements is wrong, and the resulting maximum of
  probabilities comes either from \cref{step:kernel:rankA} or
  \cref{step:kernel:basis_of_kernel}.
\end{proof}

\section{Conclusion and perspectives}
\label{sec:perspectives}

We have developed interactive protocols verifying a variety of problems
concerning polynomial matrices. For rank, determinant, system solving, and
matrix multiplication (\cref{sec:vspace}), these amount to evaluating
at some random point(s) and reducing to field-based verifications. For
row bases, saturation, normal forms, and kernel basis
computations (\cref{sec:rowspace_normalforms,sec:saturation}), the
verifications essentially reduce to testing row space membership of a
single vector (\cref{sec:rowmem}) and testing that ranks are the expected ones.

Our protocols are efficient. The volume of data exchanged in communications is
roughly the size of a single row of the matrix. The time complexity for the
Verifier is linear (or nearly-linear) in the size of the object being checked,
and the time for the Prover is roughly the same as it would take to perform the
computation being verified.

Still, there is some room for improvement in these costs. It would be nice to
remove the logarithmic factors in the complexities of most later protocols for
the Verifier time and communication cost; these come from the number of
repetitions $t$ required in the \rowmem{} protocol.

Our protocols also require to work over sufficiently large fields, to ensure
soundness of the randomized verification. For smaller fields, a classic
workaround is to resort to a field extension, increasing the arithmetic and
communication cost by a logarithmic factor. An alternative is to increase the
dimension in the challenges and responses, e.g.~verifying a block of vectors
instead of a single vector of field elements. A further study on whether this
approach is applicable and competitive here is required.

Another possibility for improvement in our complexities would be to have the
same costs where $d$ is the \emph{average} matrix-vector degree, rather than
the maximum degree. Such complexity refinements have appeared for related
computational algorithms, frequently by ``partial linearization'' of the rows
or columns with highest degree \citep[Section 6]{GuSaStVa12}, and it would be
interesting to see if similar techniques could work here. This would be
especially helpful in more efficiently verifying an unbalanced shifted Popov
form, and the Hermite form in particular, of a nonsingular matrix.

The protocols presented here do not assume that the Prover has computed the result to be
verified. This is however likely to be the case in many instances of verified computing, and it
would then  be relevant to identify which
intermediate results in a Prover's computation of the solution (such as the rank profile matrix, the
weak Popov form, etc), could be reused in a certificate
for verifying this solution.
Though more constraining on the Prover's choice of an algorithm, such information would help reducing the
leading constant in the arithmetic cost of its computation.

While we have presented protocols for a variety of basic problems on
polynomial matrices, there are still more for which we do not know yet
whether any efficient verification exists. These include:
\begin{itemize}
  \setlength\itemsep{0pt}
  \item matrix division with remainder (see \citep[Section IV.\S2]{Gantmacher59} and
  \citep[Theorem 6.3-15]{Kailath80});
  \item high-order terms in expansion of the inverse
    (see the high-order lifting algorithm of \citet{Storjohann03});
  \item univariate relations, generalizing Hermite-Pad\'e approximation
    \citep{BecLab00,NeiVu17};
  \item Smith form (see \citep{Storjohann03} for the fastest known
    algorithm).
\end{itemize}
We also do not know in all cases how to prove the \emph{negation} of our
statements --- for example, that a vector is \emph{not} in the row space
of a polynomial matrix. It seems that some similar techniques to those
we have used may work, but we have not investigated the question deeply.

Perhaps the most interesting
direction for future work would be to adapt our protocols
to the case of Euclidean lattices, i.e., integer matrices and vectors.
It seems that most of our protocols in \cref{sec:vspace} should
translate when we replace evaluation at a point \(\alpha\) with
reduction modulo a sufficiently-large prime $p$, but the analysis in terms of bit
complexity rather than field operations will likely be more delicate.
Another seeming hurdle is in our central protocols in \cref{sec:rowmem}
on deciding row membership: while the general ideas of these protocols
\emph{might} translate to integer lattices, the proof techniques we have
used are particular for polynomials.

\section*{Acknowledgements}
\addcontentsline{toc}{section}{Acknowledgements}

This work was performed while the fourth author was generously hosted by
the Laboratoire Jean Kuntzmann in Grenoble.

This work was partially supported by the U.S.\ National Science
Foundation under award \#1618269,
by the
\href{http://opendreamkit.org}{OpenDreamKit}
\href{https://ec.europa.eu/programmes/horizon2020/}{Horizon 2020}
\href{https://ec.europa.eu/programmes/horizon2020/en/h2020-section/european-research-infrastructures-including-e-infrastructures}{European
  Research Infrastructures} project under award
\#\href{http://cordis.europa.eu/project/rcn/198334_en.html}{676541},
by the \href{http://www.institutfrancais.dk}{IFD-Science 2017} research program of the
Institut Français du Danemark,
and by the CNRS-INS2I Institute through its program for young researchers.

\bibliographystyle{elsarticle-harv} 

\newcommand{\Gathen}{\relax}\newcommand{\Hoeven}{\relax}

\end{document}